\newcommand\footnoteref[1]{\protected@xdef\@thefnmark{\ref{#1}}\@footnotemark}
\newtheorem{theorem}{Theorem}[section]
\newtheorem{lemma}[theorem]{Lemma}
\theoremstyle{remark}
\newtheorem{remark}{Remark}
\begin{document}

\begin{frontmatter}
\title{Solar Radiation Ramping Events Modeling Using Spatio-temporal Point Processes}
\runtitle{Spatio-temporal Processes for Solar Radiation Ramping Events}

\begin{aug}
\author[A]{\fnms{Minghe} \snm{Zhang}\ead[label=e1,mark]{mzhang388@gatech.edu}}\footnote{\label{note1}Equal contribution},
\author[A]{\fnms{Chen} \snm{Xu}\ead[label=e2,mark]{cxu310@gatech.edu}}\footnoteref{note1},
\author[B]{\fnms{Andy} \snm{Sun}\ead[label=e3,mark]{sunx@mit.edu}},
\author[C]{\fnms{Feng} \snm{Qiu}\ead[label=e4,mark]{fqiu@anl.gov}}
\and
\author[A]{\fnms{Yao} \snm{Xie}\ead[label=e5,mark]{yao.xie@gatech.edu}}
\address[A]{H. Milton Stewart School of Industrial and Systems Engineering (ISyE),
Georgia Institute of Technology,
\printead{e1,e2,e5}}
\address[B]{Sloan School of Management and MIT Energy Initiative, Massachusetts Institute of Technology,
\printead{e3}}
\address[C]{Argonne National Lab, \printead{e4}}

\end{aug}

\begin{abstract}

Modeling and predicting solar events, particularly the solar ramping event, is critical for improving situational awareness for solar power generation systems. It has been acknowledged that weather conditions such as temperature, humidity, and cloud density can significantly impact the emergence and position of solar ramping events. As a result, modeling these events with complex spatio-temporal correlations is highly challenging. To tackle the question, we adopt a novel spatio-temporal categorical point process model, which intuitively and effectively addresses correlation and interaction among ramping events. We demonstrate the interpretability and predictive power of our model on extensive real-data experiments.

\end{abstract}

\begin{keyword}
\kwd{spatio-temporal point process}
\kwd{solar radiation ramping events}
\kwd{time series anomaly detection}
\kwd{online prediction}
\end{keyword}

\end{frontmatter}

\section{Introduction}
\label{sec:intro}


Solar power generation is essential to daily life, as its installations are increasingly common in residential and commercial areas \citep{liu2009solar}. The proliferation of PV units in distribution systems fundamentally alters the daily power flow patterns from traditional uni-directional flows to multi-directional and reverse flows \citep{ningegowda2014coupled, kim2018flexible}. As a result, it is essential to increase the awareness of the production status of individual PV units, such as through an accurate prediction of solar generation, to sustainably supply solar power and avoid power outages \citep{huang2012solar}. However, exact radiation prediction is highly challenging due to inherent stochasticity. 

Besides exact radiation prediction, modeling ramping events in solar power systems is also essential. Ramping events \citep{florita2013identifying,cui2015optimized, kamath2010understanding} are one-bit information that represents abnormal events in sequential observations. They can be interpreted as abrupt slope increases or decreases in power generations and typically occur under extreme weather (e.g., rainstorm or hurricane) \citep{rocchetta2015risk}. The power system is vulnerable to such events, which forces the affected units to shut down. Therefore, predicting these events is valuable for distribution operators to take necessary precautions and reduce restoration costs.

In general, the ramping event is also difficult to predict because of (i) spatio-temporal correlation, (ii) non-stationarity, (iii) computational efficiency in online prediction. More precisely, ramping events typically occur in an affected area rather than at an isolated location. In many cases, an event also starts at one place and then propagates to its neighbors with a time delay, leading to spatio-temporal correlation among the events. Thus, it is crucial to collect observations at multiple locations, which correspondingly form the multivariate series. These challenges require us to think (i) how to jointly and effectively capture the spatial and temporal correlations; (ii) how to predict ramping events if probabilistic models are used; (iii) how to make efficient real-time predictions, especially under non-stationary.

This paper addresses the challenges above through the following two contributions. First, we present a spatio-temporal categorical point process, which is originally proposed by \cite{juditsky2020convex}. The model can flexibly capture the spatio-temporal correlations and interactions among binary or categorical ramping events without assuming time-decaying influence. The model parameters are efficiently estimated using convex optimization with theoretical guarantees. In addition, the model can efficiently make online probabilistic ramping event predictions at any location and time. Second, we propose dynamic decision thresholds to address non-stationary when making online predictions of a future ramping event. These dynamic thresholds show improved performance over static ones.

The rest of the paper is organized as follows. We start with a motivating example using real data for solar radiation modeling. Section \ref{sec:model} presents our modeling framework, including the proposed spatial-temporal point process model, estimation procedures, computational complexity, and dynamic thresholds for prediction. Section \ref{sec:theory} presents theoretical guarantees for model estimation. Section \ref{sec:exp} presents a study of US solar radiation data to show our method's interpretability and predictive power. Section \ref{sec:conclusion} concludes the work and discusses future steps. Appendix A contains proofs and Appendix B contains additional experiments.

\subsection{Motivation with real-data example}

\begin{figure}[t]
\begin{center}
  \includegraphics[scale=0.45]{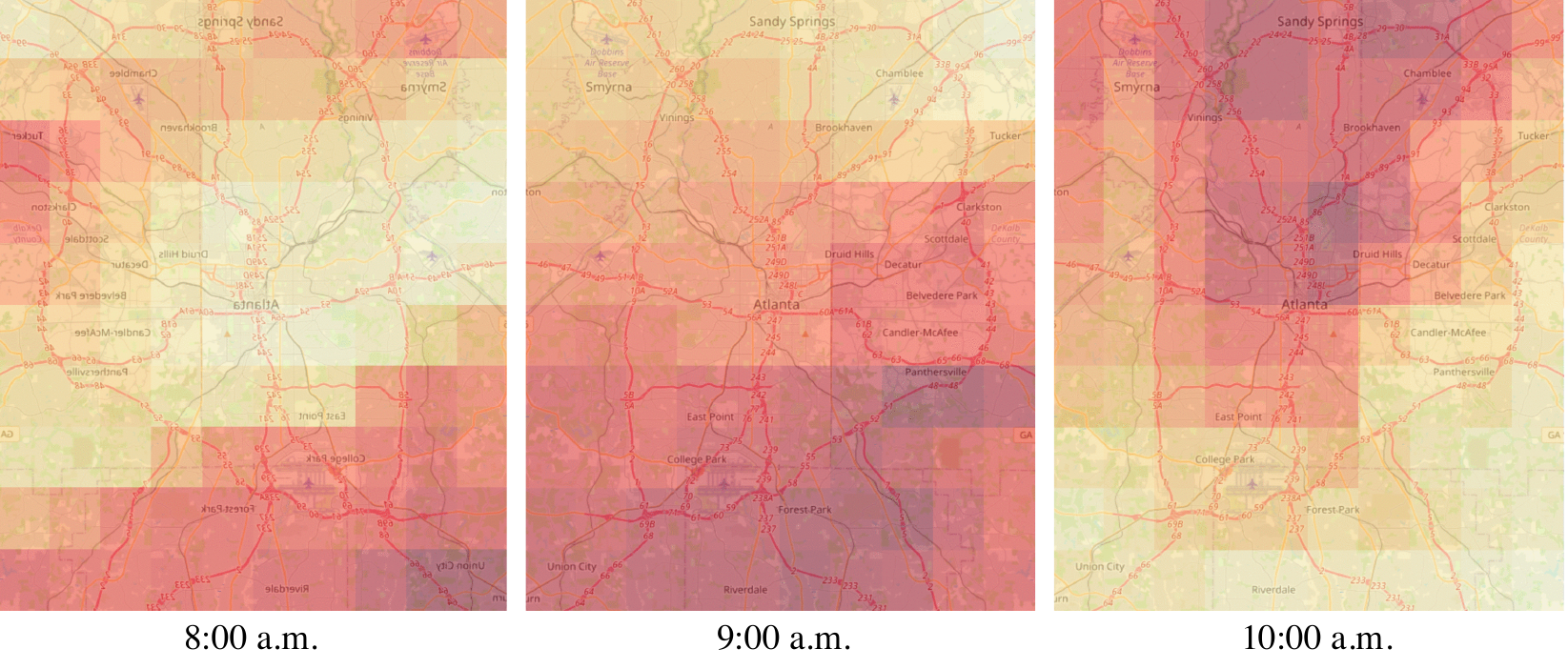}
  \caption {Illustration of the dynamic of solar radiation at different locations around Atlanta. The darker color on the map represents less solar radiation. At 8:00 a.m., South Atlanta has less solar radiation than the average, while two hours later, it is clear that the low-radiation area has moved northward. Such a phenomenon shows that an abnormal event could start from the south but move to the north as time passes.}
  \label{fig:animation}
\end{center}
\end{figure}

\begin{figure}[t]
  \centering
  \includegraphics[width=\linewidth]{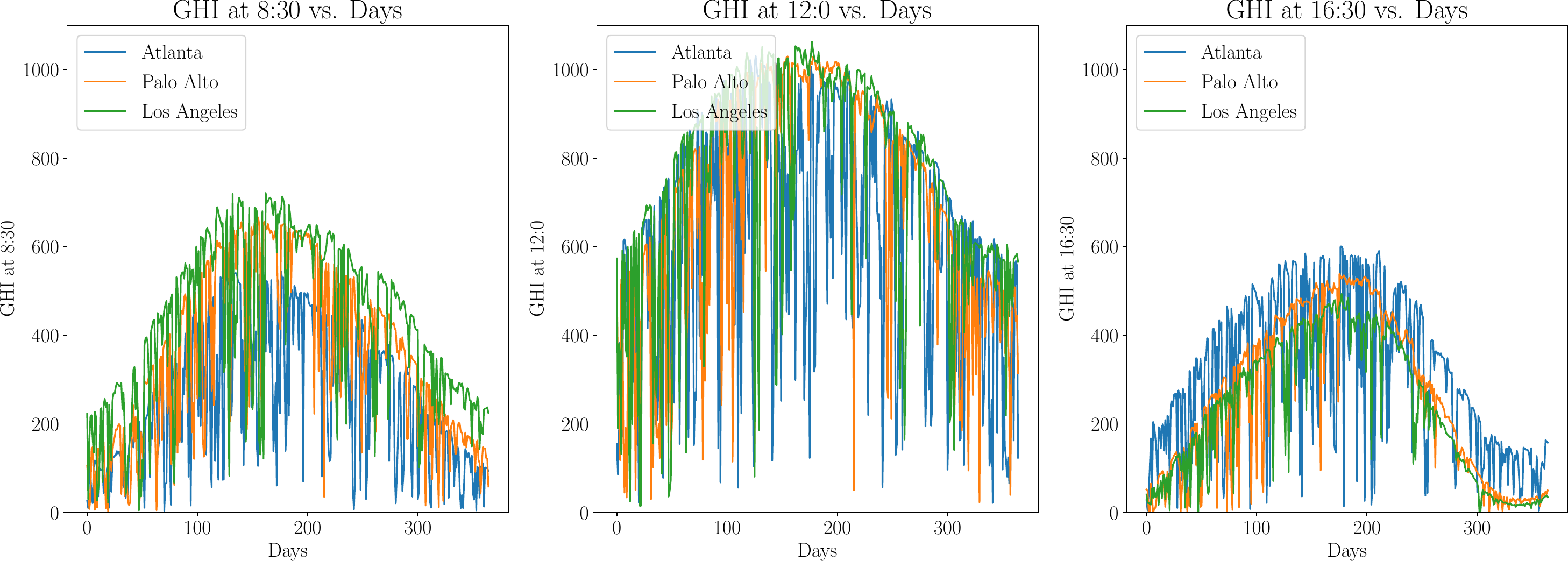}
  \vspace{-0.85cm}
  \caption{2017 Raw data at different hours during the day over 365 days. We observe clear non-stationarity in all series. Observations in Atlanta have lower radiation levels than those in California in the morning (8:30 AM) and at noon (12:00 PM), but have higher values in the late afternoon (4:30 PM).}
  \label{fig:5_1_raw_houly_data}
\end{figure}

We present a motivating example using solar radiation data collected in Atlanta, Palo Alto, and Los Angeles to motivate the correlation among and dynamics of ramping events. Figure \ref{fig:animation} shows areas with different intensities of radiation levels, where darker regions indicate lower radiation. Since ramping events are often defined locally, we can see that they tend to cluster and are spatially and temporally correlated. Thus, it is essential to consider the dynamics of the ramping events in both spatial and temporal aspects to capture and further predict the solar radiation patterns. Moreover, low-radiation areas migrate northward as time passes, indicating non-stationarity in the data. Therefore, such a pattern requires models that jointly capture the non-stationary dynamics and data correlations. Since ramping events are discrete observations in space, point process models can be a natural fit.

Meanwhile, Figure \ref{fig:5_1_raw_houly_data} shows volatile fluctuations in daily solar radiation (measured in GHI) at specific hours of the day, with similar peak values from June to August. 
For example, radiation values in Palo Alto are typically higher than those in Atlanta during the mornings. However, they are typically lower during late afternoons, indicating that the proposed model should be flexible enough to address such stochasticity. 
Lastly, because data come from raw radiation values (e.g., GHI), Figure \ref{fig:abnormal_extraction} illustrates the procedure of ramping event extraction. The top row contains raw solar radiation observations, and the bottom denotes the extracted ramping events. The detailed extraction procedure is explained in Section \ref{sec:exp_data_desribe}.
\begin{figure}[b]
  \centering
  \includegraphics[width=0.75\linewidth]{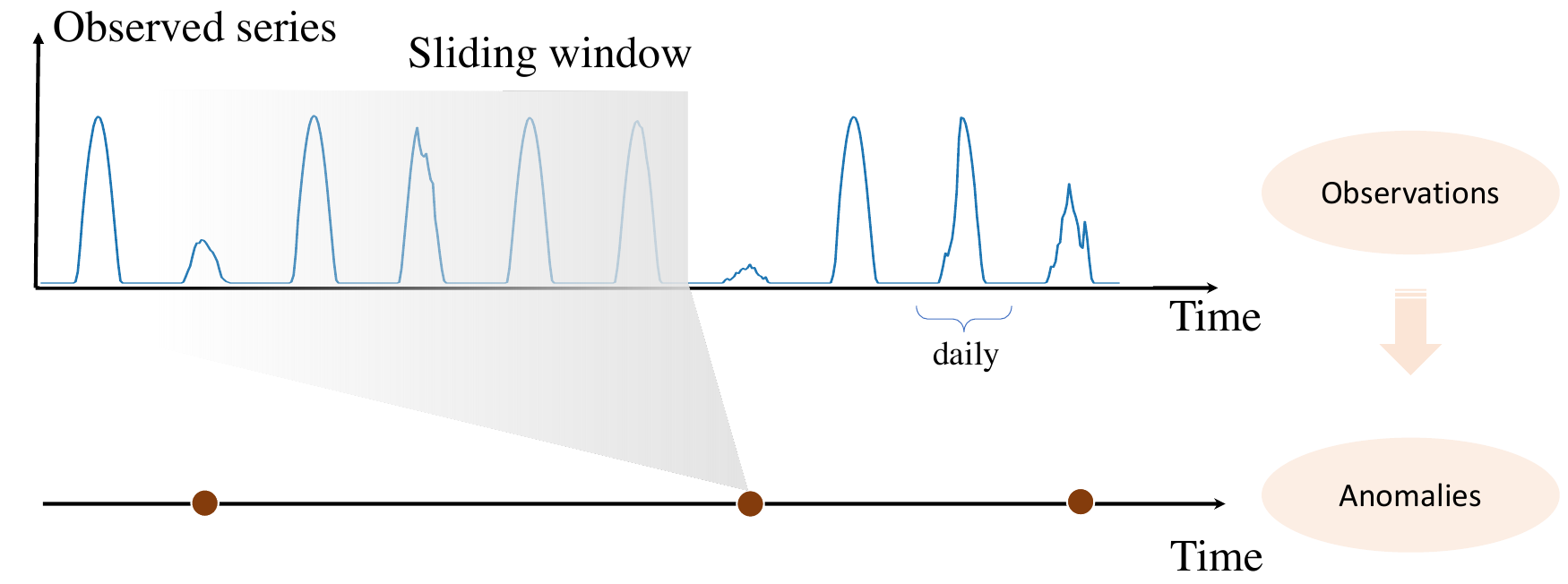}
  \vspace{-0.2cm}
  \caption{Illustration of the procedure for ramping event extraction using raw GHI radiation values: first, the raw data are fed into a sliding window sequentially. Given the set of historical events, we then calculate its $1-\delta$ and $\delta$ quantile; the current event is considered abnormal if it lies beyond or below the $1-\delta$ or $\delta$ quantile.}
  \label{fig:abnormal_extraction}
\end{figure}

\subsection{Related work}

There have been many works on solar ramping event modeling. A line of work started by \cite{sda2013} adopts the Swinging Door Algorithm (SDA) commonly used in data compression, which does not build statistical models. Later, \cite{optsda2015} and \cite{optsda2017} propose an optimized SDA using dynamic programming. However, these works do not have theoretical guarantees and consider no spatio-temporal correlations. Among recent works, \cite{postprocess} uses ensemble-based probabilistic forecasts but needs access to many features besides historical ramping events. In addition, limited data can hamper performance. Meanwhile, \cite{limithistory} proposes a forewarning method using a credal network and imprecise Dirichlet model to study power change by meteorological fluctuation. However, the authors only provide probabilistic forecasts without online categorical predictions. As a result, dynamic solar ramping event modeling remains a great challenge, especially under strong spatio-temporal correlation.

There have also been many advances in stochastic event modeling outside the energy research community, considering data correlation. These works date back to Hawkes process model \citep{hawkes1971spectra} and the self-correcting point process model \citep{isham1979self}. Several recent works provide statistical methods for modeling temporal point processes. \cite{MultivariateTP} proposes a multivariate point process regression model to analyze neural spike trains. \cite{ModelingEP} models event propagation using graph biased temporal point process to leverage structural information from the graph. An earlier application by \citep{Rumor} models rumors prevalence on social media via a log-Gaussian Cov process to learn the underlying temporal probabilistic model. On the other hand, methods using deep learning techniques have also been widely considered. For example, RNN or LSTM cells are used in \cite{du2016recurrent,mei2017neural} to memorize the influence of historical events. A recent work \citep{zhu2020deep} uses attention to model the historical information in the point process. There is also work that uses an imitation learning algorithm \citep{li2018learning} to learn a generator that captures the exact time dependency pattern. While diverse and useful, existing works consider little spatial influence, mainly due to the structural differences between time and space and the computational difficulties in higher dimensions. Recent work by \cite{zhu2020spatio} uses tail-up models for spatial modeling and attention mechanism to capture the time dependency for traffic flows. However, the lack of theoretical guarantee and differences in modeling purpose makes it hard to apply here.

We can also view ramping event modeling as a type of anomaly detection. Research on anomaly detection in the time series analysis area has also been active for decades, which could be traced back to  \cite{grubbs1969procedures}'s work in the 1960s. There are many conventional works on the topic, such as \cite{aggarwal2015outlier,akoglu2015graph,gupta2013outlier}. Also, many works use modern deep learning frameworks to deal with more complex patterns such as \cite{hawkins2002outlier} and more. However, none of these works have a flexible anomaly detection model under strong spatio-temporal correlation for non-stationary data. Lastly, \cite{xu2021conformal} uses novel conformal prediction techniques for anomaly detection in spatio-temporal data, but it is yet unclear how the method can be applied to categorical responses.

\section{Model and Prediction} \label{sec:model}
Let $T$ and $K$ be the total number of ramping observations at each location and the total number of locations. Our observed data are denoted as  $\boldsymbol\omega^T=\{\omega_{tk}, 1\leq t <T, 1\leq k \leq K\}$, where $\omega_{tk}\in \{0,1,\ldots,M\}$ denotes the state of the ramping event (0 means it is a non-ramping event). The value $M$ denotes the number of ramping states. When $M=1$, we are only concerned whether ramping events occur; when $M>1$, we model event types to examine reasons for being ramping events. Therefore, the model can capture ramping events in many situations. Meanwhile, we assume that only past $d$ time units of ramping observations impact the current state; given a memory window $d$, we thus define $\boldsymbol\omega^t_{t-d}=\{\omega_{sk}, t-d \leq s < t, 1\leq k \leq K\}$ as all ramping data in the past $d$ time units from all location. 

We now define the spatio-temporal probabilistic model for ramping events. For each location $k$, we associate an array of \textit{birthrate} parameters $\bar{\beta}_{k}=\left\{\beta_{k}(p), 1 \leq p \leq M\right\}$, and for every pair of locations $k,l$ and every $s\in \{1,\ldots,d\}$, an array of \textit{interaction} parameters $\bar{\beta}_{k \ell}^{s}=\left\{\beta_{k \ell}^{s}(p, q), 1 \leq p \leq M, 0 \leq q \leq M\right\}$. Thus, for each ramping state $p\in \{1,\ldots,M\}$, the conditional probability of $\omega_{tk}$ being in state $p$ on $\boldsymbol\omega^{t}_{t-d}$ is written as
\begin{equation} \label{eq:multi_state_Ber}
    \mathbb{P}[\omega_{tk}=p|\boldsymbol\omega^{t}_{t-d}]
    =\beta_{k}(p)+\sum_{s=1}^{d} \sum_{\ell=1}^{K} \beta_{k \ell}^{s}\left(p, \omega_{(t-s) \ell}\right),
\end{equation}
and $\mathbb{P}[\omega_{tk}=0|\boldsymbol\omega^{t}_{t-d}]=1-\sum_{p=1}^{M}\left[\beta_{k}(p)+\sum_{s=1}^{d} \sum_{\ell=1}^{K} \beta_{k \ell}^{s}\left(p, \omega_{(t-s) \ell}\right)\right]$. 

The model parameters are interpreted and understood as follows: 
\begin{itemize}
    \item $\beta_k(p)$ denotes the intrinsic probability of a ramping event at the $k$-th location under state $p$, without any exogenous influence from the past periods and from other locations. It is also called the birthrate.
    
    \item $\beta^s_{kl}(p,\omega_{(t-s)l})$ denotes the magnitude of influence on a ramping event with possible state $p$ at location $k$ and time $t$ by the event occurred on location $l$ and at time $t-s$. The sum captures the \textit{cumulative influence} up to time $t-1$ from all the other locations.
\end{itemize}
This conditional probability model explicitly captures the dependence of event at time $t$ and location $k$ on events from all locations in the past $d$ days. The set of model parameters are $\boldsymbol\beta=\left\{\bar{\beta}_{k}, \bar{\beta}_{k \ell}^{s}: 1 \leq s \leq d, 1 \leq k, \ell \leq K \right\}$. Define the number of parameters $\kappa:=KM+K^2dM(M+1)$, then $\bm \beta \in \mathbb{R}^\kappa$.
To ensure (\ref{eq:multi_state_Ber}) outputs reasonable probabilities, we require that parameters belong to the following set:
\begin{equation}\label{eq:multi_constraint}
\mathcal X:= \left\{\boldsymbol \beta : \begin{array}{l}
0 \leq \beta_{k}(p)+\sum_{s=1}^{d} \sum_{\ell=1}^{K} \min _{0 \leq q \leq M} \beta_{k \ell}^{s}(p, q), 1 \leq p \leq M, 1 \leq k \leq K, \\
1 \geq \sum_{p=1}^{M} \beta_{k}(p)+\sum_{s=1}^{d} \sum_{\ell=1}^{K} \max _{0 \leq q \leq M} \sum_{p=1}^{M} \beta_{k \ell}^{s}(p, q), 1 \leq k \leq K.
\end{array}\right\}
\end{equation}

\subsection{Model estimation} \label{sec:model_estimations} We introduce both the least-square method and the maximum likelihood estimation technique for model estimation.


\vspace{0.1in}
\noindent \textit{Least-Square (LS) method: }
We estimate the model parameters $\boldsymbol \beta$ by minimizing the sum-of-square errors between conditional probabilities and extracted abnormal events. This leads to the following optimization problem objective. 
\begin{equation} \label{eq:multi_obj}
\Psi_{\bm\omega^N}(\hat{\bm\beta}) = \frac{1}{2N}\sum_{t=1}^N\sum_{k=1}^K \left \Vert \left (\bar{\beta}_{k}+\sum_{s=1}^{d} \sum_{\ell=1}^{K} \bar{\beta}_{k \ell}^{s}\left(\omega_{(t-s) \ell}\right)\right) - \bar{\omega}_{tk}\right \Vert^2_2
\end{equation}
In (\ref{eq:multi_obj}), $\bar{\omega}_{tk}:=e_{\omega_{tk}} \in \mathbb{R}^M$ is the vector of ramping states, whose $i^{th}$ entry is 0 if $\omega_{tk}$ is in state $i > 0$, and is the zero vector if $\omega_{tk}=0$. This vector encodes the true state of the data point at time $t$ and location $k$. Also, $\bar{\beta}_{k}=\{\beta_{k}(p)\in \mathbb{R}, p\in \{1,...,M\} \} \in \mathbb{R}^M$ and $\bar{\beta}_{k \ell}^{s}\left(\omega_{(t-s) \ell}\right) =\{\beta_{k \ell}^{s}\left(p, \omega_{(t-s) \ell}\right) \in \mathbb{R},\ p\in \{1,...,M\} \} \in \mathbb{R}^M$.

Thus, the problem of recovering parameters $\boldsymbol \beta$ is formulated as solving the following constrained convex optimization problem \citep{juditsky2020convex}, subject to constraints (\ref{eq:multi_constraint}):
\begin{equation}
\hat{\boldsymbol \beta}^{\rm{LS}}(\bm\omega ^N) := \arg\min_{\bm\beta \in \mathcal X} \Psi_{\bm\omega^N}(\bm\beta)
\end{equation}
Note this optimization problem has a strongly convex objective function (e.g. the least-square cost function), so it can be solved efficiently in polynomial time.

\vspace{0.1in}
\noindent \textit{Maximum-Likelihood (ML) method: } 
Following \citep{juditsky2020convex}, we additionally assume that for every $t$, random variables $\omega_{tk}$ are conditionally independent across locations $k$ given the past history $\boldsymbol \omega^{t-1}$, so that conditional probabilities are separable. In particular, we define the following likelihood objective function.
%
\begin{equation} \label{eq:multi_state_lik}
    L(\bm\beta)=-\frac{1}{N} \sum_{t=1}^{N} \sum_{k=1}^{K} \psi_{t k}\left(\bar{\beta}_k, \omega^{N}\right),
\end{equation} where 
\begin{align*}
    \psi_{t k}\left(\bar{\beta}_k, \bm\omega^{N}\right)=
    \left\{\begin{array}{ll}
\ln \left(\beta_{k}(p)+ \sum_{s=1}^d\sum_{l=1}^K \beta^s_{kl}(p, \omega_{(t-s)l})\right), & \omega_{t k}=p>0\\
\ln \left(1-\sum_{p=1}^{M}\left(\beta_{k}(p)+ \sum_{s=1}^d\sum_{l=1}^K \beta^s_{kl}(p, \omega_{(t-s)l})\right)\right), & \omega_{t k}=0
\end{array}\right.
\end{align*}

We call a minimizer vector $\hat{\bm\beta}$ of
(\ref{eq:multi_state_lik}), subject to the same constraints 
in  
(\ref{eq:multi_constraint}) 
, the maximum likelihood (ML) estimate:
\[
\hat{\boldsymbol \beta}^{\rm{MLE}}(\bm\omega ^N) = \arg\min_{{\boldsymbol \beta} \in \mathcal X} L (\boldsymbol{\beta})
\]
The objective function is convex since it resembles the likelihood function for a generalized linear model (GLM) with Bernoulli link functions. Thus it can be solved efficiently by convex optimization algorithms as we did for LS estimates.

\begin{remark}[Computational Complexity] The total number of parameters $\kappa$ depend on $K$ (i.e., number of location) and $M$ (i.e., number of states) quadratically. Therefore, even for convex solvers with polynomial complexity, the complexity of solving the problem can be high when these numbers are large. In such a case, sparsity can be imposed in this model to speed up computation. For example, we may assume that a pair of locations do not influence each other if the distance between them exceeds a threshold without losing model convexity.
\end{remark}

\subsection{Prediction procedures}\label{sec:anomaly_detection} 

We first make sequential prediction on future conditional probabilities using the estimated parameters $\hat{\bm\beta}$, data from the past $d$ days, and equation (\ref{eq:multi_state_Ber}). Thus, implicit in our prediction mechanism is the assumption that we can observe future data sequentially after prediction and thus incorporate this new information into prediction. Then, we use the sequentially estimated probability to detect ramping events and their states. More precisely, for each location $k$, future time index $t$, and ramping state $p$, we let $\hat{\omega}_{t,k}=p$ if $\hat{p}_{tk}(p) \geq \tau_{tk}(p)$ and 0 otherwise, where $\hat{p}_{tk}(p)$ (resp. $\tau_{tk}(p)$) is the predicted state probability (resp. state threshold). When multiple possible states exist for $\hat{\omega}_{t,k}$, we pick the state with the highest predicted probability. 

The set of abnormal thresholds $\{\tau_{tk}(p):1\leq p\leq M\}$ can be defined sequentially and dynamically or be fixed in advance. We propose the following dynamic $\tau_{tk}(p)$ for sequential prediction. Denote $w$ as the window of past $w$ time units of observations. Then, the dynamic thresholds $\tau_{tk}(p), p=1,\dots, M$ have the form:
\begin{equation}\label{thres}
  \tau_{tk}(p) = \alpha_p\frac{\sum_{i=1}^{w} \hat p_{(t-i)k}(p) \omega_{(t-i)k}(p)}{\sum_{i=1}^{w} \omega_{(t-i)k}(p)} + (1-\alpha_p) \frac{\sum_{i=1}^{w} \hat p_{(t-i)k}(p) (1-\omega_{(t-i)k}(p))}{w-\sum_{i=1}^{w}\omega_{(t-i)k}(p)}, \quad t\geq w.
\end{equation}
The intuition for choosing this threshold is that we want to find a data-driven threshold that distinguishes between the $p^{\rm{th}}$ ramping state from the rest. We do so by using the history of the ramping events. Meanwhile, if $\alpha_p$ is close to 0, the threshold is likely small since the latter summand in (\ref{thres}) tends to have a larger denominator than its numerator. Lastly, if there is no ramping event with the given state in the past $w$ days, we have $0/0$ in \eqref{thres}. In this case, we will let the dynamic threshold be the static threshold (e.g. $\tau_{tk}(p)=\tau_{k}(p)$ for all $t$ and $\tau_{k}(p)$ is found via cross-validation). 

\vspace{0.1in}
\noindent \textit{Prediction interval:} We can first build bootstrap confidence intervals for parameters $\boldsymbol \beta$ as follows: Let $B$ be the number of bootstrap estimates. For each $b=1,\ldots,B$, we re-sample uniformly with replacement from $\{\omega_{tk}\}$ to create $\{\omega^b_{tk}\}$. Then, we fit $\hat{\boldsymbol \beta^b}$ using our earlier LS or ML models and data $\{\omega^b_{tk}\}$ and calculate standard errors for each component of $\boldsymbol \beta$ using the $B$ bootstrap estimates; for notation simplicity, we write the standard error vector as ${\rm SE}_{\rm boot}$, which has the same dimension as $\boldsymbol \beta$. Finally, we form the $1-\epsilon$ confidence interval for $\boldsymbol \beta$ as $\hat{\boldsymbol \beta} \pm z_{1-\epsilon/2\kappa} {\rm SE}_{\rm boot}$, where $z_{\alpha}$ is the $\alpha$-quantile of a $N(0,1)$ random variable. Note we use the conservative Bonferroni correction here in the multiplier $z_{1-\epsilon/2\kappa}$, because we want to make sure the intervals are valid for all the components of $\boldsymbol \beta$. Then, we build prediction intervals for ramping event probabilities as $[\hat{\boldsymbol \beta} \pm z_{1-\epsilon/2\kappa} {\rm SE}_{\rm boot}]$. We choose $B=100$ and $\epsilon=0.05$ in experiments.

\section{Theory}\label{sec:theory}
We first present the performance guarantee for $\hat{\boldsymbol \beta}^{\rm{LS}}=\hat{\boldsymbol \beta}^{\rm{LS}}(\bm\omega ^N)$, which is proven by \cite{juditsky2020convex}. We then state our contribution for the MLE estimates $\hat{\boldsymbol \beta}^{\rm{MLE}}=\hat{\boldsymbol \beta}^{\rm{MLE}}(\bm\omega ^N)$. 

Define function $\eta(\cdot)$ on the set of all arrays $\bm \omega_{t-d}^{t-1}\in\{0,1,\ldots,M\}^{d\times K}$ that takes values in the matrix space $\mathbb{R}^{K \times \kappa}$:
\begin{equation}
\eta^\top (\bm \omega_{t-d}^{t-1}) = [I_K, I_K\otimes \text{vec}(\bm \omega_{t-d}^{t-1})^\top ]\in \mathbb{R}^{K \times \kappa},
\end{equation}
where $I_K$ is $K$-dimensional identity matrix, $\otimes$ denotes the standard Kronecker product, vec$(\cdot)$ vectorizes a matrix by stacking all columns. 

Next, define $A[\bm\omega^n]:=\frac{1}{N}\sum_{t=1}^N\eta(\bm \omega_{t-d}^{t-1})\eta^\top(\bm \omega_{t-d}^{t-1})$ and the condition number given $A\in \mathbb{R}^{\kappa \times \kappa}, A\succ 0$:
\begin{equation} \label{def:theta}
    \theta_p[A]:=\max\{\theta\geq 0: g^\top Ag\geq \theta||g||_p^2,\forall g\in \mathbb{R}^\kappa, p\in[1,\infty]\}.
\end{equation}
By concentration inequalities for martingales, we have

\begin{theorem}[Bounding $\ell_p$ error of LS estimate \citep{juditsky2020convex}] 
\label{thm:LS-bound}
For every $\epsilon\in(0,1)$, every $\bm \omega ^N$, and any $p\in[1,\infty]$, we have with probability at least $1-\epsilon$ that 
\[
    ||\hat{\boldsymbol \beta}^{\rm{LS}}-\bm\beta||_p\leq \bigg(\sqrt{\frac{\ln(2\kappa/\epsilon)}{2N}}+\frac{\ln(2\kappa/\epsilon)}{3N}\bigg)/\sqrt{\theta_p[A[\bm\omega^N]]\theta_{1}[A[\bm\omega ^N]]}.
\]
\end{theorem}

We also have performance guarantee for $\hat{\boldsymbol \beta}^{\rm{MLE}}=\hat{\boldsymbol \beta}^{\rm{MLE}}(\bm\omega ^N)$. Under the same notation in Theorem \ref{thm:LS-bound}, we have the following guarantee at the same order as above, whose proof is in Appendix A. 

\begin{theorem}[Bounding $\ell_p$ error of MLE estimate]\label{thm:ML-bound}
Suppose we restrict $\beta$ to its $\rho$-strengthened version, where probabilities in (\ref{eq:multi_constraint}) are required to be lower bounded by $\rho$ and upper bounded by $1-\rho$. For every $\epsilon\in(0,1)$, every $\bm \omega ^N$, and any $p\in[1,\infty]$, we have with probability at least $1-\epsilon$ that
 \[||\hat{\boldsymbol \beta}^{\rm{MLE}} -\boldsymbol \beta ||_p \leq \frac{(1-\rho)^2}{\rho}  \sqrt{\frac{2\ln (2 \kappa / \epsilon)}{N}} / \sqrt{\theta_p[A[\bm\omega^N]]\theta_{1}[A[\bm\omega ^N]]}.\]
\end{theorem}

\begin{remark}[Numerical Computation]
The quantity $\theta_p[A[\bm\omega^N]]$ is readily computable when $p=2$ or $\infty$: when $p=2$, it is the minimum eigenvalue and when $p=\infty$, it is $\min _{1 \leq i \leq \kappa}\left\{x^{T} A x:\|x\|_{\infty} \leq 1, x_{i}=1\right\}$, which is the minimum of $\kappa$ efficiently computable quantities. Solving for $p=1$ is hard in general. However, per the discussion in \citep{juditsky2020convex}, one can pass to polars and solve for the semidefinite relaxation upper bound on $\max _{\|x\|_{\infty} \leq 1} x^{T} Q x$ for $Q:=A^{-1}$ by $$
\min _{\lambda}\left\{\sum_{i} \lambda_{i}: \lambda_{i} \geq 0, \forall i ; \operatorname{Diag}\left\{\lambda_{1}, \ldots, \lambda_{\kappa}\right\} \succeq Q\right\},
$$
which is tight within the factor $\pi/2$ \citep{nesterov1998}. Furthermore, we can expect that the minimum eigenvalue of $A[\boldsymbol \omega^n]$ will be of order 1 with high probability, so that the bound using Theorem \ref{thm:LS-bound} or \ref{thm:ML-bound} goes to 0 as $N\rightarrow\infty$ at the rate $O(1/\sqrt{N})$. As a result, $\hat{\boldsymbol \beta} \overset{p}{\to} \boldsymbol \beta$. Lastly, the bound on the RHS in either Theorem \ref{thm:LS-bound} or \ref{thm:ML-bound} is fully data-driven and computable given historical ramping events. We will show numerical values of these bounds under different norms in Section \ref{sec:exp}.
\end{remark}

\section{Real-data study} \label{sec:exp}
We focus on modeling either single-state ramping events (e.g. $\omega_{tk}\in \{0,1\}$) or two-state multi-state ramping events (e.g. $\omega_{tk}\in \{-1,0,1\}$). Two-state ramping events occur when the current radiation is too high or too low compared to historical values. We first introduce the raw data before converting them to ramping events. We only show detailed modeling results for single-state ramping events due to paper length but summarize findings for multi-state event modeling, which are detailed in Appendix B.1.

\subsection{Dataset} 
\label{sec:exp_data_desribe}
We retrieve from the NSRDB\footnote{dataset available at \protect\url{https://nsrdb.nrel.gov/}} website 2017 and 2018 bi-hourly solar radiation data collected every day from sensors across different states and cities in the US. Sensors can be located using longitudes and latitudes. The 2017 data are used for estimating parameters in our model, and the 2018 data are used for sequential prediction and model performance assessment. Specifically, we collect three sets of such data: two uniform 3-by-3 grids of data, which center at downtown Atlanta or at downtown Los Angeles and each pair of grids points (i.e., sensors) with the same longitude and latitude are separated by 0.076-degree (5 miles), and a non-uniform set of 10 downtown data across different cities\footnote{Complete list of cities: Fremont, Milpitas, Mountain View, North San Jose, Palo Alto, Redwood City, San Mateo, Santa Clara,
South San Jose, Sunnyvale} in California. We choose these places because of the weather differences to demonstrate our model's flexibility. We visualize the two sets of sensor locations on the map in Figure \ref{fig:5_1_mapview}.

\begin{figure}[t]
  \centering
  \subfigure[]{\includegraphics[scale=0.37]{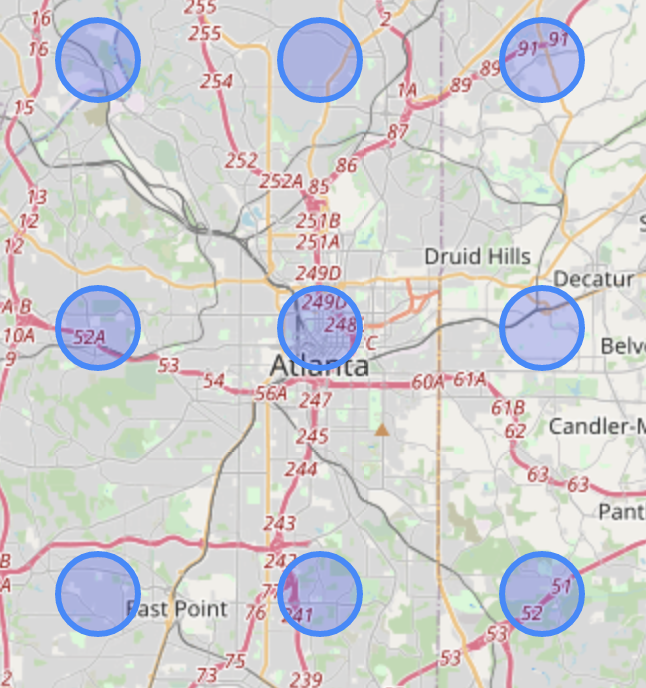}}
  \subfigure[]{\includegraphics[scale=0.37]{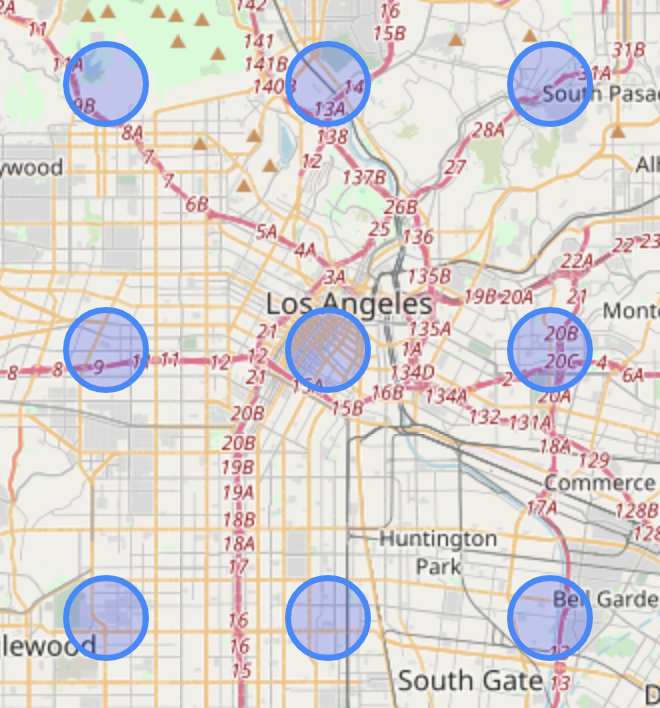}}
  \subfigure[]{\includegraphics[scale=0.30]{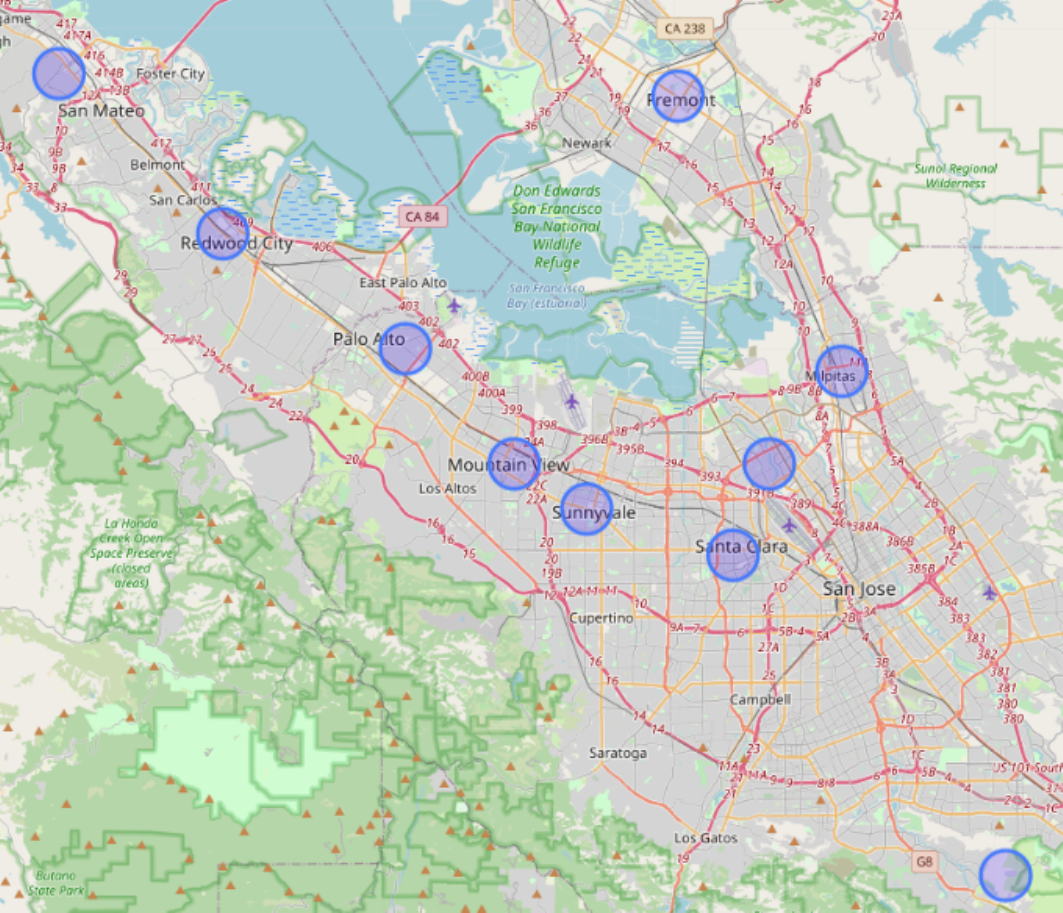}}
  \caption{Locations where the 2017-18 raw radiation data are collected: (a) 9 locations near downtown Atlanta, (b) 9 locations around downtown Los Angeles, and (c) 10 different downtown in North California. Each pair of locations along the exact longitude or magnitude in (a) and (b) are 0.076 degrees (5 miles) apart.}
  \label{fig:5_1_mapview}
\end{figure}

We also briefly describe the structure of these data: there are 48 \textit{raw radiation} recordings at a location on any given day, so two consecutive recordings are 30 minutes apart. Each bi-hourly recording contains a row of radiation values, including Global Horizontal Irradiance (GHI), Direct Normal Irradiance (DNI), and Diffuse Horizontal Irradiance (DHI) all of which reflect radiation levels. Other features such as wind speed, temperature, and solar zenith angle are available. Since the GHI is correlated with the other two metrics as $\text{GHI} =\text{DNI} \cdot \cos(\theta) + \text{DHI}$, where $\theta$ is the solar zenith angle, we decide to only model ramping events based on GHI values.

Recall that our point process model in Section \ref{sec:model} only takes categorical data as input. Therefore, we preprocess the GHI data to extract the ramping events as follows; Figure \ref{fig:abnormal_extraction} earlier illustrates the process. Let $\{x_{tk}\in\mathbb{R}: t=1,\dots, nT; k=1,\dots K\}$ be a matrix of GHI observations, where $T$ denotes the time horizon (number of days), $n$ denotes the number of observations each day per location and $K$ denotes the total number of locations. The value of $K$ is 9 for Atlanta and Los Angeles and 10 for California. For any fixed $k$, we also denote the past $w_1$-days' observations as ${\bm x}^{t-1}_{t-w_1,k}:=\{{x}_{t'k}: t'=(t-w_1) n,(t-w_1)n+1,\dots, nt-1\}$. Then, ramping events $\omega_{tk}$ are defined locally. For single-state events, we define $\omega_{tk}=1, t\in \{1,\ldots,T\}$ if $\alpha \%$ of $\{{x}_{t’k}, t’ \in \{nt, \ldots,(n+1)T\}\}$ in day $t$  is either above the $1-\delta$ empirical quantile or below the $\delta$ empirical quantile of ${\bm x}^{t-1}_{t-w_1,k}$. Intuitively, we think there is a ramping event in day $t$ if a reasonable amount of GHI values in day $t$ are too high or too low comparing to historical values. Similarly, for two-state multi-state events, we let $\omega_{tk}$ be -1 (resp. 1) if $\alpha \%$ of GHI in day $t$ are below (resp. above) the $\delta$ empirical quantile of ${\bm x}^{t-1}_{t-w_1,k}$. Based on experience with solar data, $w_1=30$, $\delta=0.0005 (0.05\%)$, and $\alpha=2/48$ were chosen as sensible choices. We did not purposefully tune these parameters to make the ramping event detection easy.

\subsection{Evaluation metric and procedure}\label{exp:metric}
We adopt standard performance metrics for classification, including precision, recall, and F1 score. We do so because ramping event detection can be viewed as binary or categorical classification, where we want to detect as many actual ramping events sequentially and timely as possible. In the binary-state case, define the set of all actual ramping events as $A$ and the set of predicted ones as $B$. Then precision $P$ and recall $R$ are defined as:
\[P=|A \cap B| /|B|, R=|A \cap B| /|A|,\]
where $|\cdot|$ denotes set cardinality. The $F_1$ score combines $P$ and $R$: $F_{1}=2 P R /(P+R)$ and a higher $F_1$ score indicates better precision and recall, as well as a balance between them. Because ramping events are extreme cases in data (e.g., positive cases in data are rare), we do not use the ROC curve (true positive rate versus false-positive rate) in our setting. For multi-state ramping events, we would compute these scores separately but use the same formula for each state.

The evaluation procedure is as follows. We first use 2017 data in each city (e.g., Atlanta, Los Angeles, or California) to train their respective estimated parameters. We apply the point process model \textit{separately} for each city because ramping events have a heterogenous occurring mechanism that is region-dependent. We note that no randomization is involved in the whole process, so estimation only needs to be performed once. Then, we make a sequential prediction of the ramping probability at each $t$ and $k$ using our model, estimated parameters, and historical data. The probabilities are converted to actual states via the dynamic threshold introduced in Section \ref{sec:anomaly_detection}. If the predicted state on the day $t$ and location $k$ align with the actual state, we consider it to be a true positive (e.g., belong to $|A\cap B|$ in Section \ref{exp:metric}). 

\subsection{Choice of hyperparameters}
We show how we tune the best static threshold in Figure \ref{fig:5.2:accu_metric}, which will be used to detect ramping events at all locations in a given city. We set the first 30$\%$ training data for tuning the parameter and the rest $70\%$ for testing; cross-validation is not used because of sequential data dependency. To find the best static threshold $\tau$ for prediction described in Section \ref{sec:anomaly_detection}, we search over a grid of 25 uniformly spaced $\tau$ in $[0,1]$ and use the one that yields the highest $F1$ score. From Figure \ref{fig:5.2:accu_metric}, we can see that the $F_1$ score curves generally have an inverse-$U$ shape and that the best static thresholds for all three cities are between 0.45-0.6. Such results indicate we need a high amount of evidence (i.e., non-zero ramping probability) to quantify any day as a ramping event. Hence, the model tends to have a low false-positive rate.

Regarding other parameters, we realize that large choices of the memory depth $d$ used in model (\ref{eq:multi_state_Ber}) dramatically decrease the magnitude of spatial-temporal parameters (e.g. $\beta_{kl}^s(p,q)$ for large $s$); we chose $d=10$ since the value declines to around 0 when $s>5$. As for the dynamic threshold parameters in (\ref{thres}), we chose the best $w$ among 25 uniformly spaced $w$ from 10 to 110 and $\alpha_p$ from 0.1 to 0.9 for each state $p$. The ones reaching the highest $F_1$ scores are $\alpha_p\equiv\alpha=0.75$ and $w=50$, as more emphasis are placed on true predictions and window size should be large enough to avoid encountering too few ramping events with non-zero states.

\begin{figure}[t]
    \centering
    \vspace{-0.5cm}
    \subfigure[Downtown Atlanta]{\includegraphics[scale=0.33]{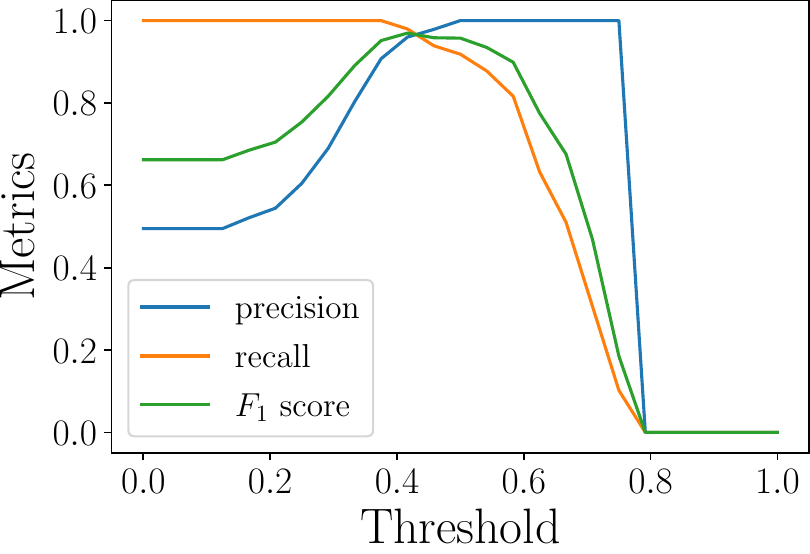}}
    \subfigure[Downtown Los Angeles]{\includegraphics[scale=0.33]{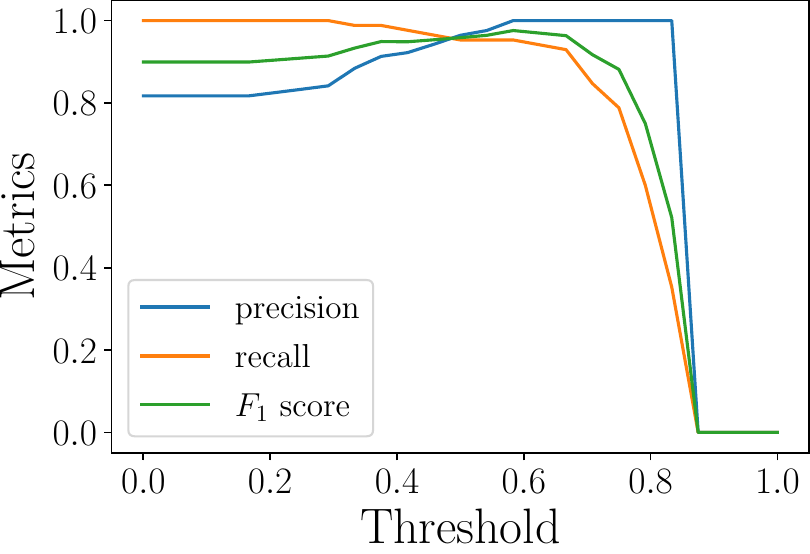}}
    \subfigure[Palo Alto]{\includegraphics[scale=0.33]{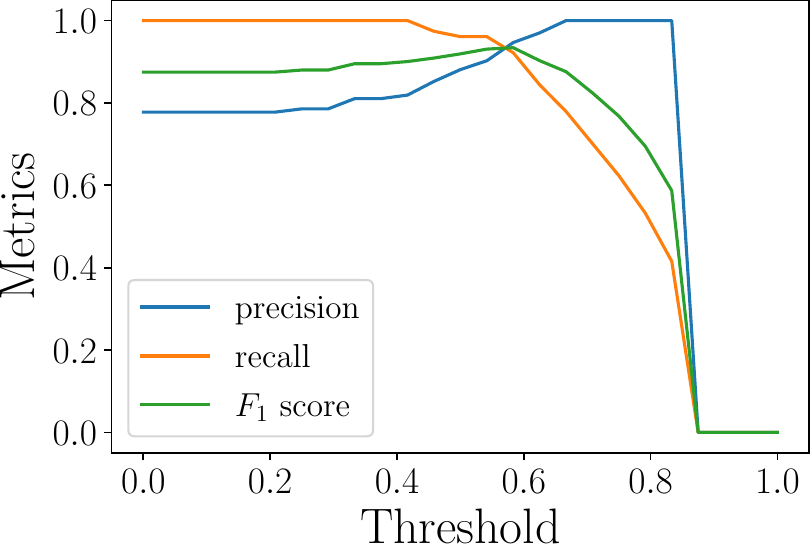}}
    \vspace{-0.3cm}
    \caption{Accuracy metrics vs. static threshold at different locations for single-state detection. Static thresholds that lead to the highest $F_1$ scores on the first $30\%$ data are used during prediction.}
    \label{fig:5.2:accu_metric}
\end{figure}

\subsection{Parameter estimation results}
\label{sec:single-state}
We first visualize the estimated birthrate and interaction parameters in Figure \ref{fig:5_2_beta_single}, which provide interpretable insights into correlation among locations. We then overlay these parameters on terrain maps in Figure \ref{fig:5_2_grid_single} to illustrate certain spatio-temporary patterns uncovered by our point-process model. We lastly provide in Table \ref{tab:performance_guarantee} data-driven bounds on parameter estimates based on Theorem \ref{thm:LS-bound}.

\vspace{0.1in}
\noindent \textit{(a) Recovered spatio-temporal influence: } Figure \ref{fig:5_2_beta_single} shows the recovered single-state birthrate parameters $\beta_k$ over different locations by MLE and LS, and shows a selected number of interaction parameters $\beta^s_{kl}$ over time. Bootstrap confidence intervals are plotted around the estimates. The figures show that the magnitude of birthrate estimates is similar between LA and North California (CA), whereas birthrates in Atlanta (ATL) are the smallest among all three. We suspect the difference occurs because California weather is generally shinier than Atlanta's, leading to higher radiation levels in the former region. Estimates of the interaction parameters between different sensors decay fast regardless of location, indicating that location-to-location influences do not persist over time. In practice, this indicates that ramping events that happened several times ago at any location cannot noticeably influence the current ramping probabilities. Lastly, estimates recovered by MLE and LS generally have very similar magnitude.
\begin{figure}[b]
  \centering
  \subfigure[Atlanta]{\includegraphics[scale=0.22]{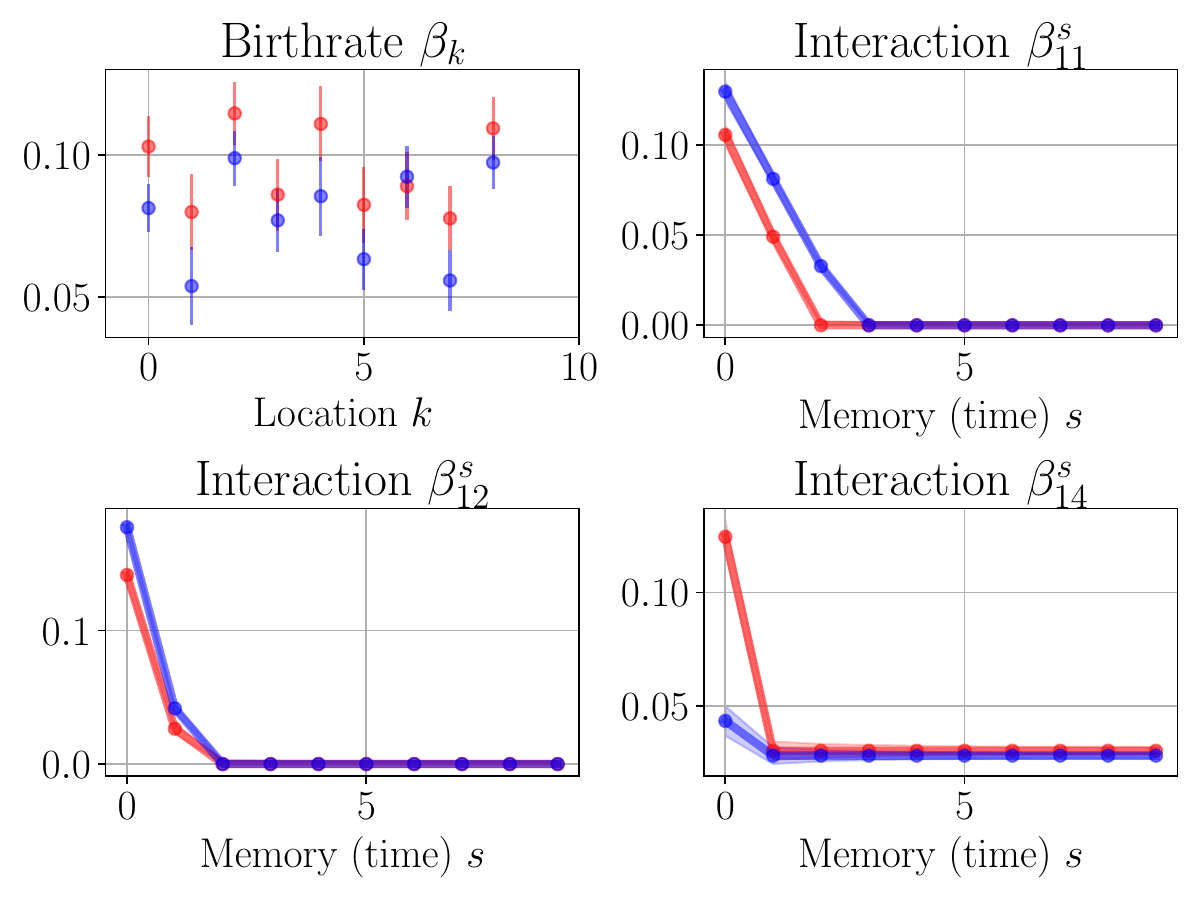}}
   \subfigure[Los Angeles]{\includegraphics[scale=0.22]{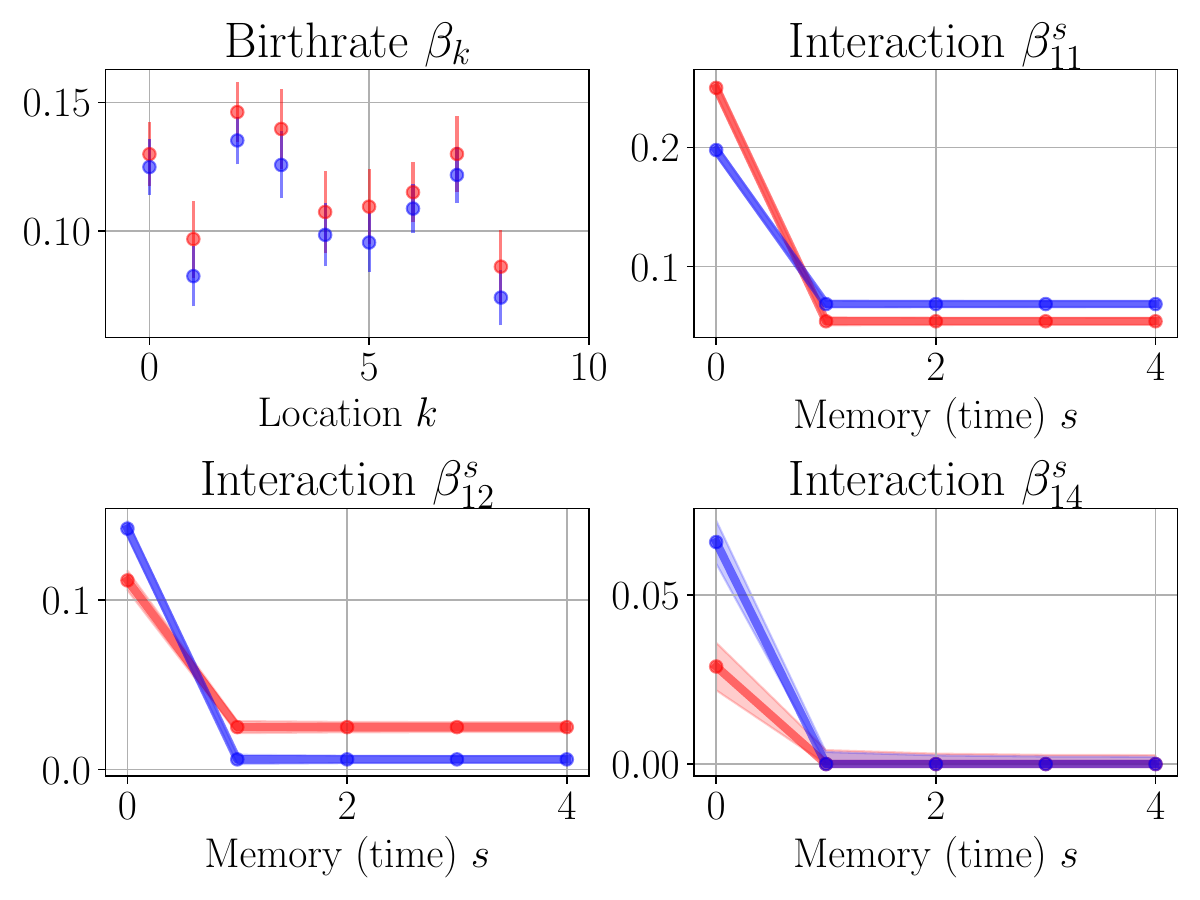}}
  \subfigure[North California]{\includegraphics[scale=0.22]{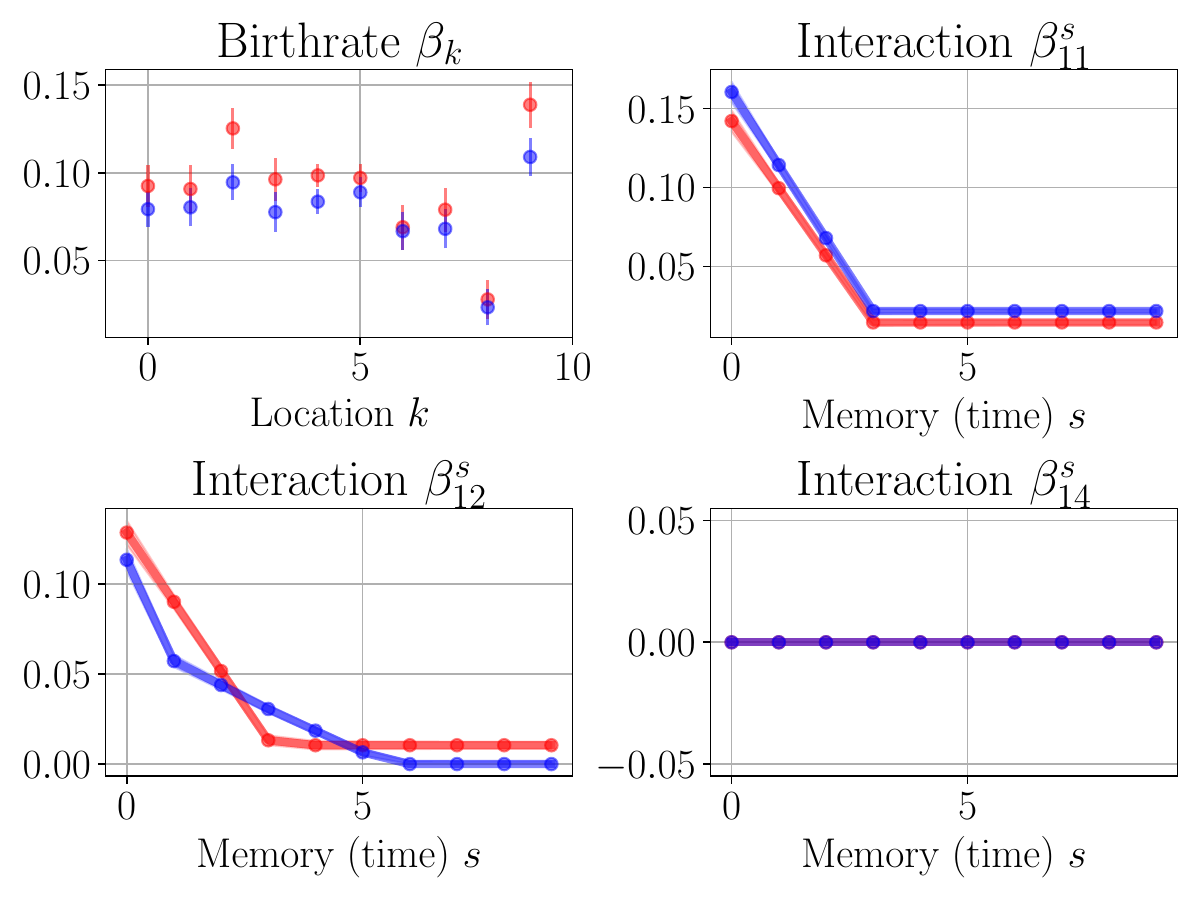}}
  \caption{Recovered birthrate parameters at different locations and interaction parameters over time. The blue curve shows ML recoveries, and the red curve shows LS recoveries. The shaded areas are 95$\%$ confidence intervals under bootstrap. Birthrates are similar in magnitude across different locations, and interactions all decay very fast over time.}
  \label{fig:5_2_beta_single}
\end{figure}

\vspace{0.1in}
\noindent \textit{(b) Visualize interaction parameters on terrain map:} We plot these parameters as graphs overlaid on a terrain map to visualize the influences. We only show the MLE estimates; the LS estimates are shown in Appendix B.3. The vertices' size and edges' width are proportional to the magnitude of corresponding recovered parameters. 

Below we summarize some findings from the results in Figure \ref{fig:5_2_grid_single}, which are interpretable and unveil previously unknown connections among the locations. (1) In Atlanta, birthrates are stronger in the corners. Regarding interaction parameters, at $s=1$, interactions have a relatively constant magnitude, so every pair of neighboring sensors interacts with each other. There are more latitude-wise (vertical) interactions than longitude-wise (horizontal) ones. At $s=5$ and 10, influences decay fast in magnitude, but those in the west and east of downtown Atlanta maintain reasonably large magnitude in comparison to influences along with other directions. (2) In Los Angeles, birthrates are stronger on the west and north sides of downtown. Regarding interaction parameters, at $s=1$, interactions are more obvious along latitude than along longitude. Interaction parameters in the south side of downtown have compatible magnitudes. At $s=3$ and 5, strong influences at the beginning tend to persist, with more persistence along latitude than longitude. (3) In North California, birthrates are very similar in size and concentrated around the northwest side of the map. Regarding interaction parameters, at $s=1$, we find strong influences from a city onto itself. In addition, influences mainly flow towards the southeast dimension, with large magnitudes even if cities are far away (e.g., San Mateo to Santa Clara and Sunnyvale to South San Jose). At $s=5$ and 10, influences follow almost the exact directions with decreasing magnitude.

In addition, based on comparisons between the two uniform grids of sensors at ATL and LA shown in Figure \ref{fig:5_1_mapview}, we find that they have a similar magnitude of interactions, but birthrates in ATL are smaller than those in LA. On the other hand, it is harder to compare either Atlanta or Los Angeles results with North California ones due to differences in sensor distributions. Nevertheless, we can see that magnitudes of birthrate and interactions are similar between cities in North California and grids in Los Angeles. Although we generally expect this situation to occur due to the closer geographic proximity between North California and Los Angeles, it is interesting to quantitatively and intuitively analyze this situation based on the results from our point-process model.
\begin{figure}[h!]
  \centering
  \subfigure[$s=1$]{\includegraphics[scale=0.41]{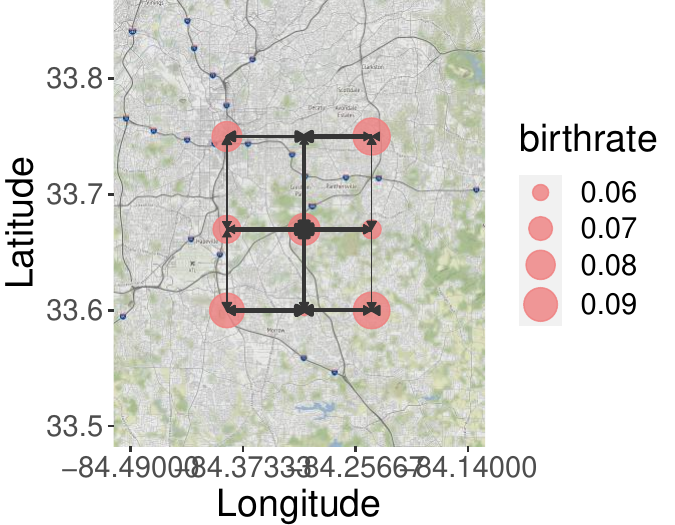}}
  \subfigure[$s=5$]{\includegraphics[scale=0.41]{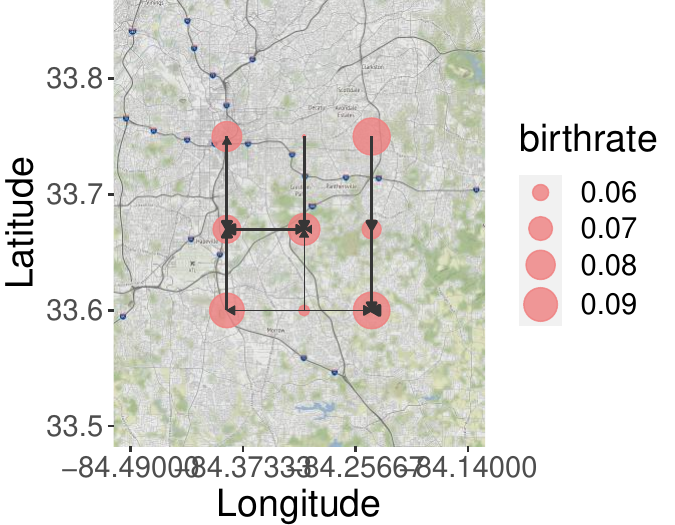}}
  \subfigure[$s=10$]{\includegraphics[scale=0.41]{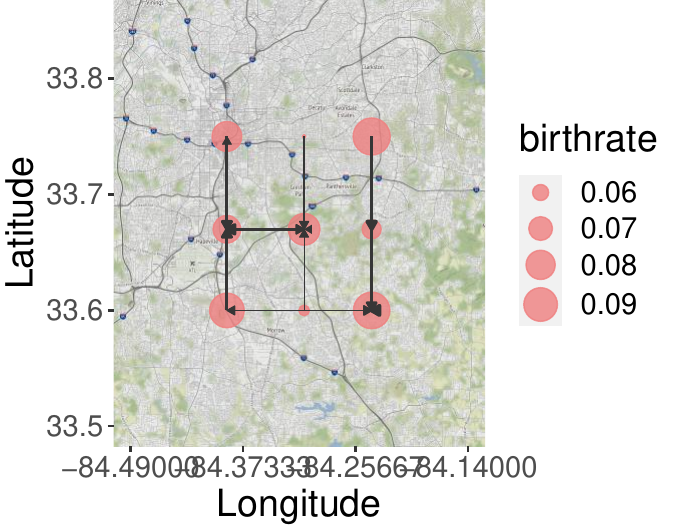}}
  \subfigure[$s=1$]{\includegraphics[scale=0.41]{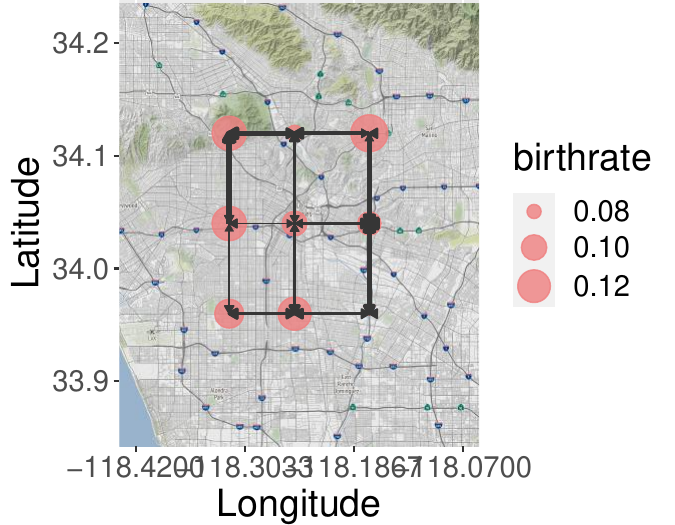}}
  \subfigure[$s=3$]{\includegraphics[scale=0.41]{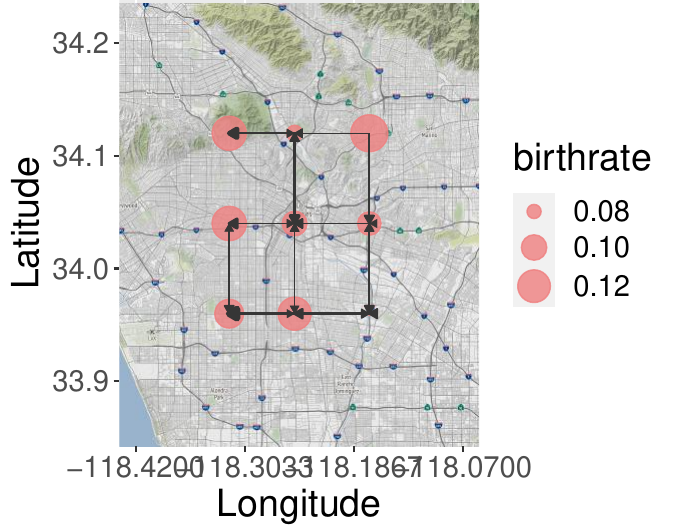}}
  \subfigure[$s=5$]{\includegraphics[scale=0.41]{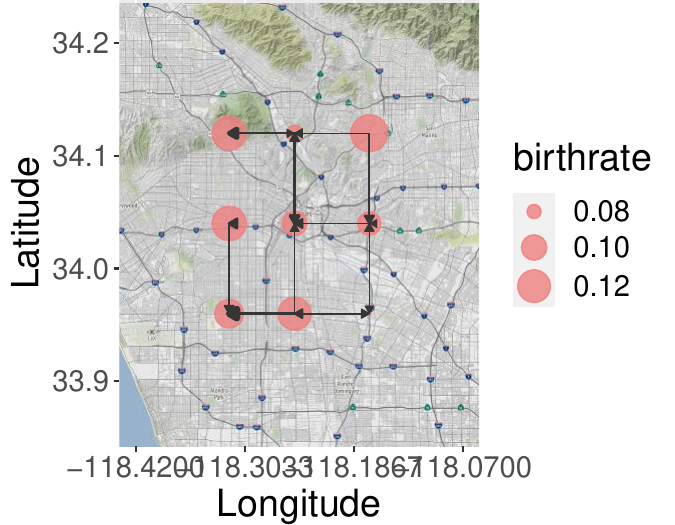}}
  \subfigure[$s=1$]{\includegraphics[scale=0.37]{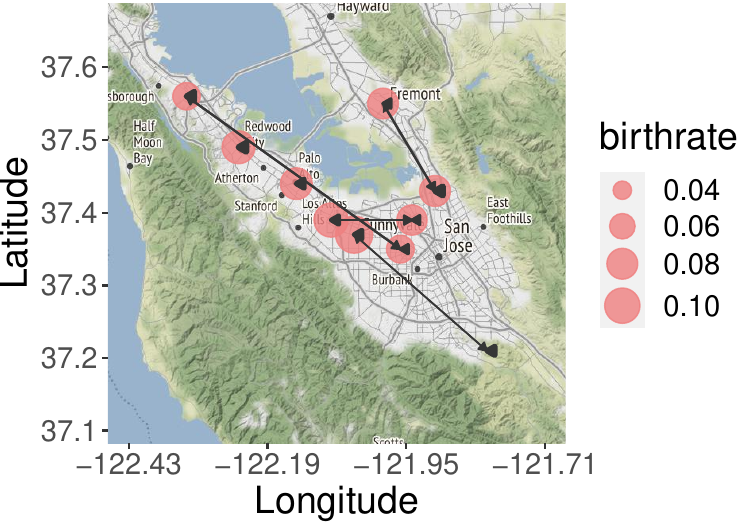}}
  \subfigure[$s=5$]{\includegraphics[scale=0.37]{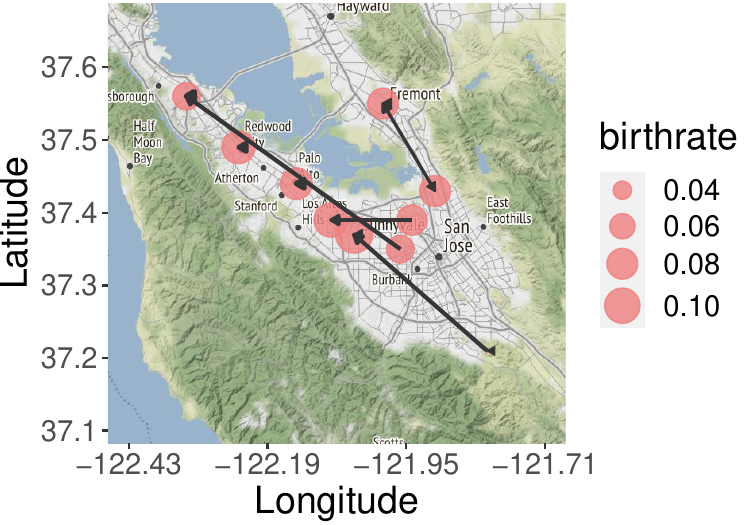}}
  \subfigure[$s=10$]{\includegraphics[scale=0.37]{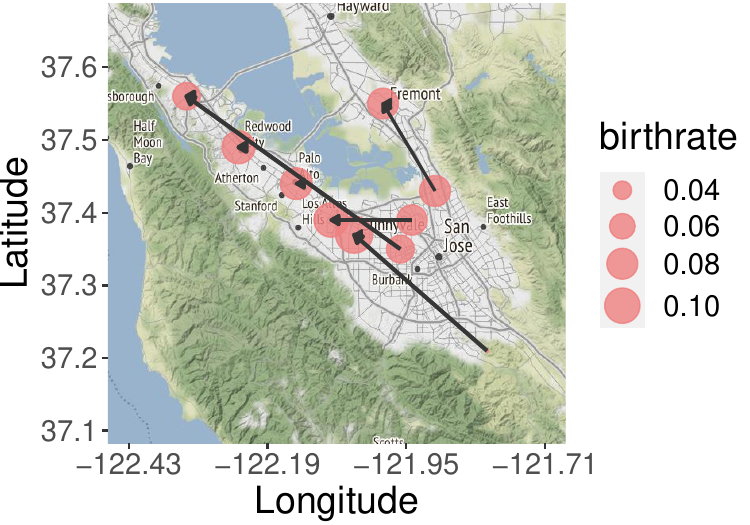}}
  \caption{Visualization of ML birthrate (red circles) and interactions (black lines) recovery (shown in Figure \ref{fig:5_2_beta_single} over time and locations) on terrain map. To make sure the edges are visible when $s>1$, we magnify the edge weight of Atlanta (a-c) and North California (g-i) estimates five times and of Los Angeles (d-f) estimates three times.}
  \label{fig:5_2_grid_single}
\end{figure}

\vspace{0.1in}
\noindent \textit{(c) Empirical examination of performance guarantee of estimates: }Using data from Atlanta, Los Angeles, and North California, we compute the norm bound under $p=1,2,\infty$ efficiently and provide the numerical values for the bound as mentioned in Theorem \ref{thm:LS-bound}. Bounds from using Theorem \ref{thm:ML-bound} are not shown since they contain the unknown parameter $\rho$. However, they are in the same order as bounds from Theorem \ref{thm:LS-bound}.

Table \ref{tab:performance_guarantee} presents the results under these three choices of the vector norm. Here, $\kappa=K+dK^2$, $N=365$, and we choose $\epsilon=0.1$. Results show that increasing $p$ leads to a larger condition number and, as a result, a tighter bound around the norm of differences. We notice the condition number in North California when $p=2$ is 0, possibly since the norm of the minimal eigenvalue of $A[\omega^N]$ under that case is 0 under numerical rounding (to five decimal places). 


\begin{table}[b]
    \centering
    \caption{Condition number $\theta_p(A[\boldsymbol\omega^n])$ and the resulting bounds on $\Vert \hat{\boldsymbol \beta} - \boldsymbol \beta \Vert_p$ from $p=1,2,\infty,$ using Theorem \ref{thm:LS-bound}.}
    \label{tab:performance_guarantee}
    \resizebox{\linewidth}{!}{
    
\begin{tabular}{p{3cm}p{2cm}p{2cm}p{2cm}p{2cm}p{2cm}p{2cm}p{2cm}}
     \hline
     & \multicolumn{2}{c}{$p=1$} & \multicolumn{2}{c}{$p=2$} & 
     \multicolumn{2}{c}{$p=\infty$}\\
     \hline
     \hline
Location & $\theta_1(A[\boldsymbol\omega^n])$ & \makecell[l]{Bound on\\$\Vert \hat{\boldsymbol \beta} - \boldsymbol \beta \Vert_1$} & $\theta_2(A[\boldsymbol\omega^n])$ & \makecell[l]{Bound on\\$\Vert \hat{\boldsymbol \beta} - \boldsymbol \beta \Vert_2$}& $\theta_{\infty}(A[\boldsymbol\omega^n])$ & \makecell[l]{Bound on\\$\Vert \hat{\boldsymbol \beta} - \boldsymbol \beta \Vert_{\infty}$} \\
\hline
Atlanta & 9e-05 & 24.53408 & 0.00668 & 2.84776 & 0.29315 & 0.42988 \\
Los Angeles & 1.6e-04 & 15.23908 & 0.01175 & 1.77828 & 0.4274 & 0.29485 \\
North California & 3e-05 & 35.89018 & 0.0 & $\infty$ & 0.41096 & 0.30665 \\
     \hline
    \end{tabular}
    }
\end{table}

\subsection{Prediction performance: } We use techniques in Section \ref{sec:anomaly_detection} to make a sequential prediction in Atlanta, Los Angeles, and Palo Alto (one representative example in North California) and evaluate the metrics described in Section \ref{exp:metric}. From Table \ref{tab:5_2_mod_accuracy}, our model yields very high $F_1$ scores in all three places, with generally better performance in Atlanta and Palo Alto than in Los Angeles. In particular, the performance of the dynamic threshold measured in the $F_1$ score in Palo Alto (i.e. North California) is much higher than that of the static threshold. The performance between static vs. dynamic thresholds is comparable in the other two cities. In addition, Table \ref{tab:baseline} compares the performance of our model on Atlanta data with two additional baselines: the logistic regression and the linear regression. Our method yields significantly higher $F_1$ scores.

To better visualize the estimated probabilities' trajectories, we plot the probability estimates, dynamic thresholds, and the prediction intervals in Figure \ref{fig:5_2_point_pred}. We also use the bootstrap confidence interval for $\boldsymbol \beta$, which was shown in Figure \ref{fig:5_2_beta_single}, to compute the confidence interval for $p_{tk}$. The prediction intervals at 95$\%$ confidence level concentrate closely around the estimates, even if the Bonferroni correction was used. Based on the figure, dynamic thresholds and probability estimates (red/blue dots) have similar rise-and-fall patterns. At the same time, the trajectory of probabilities also highly correlates with the actual ramping events (black dots). Such high correlation enables accurate prediction. Together with results in Table \ref{tab:5_2_mod_accuracy}, it is thus clear that dynamic thresholds yield decision boundaries that better distinguish the ramping events. Moreover, dynamically computing thresholds are more computationally efficient than computing statics ones, as we need not search over a grid of values as in Figure \ref{fig:5.2:accu_metric}.

\begin{table}[t]
    \centering
    \caption{Sequential prediction performance for single-state model: precision, recall, and $F_1$ score in three downtown under static vs. dynamic threshold. The highest value among the four methods (LS or MLE combined with static or dynamic threshold) is in bold.}
    \label{tab:5_2_mod_accuracy}
    \resizebox{\linewidth}{!}{\begin{tabular}{p{2cm}p{2cm}p{2cm}p{2cm}p{2cm}p{2cm}p{2cm}p{2cm}}
     \hline
     &&\multicolumn{3}{c}{Least Square} &\multicolumn{3}{c}{Maximum Likelihood} \\
     Location & $\tau$ & Precision & Recall & $F_1$ & Precision & Recall & $F_1$\\
     \hline
\multirow{2}{*}{Atlanta} & Static & 0.79 & 0.95 & 0.86 & \textbf{0.97} & \textbf{0.98} & \textbf{0.97} \\
 & Dynamic & 0.82 & 0.95 & 0.88 & 0.96 & 0.96 & 0.96 \\
\multirow{2}{*}{Los Angeles} & Static & 0.91 & 0.78 & 0.84 & 0.91 & 0.81 & 0.86 \\
 & Dynamic & \textbf{0.91} & 0.83 & 0.87 & 0.85 & \textbf{0.91} & \textbf{0.88} \\
\multirow{2}{*}{Palo Alto} & Static & \textbf{0.95} & 0.60 & 0.73 & 0.94 & 0.52 & 0.67 \\
 & Dynamic & 0.78 & \textbf{0.89} & \textbf{0.83} & 0.76 & 0.88 & 0.82 \\
     \hline
    \end{tabular}}
\end{table}


\begin{table}[htbp]
    \centering
    \caption{Baseline comparison for single-state model: we compare the $F_1$ scores of our method with two baselines methods, where ours yield significantly higher scores.}
    \label{tab:baseline}
    \resizebox{\linewidth}{!}{\begin{tabular}{p{2cm}p{2cm}p{2cm}p{2cm}p{2cm}p{2cm}}
     \hline
     &&\multicolumn{1}{c}{Least Square} &\multicolumn{1}{c}{Maximum Likelihood} &\multicolumn{1}{c}{Linear Regression}&\multicolumn{1}{c}{Logistic Regression}\\
     Location & $\tau$ & $F_1$ score  & $F_1$ score &$F_1$ score  & $F_1$ score\\
     \hline
\multirow{2}{*}{Atlanta} & Static& 0.96 & 0.97 & 0.64  &  0.67\\
 & Dynamic & 0.88 & 0.96 & 0.65 & 0.67  \\
     \hline
    \end{tabular}}
\end{table}

\begin{figure}[b]
    \centering
    \subfigure[Palo Alto (Static)]{\includegraphics[scale=0.21]{{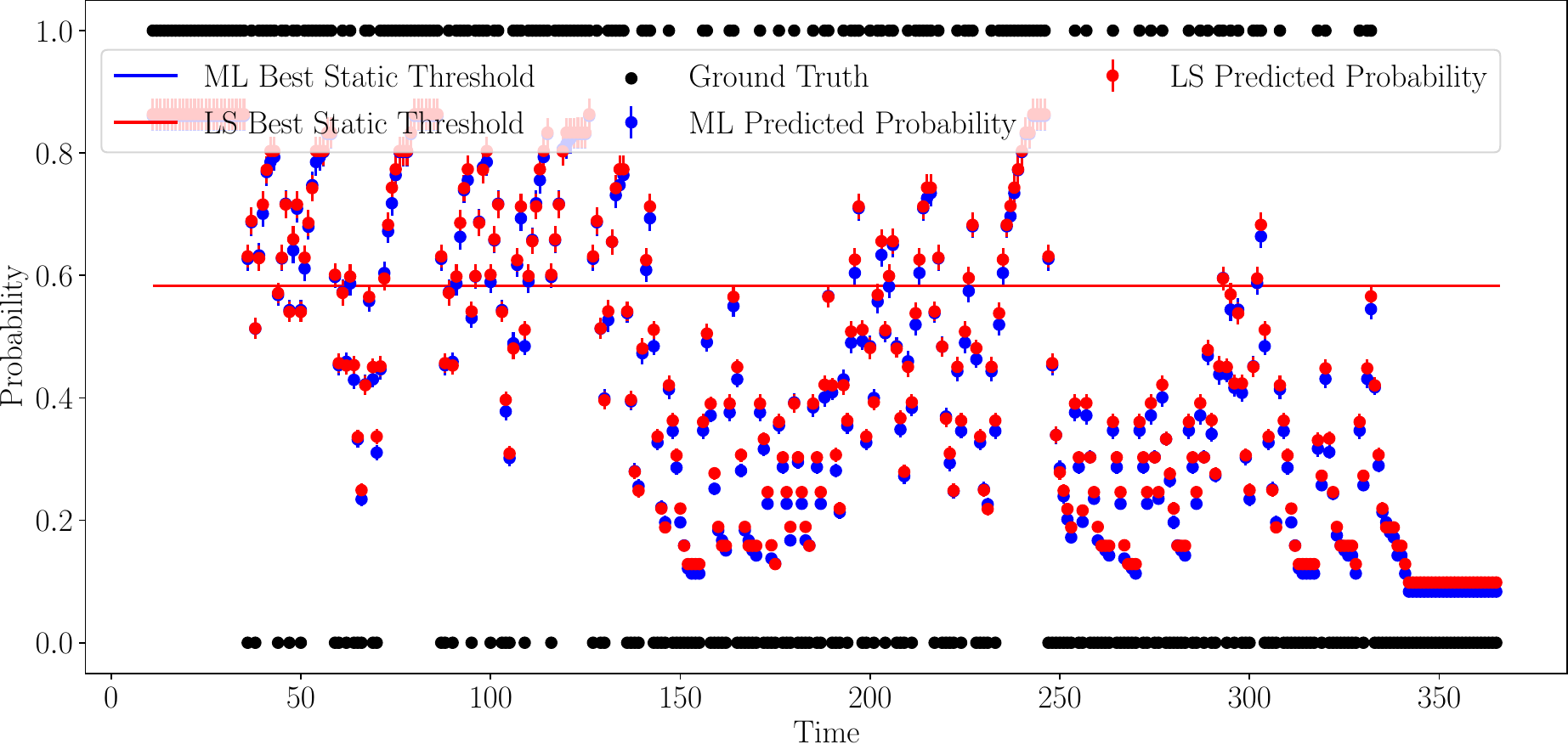}}}
    \subfigure[Palo Alto (Dynamic)]{\includegraphics[scale=0.21]{{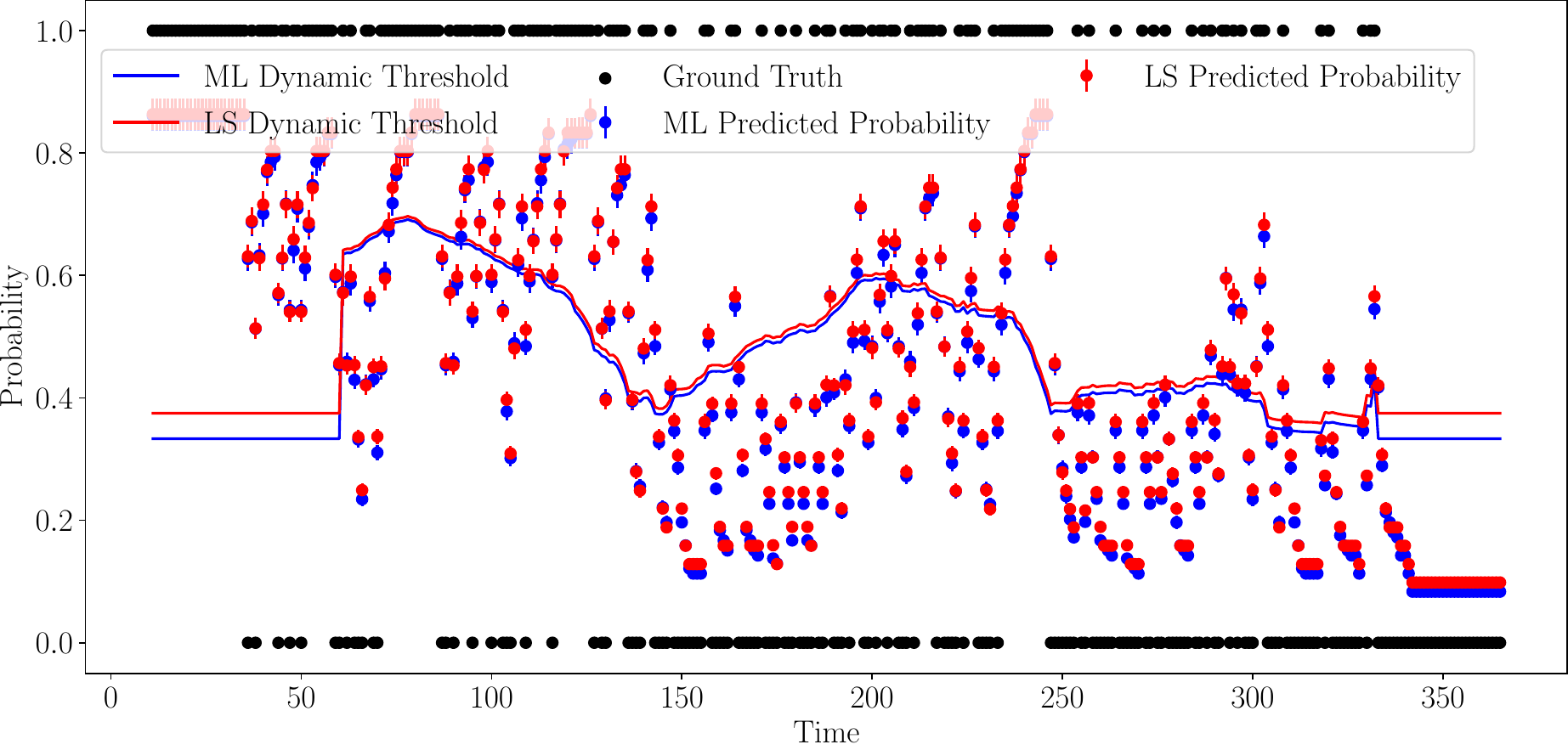}}}
    \vspace{-0.3cm}
    \caption{Prediction intervals for online point prediction of probabilities for single-state ramping events using LS (red dots) and ML (blue dots), compared with true ramping events (black dots). The left column is generated using a static threshold, and the right column uses dynamic thresholds. Dynamic thresholds separate the set of predicted ramping vs. normal probabilities better than static thresholds do.}
    \label{fig:5_2_point_pred}
\end{figure}

\subsection{Extension: Multi-state Modeling} We perform similar experiments as we did in the earlier section. Figures and Tables appear in Appendix B.1, and we summarize the findings.

\vspace{0.1in}
\noindent \textit{(a) Visualize these influences on terrain map:} We only discuss MLE results similar to what we did in the earlier section. In multi-state modeling, there are two ramping states (e.g., positive and negative) besides the non-ramping state. Hence, interactions among positive or negative ramping events must be separately considered. In addition, we note that there is barely any interaction between negative and positive ramping events, so we omit the discussion of such influences. We denote patterns among positive (resp. negative) ramping events as positive (resp. negative) patterns and denote interaction among positive (resp. negative) ramping events as positive-to-positive (resp. negative-to-negative) interactions. Upon analyzing these recovered parameters on the map, we find the following patterns:

\vspace{0.1in}
\textit{(1) Positive patterns, Figure 9}: (1) In Atlanta, only the lower-left sensor has a strong birthrate. Regarding interaction parameters, at $s=1$, horizontal estimates are strong at all latitudes. At $s=5$, some strong estimates at the beginning disappeared, but overall directions and relative magnitude stay the same. At $s=10$, the pattern persists. Some influences from west to east disappear. (2) In Los Angeles, sensors closer to downtown have stronger birthrates. Regarding interaction parameters, at $s=1$, it influences flow toward the middle from all directions. At $s=5$ and 10, interactions decay but exist almost everywhere, especially west of downtown. (3) In North California, we observe high birthrates on the northwest and east sides of the valley, but other places have small birthrates. Regarding interaction parameters, at $s=1$, influences flow towards the direction of silicon valley (i.e., northwest). There are clear influences from different cities themselves. At $s=5$ and 10, we see a similar pattern and direction of influence but the magnitude of influences decreases dramatically.

\vspace{0.1in}
\textit{(2) Negative patterns, Figure 10}: (1) In Atlanta, birthrates are stronger at the corners and downtown than in other places. Regarding interaction parameters, at $s=1$, latitude-wise influences are weaker than longitude-wise ones. We also see some flow from west to east. At $s=5$ and 10, influences almost disappear, even under magnification. (2) In Los Angeles, birthrates have similar magnitudes except those in the southeast corner. Regarding interaction parameters, at $s=1$, influences exist more along latitudes than along longitude. At $s=5$ and 10, we see fast decay in magnitude, even under strong magnification. Only influences south of downtown and northeast of downtown exist. (3) In North California, birthrates are similar in magnitude except in South San Jose. Regarding interaction parameters, at $s=1$, results are mostly similar to ones in the positive ramping events modeling, except for the strong influence from Sunnyvale to San Jose (reversed in the positive ramping event case). At $s=5$, we can only see influence from Sunnyvale to Mountain View and Sunnyvale to South San Jose, besides some past influences from a certain location onto itself. Some of these also decrease a lot in magnitude. At $s=10$, only Sunnyvale to Mountain View exists, besides influences at certain locations onto itself.

We also identified some additional findings: (1) Upon comparing the positive and negative patterns, we found that interactions among positive ramping events persist more prolonged than those in negative ones. In addition, the directions and the flow of the interaction also differ among these two sets of events. (2) Within positive patterns, interactions among sensors in Atlanta can be sparser than those in Los Angeles; in single-state recovery, Atlanta interactions are denser than Los Angeles results. Birthrates are comparable in these two places. On the other hand, interaction among sensors in North California also becomes weaker in the multi-state recovery than in the single-state recovery. The sparsity of interactions in North California is similar to that in Los Angeles. Birthrate results in North California are still closer to those in Los Angeles. In contrast, interactions at all locations are relatively weak within negative patterns after $s>1$. However, birthrates at all these locations are more comparable to each other.

\vspace{0.1in}
\noindent \textit{(b) Prediction Performance:} Table 4 and Table 5 show prediction performance for positive and negative ramping event detection under the performance metrics described in Section \ref{exp:metric}. Predictions using MLE under dynamic threshold lead to the highest $F_1$ scores among 3 out of 6 situations. Each situation denotes results in a city under different prediction methods for either positive or negative ramping event detection. We expect such behavior to happen because MLE estimates typically enjoy better theoretical performance under large sample sizes than LS ones, and dynamic thresholds are expected to outperform static ones. In addition, we note that the best prediction performances under multi-state modeling are better than those under single-state modeling, indicating that the point process model performs better when the type of ramping events is analyzed separately 

Figure 11 shows the point predictions, static or dynamic thresholds, and actual ramping events, providing further insights behind model performances. We use the same bootstrap techniques to provide prediction intervals on predicted point probabilities. Similar to the single-state results in Figure \ref{fig:5_2_point_pred}, dynamic thresholds closely follow the pattern in point predictions, which highly correlate with the actual ramping events. In addition, positive or negative ramping events often occur in sequence, reflecting the temporal dependencies among such events.

\subsection{Extension: Seasonal Modeling} 

We treat seasonality as an essential factor affecting ramping events' distribution. As a result, we next fit separate models based on data in different seasons to better capture the underlying season-dependent data distribution and yield more accurate ramping event prediction. 
In particular, we fit our point-process models using the LS or MLE method separately on data between January (01/01) till March (03/31) and on data from July (07/01) till September (09/30) data in 2017, resulting in four separate models. The test data for each model are from the same months in 2018. We notice that fitting different models under these seasons can sometimes further improve prediction performances. Meanwhile, the interaction parameters do not have significant differences across seasons in single-state modeling; in multi-state modeling, interaction parameters behave differently because of the vast differences in the number of positive and negative abnormal events in each season.

We create figures and tables similar to before in Appendix B.2, by comparing estimation and prediction results from different seasonal models side by side. We show single-state results before multi-state ones and no longer show bootstrap prediction intervals because we mainly want to compare the estimated parameters and prediction results between seasonal models.

\vspace{0.1in}
\noindent \textit{(a) Single-state results:} We first compare and contrast the estimated parameters. Figure 12 shows that the birthrates are similar between the two models, but the interaction parameters may behave very differently. Figure 13 visualizes these parameters on terrain maps. The directions of interactions are similar in both models, but there are obvious differences between the magnitude of interaction and birthrate parameters. Then, Table 6 shows prediction performances, where the $F_1$ scores in any season are better than earlier single-state results. Finally, Figure 14 shows the estimated probabilities, ground-truth events, and static/dynamic thresholds. The interpretations are similar to before. Moreover, we remark that because our dynamic thresholds defined in \eqref{thres} depend on the past anomalies (yet there are very few in each season), the dynamic thresholds are much flatter in seasonal models.

\vspace{0.1in}
\noindent \textit{(b) Multi-state results:} Figures 15 and 16 visualize on terrain maps the estimated parameters from positive and negative ramping events. We note that the interaction parameters are very different across seasons due to huge differences in the number of positive and negative ramping events in each season. In particular, during winter (i.e., January---March), there are many more positive ramping events than negative, and the reverse situation occurs during summer (i.e., July---September). We suspect such situations occur because radiation is typically very low during winter. Hence, a negative ramping event rarely occurs, which is defined as a certain radiation level with a much lower magnitude than other levels according to our definition. The reverse reasoning applies to ramping events in summer.

We then show the sequential prediction performance in Table 7. Some cells are marked as 0 because there is no ramping event of a given type during that season. The highest values of these accuracy measures are higher than that of the earlier multi-state model, revealing the benefit of fitting seasonal models. Finally, Figure 17 shows the estimated probabilities, ground-truth events, and static/dynamic thresholds, and has very similar patterns as Figure \ref{fig:5_2_point_pred}, so we omit the discussions.

\section{Conclusion and future works}
\label{sec:conclusion}

In this paper, we introduce a new framework for modeling spatio-temporal abnormal events in solar radiation, which consists of three steps: ramping event extraction, parameter estimation, and ramping event prediction. We apply the method to various regions in the US under different settings (i.e., multi-state, seasonal data) to show that the model is flexible and robust, yielding physically meaningful results for interpretation. We believe the proposed model is a general framework for other spatio-temporal data modeling tasks.

There are a few possible extensions. In terms of data, we can consider other definitions of ramping events and more categories per event, which may include the severity of each ramping event. We can also include link functions and other regularization techniques for improved performances during modeling. For prediction, we can improve dynamic threshold designs via techniques in sequential decision making \citep{6208875}.

\section*{Acknowledgements}

The authors would like to thank the anonymous referees, an Associate
Editor and the Editor for their constructive comments that improved the
quality of this paper.

The first three authors are supported by NSF CCF-1650913, NSF DMS-1938106, NSF DMS-1830210, and CMMI-2015787.



\bibliographystyle{imsart-nameyear} 
\bibliography{ref.bib}       

%
%
%

\clearpage

\begin{appendix}
\section{Proof}\label{appendix:MLE_proof}

Note that $F_{\omega^N}(x)$, the gradient of MLE objective, can be derived as
\[F_{\omega^N}(\boldsymbol\beta)=\nabla L(\boldsymbol \beta)= \frac{1}{N} \sum_{t=1}^{N} \eta\left(\omega_{t-d}^{t-1}\right) \theta\left(\eta^{T}\left(\omega_{t-d}^{t-1}\right) x, \omega_t\right),\]
with
\[\theta(z, \omega)=\nabla_{z} \mathcal{L}_{w}(z)=-\sum_{k=1}^{K}\left[\sum_{p=1}^{M} \frac{[w]_{k p}}{[z]_{k p}} e^{k p}-\frac{1-\sum_{p=1}^{M}[w]_{k p}}{1-\sum_{p=1}^{M}[z]_{k p}} \sum_{p=1}^{M} e^{k p}\right]
\]
Thus, defining $\xi_t:=\eta\left(\omega_{t-d}^{t-1}\right)\theta\left(\eta^{T}\left(\omega_{t-d}^{t-1}\right) \beta, \omega_{t}\right)$, we have $\mathbb{E}[\xi_t]=0$. Therefore, $\{\xi_t\}$ forms a martingale-difference sequence and $F_{\omega^N}(x)=\frac{1}{N} \sum^N_{t=1} \xi_t.$ $F_{\omega^N}(x)$ will be abbreviated as $F(x)$ from now on.

To make sure $F(\boldsymbol\beta)$ is continuous, we also require $\boldsymbol\beta \in B_{\rho}$, for $B_{\rho}$ being a $\rho$-strengthened version of constraints (\ref{eq:multi_constraint}) as follows:

\begin{equation} \label{eq:ml_constraint}
    B_{\rho}:=\left\{
    \begin{array}{l}
         \varrho \leq \beta_{k}(p)+\sum_{s=1}^{d} \sum_{\ell=1}^{K} \min _{0 \leq q \leq M} \beta_{k \ell}^{s}(p, q), 1 \leq p \leq M, 1 \leq k \leq K \\
         1-\varrho \geq \sum_{p=1}^{M-1} \beta_{k}(p)+\sum_{s=1}^{d} \sum_{\ell=1}^{K} \max _{0 \leq q \leq M} \sum_{p=1}^{M} \beta_{k \ell}^{s}(p, q), 1 \leq k \leq K.
    \end{array}\right\}
\end{equation}

Below is the bound on the $l_{\infty}$ norm of $F(\boldsymbol\beta)$ that allow us to analyze the performance guarantee of our estimate $\hat{\boldsymbol \beta}^{\rm{MLE}}$ as in Theorem \ref{thm:ML-bound}. 

\begin{lemma}\label{lem:main_lemma}
For all $\epsilon \in(0,1)$ vector $F(\boldsymbol\beta)$ satisfies

\[\operatorname{Prob}_{\omega^{N}}\left\{\left\|F(\boldsymbol\beta)\right\|_{\infty} \geq \Theta \sqrt{\frac{2\ln (2 \kappa / \epsilon)}{N}}\right\} \leq \epsilon. \rho \in (0,1)\]

where $\Theta$ is a bound on $||\xi_t||_{\infty}$

\end{lemma}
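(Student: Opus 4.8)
The plan is to reduce the $\ell_\infty$ tail bound to a coordinatewise scalar martingale concentration inequality followed by a union bound. Recall from the setup that $F(\beta)=\frac{1}{N}\sum_{t=1}^N\xi_t$ with $\xi_t=\eta(\omega_{t-d}^{t-1})\,\theta\!\left(\eta^\top(\omega_{t-d}^{t-1})\beta,\omega_t\right)$, and that $\{\xi_t\}$ is a martingale-difference sequence with respect to the natural filtration $\mathcal F_{t-1}=\sigma(\omega^{t-1})$, i.e.\ $\mathbb E[\xi_t\mid\mathcal F_{t-1}]=0$. Fix a coordinate $i\in\{1,\dots,\kappa\}$ and set $D_t^{(i)}:=[\xi_t]_i$. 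By hypothesis $|D_t^{(i)}|\le\|\xi_t\|_\infty\le\Theta$ almost surely, and $\{D_t^{(i)}\}$ is itself a bounded martingale-difference sequence adapted to $\{\mathcal F_t\}$.

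First I would invoke the Azuma--Hoeffding inequality for bounded martingale differences: since $|D_t^{(i)}|\le\Theta$ for all $t$, for any $s>0$,
\[
\operatorname{Prob}\left\{\left|\tfrac1N\sum_{t=1}^N D_t^{(i)}\right|\ge s\right\}\le 2\exp\left(-\frac{Ns^2}{2\Theta^2}\right),
\]
which follows from the standard bound $\operatorname{Prob}\{|\sum_{t=1}^N D_t|\ge\lambda\}\le 2\exp(-\lambda^2/(2\sum_t c_t^2))$ with $c_t=\Theta$ and $\lambda=Ns$. Then I would take a union bound over the $\kappa$ coordinates, using $\|F(\beta)\|_\infty=\max_i|[F(\beta)]_i|$:
\[
\operatorname{Prob}\left\{\|F(\beta)\|_\infty\ge s\right\}\le\sum_{i=1}^\kappa\operatorname{Prob}\left\{|[F(\beta)]_i|\ge s\right\}\le 2\kappa\exp\left(-\frac{Ns^2}{2\Theta^2}\right).
\]
Finally, choosing $s=\Theta\sqrt{2\ln(2\kappa/\epsilon)/N}$ makes the right-hand side exactly $\epsilon$, which is the claimed statement.

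The only substantive ingredient is the martingale structure, and that has already been established in the excerpt: conditioning on $\omega^{t-1}$, the law of $\omega_t$ is governed by $\beta$ through the generalized Bernoulli model, and $\theta(\cdot,\cdot)$ is the gradient of the conditional log-loss evaluated at the true parameter, so $\mathbb E[\theta(\eta^\top(\omega_{t-d}^{t-1})\beta,\omega_t)\mid\mathcal F_{t-1}]=0$; since $\eta(\omega_{t-d}^{t-1})$ is $\mathcal F_{t-1}$-measurable, $\mathbb E[\xi_t\mid\mathcal F_{t-1}]=0$ follows. Given this, the rest is the routine Azuma-plus-union-bound calibration above, so there is no real obstacle in the general statement. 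The specialization back to Lemma~\ref{lem:main_lemma_alt} requires only substituting the explicit value $\Theta=1/\rho$, which is obtained by bounding $\|\theta(\cdot,\cdot)\|$ on the $\varrho$-strengthened feasible set $B_\rho$ — where the denominators $[z]_{kp}$ and $1-\sum_p[z]_{kp}$ are bounded below by $\rho$ — together with the fact that $\eta(\omega_{t-d}^{t-1})$ has $0/1$ entries; if anything, controlling that constant uniformly is the one place where care is needed.
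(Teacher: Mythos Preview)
Your proposal is correct and follows essentially the same route as the paper: both fix a coordinate, apply a bounded-martingale-difference concentration bound (the paper derives it explicitly via the conditional Hoeffding MGF bound, induction, and Chernoff, while you invoke Azuma--Hoeffding as a named inequality), then take a union bound over the $\kappa$ coordinates and calibrate the threshold. The paper also spells out the $\Theta=1/\rho$ bound on $\|\xi_t\|_\infty$ inside the proof, which you correctly identify as the only place requiring care in the specialization.
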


\begin{proof}

We first provide a simple bound on $||\xi_t||_{\infty}$ and then link $\Theta$ to $||\xi_t||_{\infty}$ under MLE with identity-link.

First, 
\begin{align*}
    ||\xi_t||_{\infty}
    &=||\eta\left(\omega_{t-d}^{t-1}\right)\theta\left(\eta^{T}\left(\omega_{t-d}^{t-1}\right) \beta, \omega_{t}\right) ||_{\infty}\\
    &\leq||\theta\left(\eta^{T}\left(\omega_{t-d}^{t-1}\right) \beta, \omega_{t}\right)||_{\infty} && \text{By the definition of $\eta$}\\ 
    & = ||\sum_{k=1}^{K}\left[\sum_{p=1}^{M} \frac{[w]_{k p}}{[z]_{k p}} e^{k p}-\frac{1-\sum_{p=1}^{M}[w]_{k p}}{1-\sum_{p=1}^{M}[z]_{k p}} \sum_{p=1}^{M} e^{k p}\right]||_{\infty} &&\text{\makecell[l]{Let $z:=\eta^T\beta, \omega:=\omega_t$ \\ for simpler notation}}\\
    & \leq ||\sum_{k=1}^{K}\sum_{p=1}^{M} \frac{[w]_{k p}}{[z]_{k p}} e^{k p}||_{\infty} &&\text{$1 \geq \sum_{p=1}^{M}[w]_{k p} \& \sum_{p=1}^{M}[z]_{k p}$ } \\
    & \leq ||\sum_{k=1}^{K}\sum_{p=1}^{M} \frac{1}{\rho} e^{k p}||_{\infty} && \text{$0 \leq [\omega]_{kp}\leq 1$ and $[z]_{kp} \geq \rho$ by definition}\\
    &=1/\rho &&\text{$\sum_{k=1}^{K}\sum_{p=1}^{M} e^{k p}$ is the vector of all 1's in $\mathbf{R}^{K M}$}
\end{align*}

Also, denoting by $\mathbf{E}_{\mid \omega^{t}}$ the conditional expectation given $\omega^t$ and $i=1,\ldots,\kappa$, we have

$$
\begin{aligned}
\mathbf{E}_{\omega^{t+1}}\left\{\exp \left\{\sum_{s=1}^{t+1} \gamma\left[\xi_{s}\right]_{i}\right\}\right\} &=\mathbf{E}_{\omega^{t}}\left\{\exp \left\{\sum_{s=1}^{t} \gamma\left[\xi_{s}\right]_{i}\right\} \mathbf{E}_{\omega_{t+1} \mid \omega^{t}}\left\{\exp \left\{\gamma\left[\xi_{t+1}\right]_{i}\right\}\right\}\right\} \\
& \leq \mathbf{E}_{\omega^{t}}\left\{\exp \left\{\gamma \sum_{s=1}^{t}\left[\xi_{s}\right]_{i}\right\} \exp \left\{\gamma^{2} / 2\rho^2\right\}\right\},
\end{aligned}
$$

where the last inequality is due to the Hoeffding's inequality and, from the above bound that the conditional, $\omega^t$ given, distribution of $\left[\xi_{t+1}\right]_{i}$ is zero mean and is supported on $[-1/\rho, 1/\rho]$. By induction, we have

$$
\mathbf{E}_{\omega^{N}}\left\{\exp \left\{\gamma\left[\sum_{t=1}^{N} \xi_{t}\right]_{i}\right\}\right\} \leq \exp \left\{N \gamma^{2} / 2\rho^2\right\}.
$$

Now using Chernoff bound, we have

$$
\begin{aligned}
\operatorname{Prob}_{\omega^{N}}\left\{[F(\boldsymbol\beta)]_i>\theta\right\}
&=\operatorname{Prob}_{\omega^{N}}\left\{\frac{1}{N}\left[\sum_{t=1}^{N} \xi_{t}\right]_{i}>\theta\right\} \\
& \leq \exp \{-\mu \theta\} \mathbf{E}_{\omega^{N}}\left\{\exp \left\{\mu \frac{1}{N}\left[\sum_{t=1}^{N} \xi_{t}\right]_{i}\right\}\right\} \\
& \leq \exp \left\{-\mu \theta+\frac{\mu^{2}}{2 \Tilde{N}}\right\}, 
\end{aligned}
$$

where $\Tilde{N}=N\rho^2$. Now, if we let $\theta=\gamma / \sqrt{\tilde{N}}$ and $\mu=\tilde{N} \theta,$ we simplify the above bound and use the union bound to get
$$
\operatorname{Prob}_{\omega^{N}}\left\{\|F(\boldsymbol\beta)\|_{\infty}>\gamma / \sqrt{N} \rho\right\} \leq 2 \kappa \exp \left\{-\gamma^{2} / 2\right\}, \quad \forall \gamma \geq 0
$$
Finally, we let $\gamma=\sqrt{2 \ln (2 \kappa / \epsilon)}$ and $\Theta=1 / \rho,$ where $1 / \rho$ is the bound on $\left\|\xi_{t}\right\|_{\infty} .$ Further
simplification then yields the following result that is easier to use:
\[
\operatorname{Prob}_{\omega^{N}}\left\{\left\|F(\boldsymbol\beta)\right\|_{\infty} \geq \frac{1}{\rho} \sqrt{\frac{2\ln (2 \kappa / \epsilon)}{N}}\right\} \leq \epsilon
\]
\end{proof}

Now, we can use Lemma \ref{lem:main_lemma} and $\theta_p(A)=\theta_p(\frac{1}{N}\sum_{t=1}^N \eta(\omega^{t-1}_{t-d})\eta^T(\omega^{t-1}_{t-d}))$, which is defined in (\ref{def:theta}), to prove Theorem \ref{thm:ML-bound}. 

\begin{proof}[Proof for Theorem \ref{thm:ML-bound}]
For simplicity, let $\hat{\boldsymbol \beta}:=\hat{\boldsymbol \beta}^{\rm{MLE}}$. Since $F(x)$ is the gradient of the convex and continuously differentiable log-likelihood $L_{\omega^N}(x)$, $F(x)$ is continuous and monotone. Thus, the estimator $\hat{\boldsymbol \beta}$ is a weak and strong solution to the variational inequality $VI[F,B_{\rho}]$ for $B_{\rho}$ defined in (\ref{eq:ml_constraint}). As a result, $\left< F(\boldsymbol \beta) -F (\hat{\boldsymbol \beta}),\boldsymbol \beta -\hat{\boldsymbol \beta}  \right> \geq 0$ and $F(\hat{\boldsymbol \beta})=0$. In fact, under our $\rho$-strengthened constraint on $\boldsymbol \beta$, we can prove that $\Tilde{F}(\eta^T x)$, which is defined so that $\eta \Tilde{F}(\eta^T x) = F(x)$ with $\eta\eta^T=\frac{1}{N}\sum_{t=1}^N \eta(\omega^{t-1}_{t-d})\eta^T(\omega^{t-1}_{t-d})$, satisfies $$\left< \Tilde{F}(\eta^T \boldsymbol \beta) -\Tilde{F}(\eta^T \hat{\boldsymbol \beta}),\eta^T(\boldsymbol \beta -\hat{\boldsymbol \beta)}  \right> \geq (1-\rho)^{-2} ||\eta^T(\hat{\boldsymbol \beta} -\boldsymbol \beta) ||^2_2. $$

Now, by the earlier definition of $\theta_p(A)=\theta_p(\eta\eta^T)$, we can see that $||\eta^T(\hat{\boldsymbol \beta} -\boldsymbol \beta) ||^2_2=(\hat{\boldsymbol \beta}^T -\boldsymbol \beta^T)A(\hat{\boldsymbol \beta} -\boldsymbol \beta) \geq \sqrt{\theta_1(A)\theta_p(A)} ||\hat{\boldsymbol \beta} -\boldsymbol \beta ||_1||\hat{\boldsymbol \beta} -\boldsymbol \beta ||_p$.

As a result, 
\begin{align*}
    ||F(\boldsymbol \beta)||_{\infty} ||\hat{\boldsymbol \beta} -\boldsymbol \beta ||_1
    &\geq \left< F(\boldsymbol \beta) -F (\hat{\boldsymbol \beta}),\boldsymbol \beta -\hat{\boldsymbol \beta}  \right>\\
    &=\left< \Tilde{F}(\eta^T \boldsymbol \beta) -\Tilde{F}(\eta^T\hat{ \boldsymbol \beta}),\eta^T(\boldsymbol \beta -\hat{\boldsymbol \beta)}  \right>\\  
    &\geq (1-\rho)^{-2} ||\eta^T(\hat{\boldsymbol \beta} -\boldsymbol \beta) ||^2_2\\
    & \geq \sqrt{\theta_1(A)\theta_p(A)} ||\hat{\boldsymbol \beta} -\boldsymbol \beta ||_1||\hat{\boldsymbol \beta} -\boldsymbol \beta ||_p
\end{align*}
We finish the proof by cancelling $||\hat{\boldsymbol \beta} -\boldsymbol \beta ||_1$ from the the final inequality and using Lemma \ref{lem:main_lemma} to bounding $||F(\boldsymbol \beta)||_{\infty}$.
\end{proof}

\clearpage
\section{Additional experiments}\label{appendix:exper}

\subsection{Multi-state ramping event}\label{sec:multi_ramp}

\textbf{} 

\noindent \textit{1. Visualize MLE influences on terrain map}

\begin{figure*}[htbp]
  \centering
  \subfigure[$s=1$]{\includegraphics[scale=0.41]{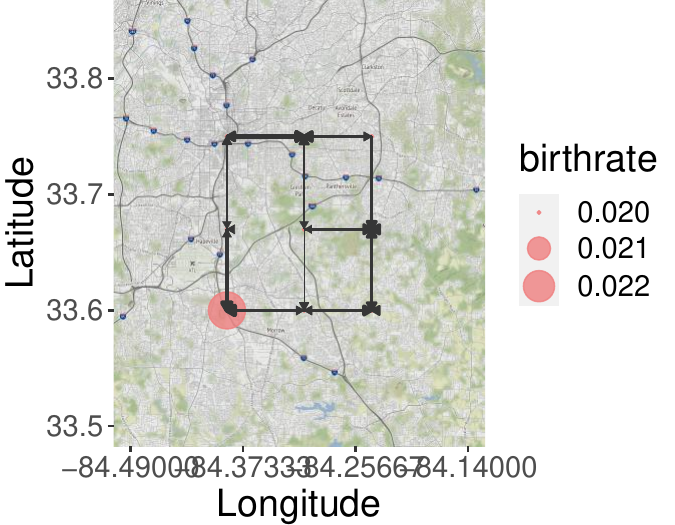}}
  \subfigure[$s=5$]{\includegraphics[scale=0.41]{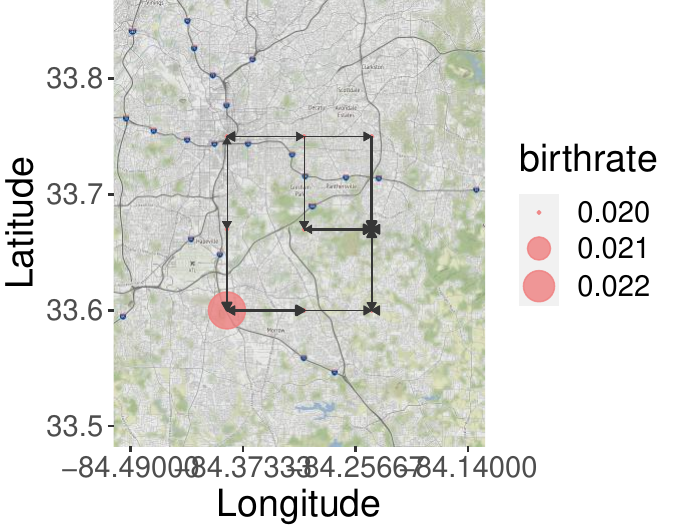}}
  \subfigure[$s=10$]{\includegraphics[scale=0.41]{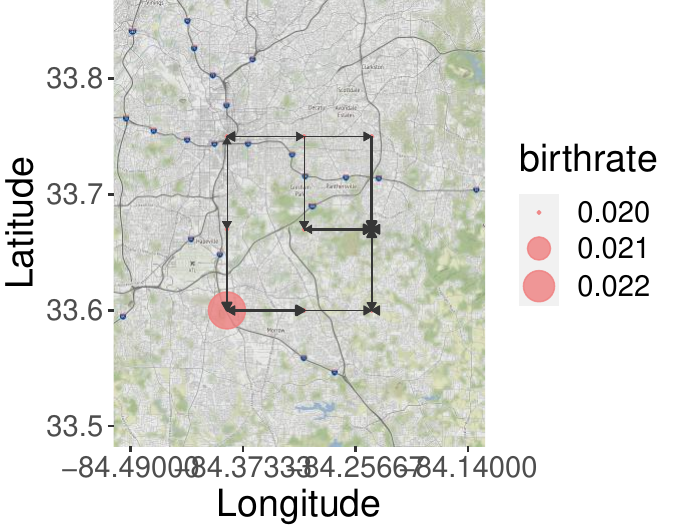}}
  \subfigure[$s=1$]{\includegraphics[scale=0.41]{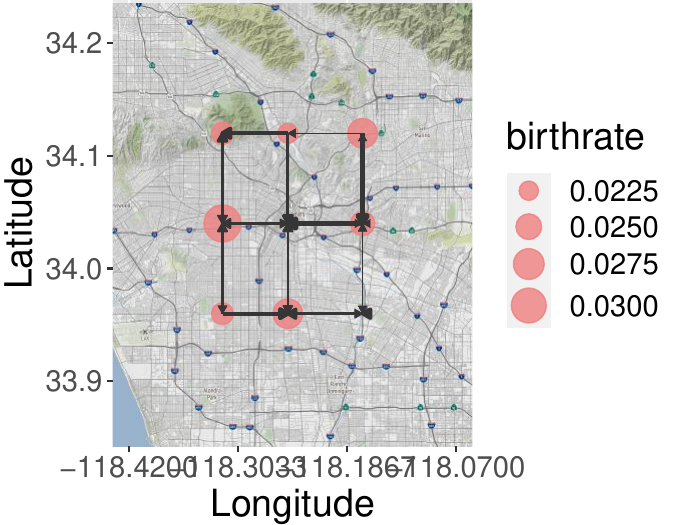}}
  \subfigure[$s=3$]{\includegraphics[scale=0.41]{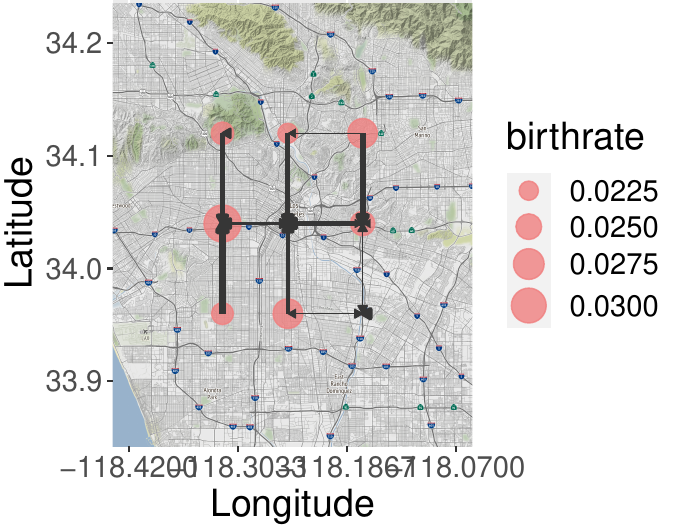}}
  \subfigure[$s=5$]{\includegraphics[scale=0.41]{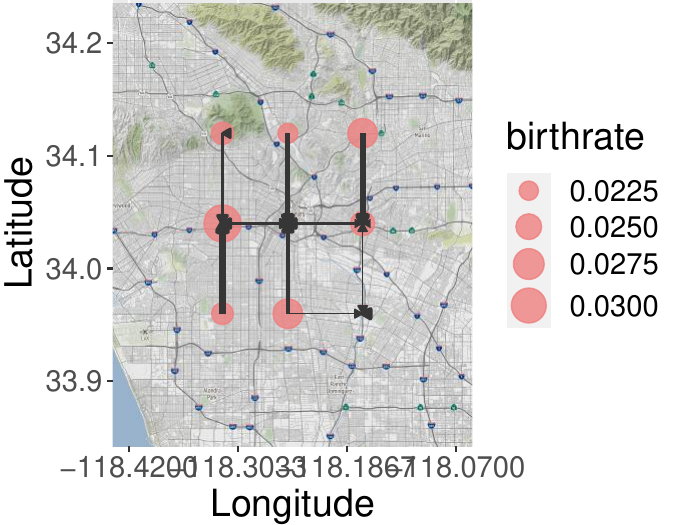}}
  \subfigure[$s=1$]{\includegraphics[scale=0.37]{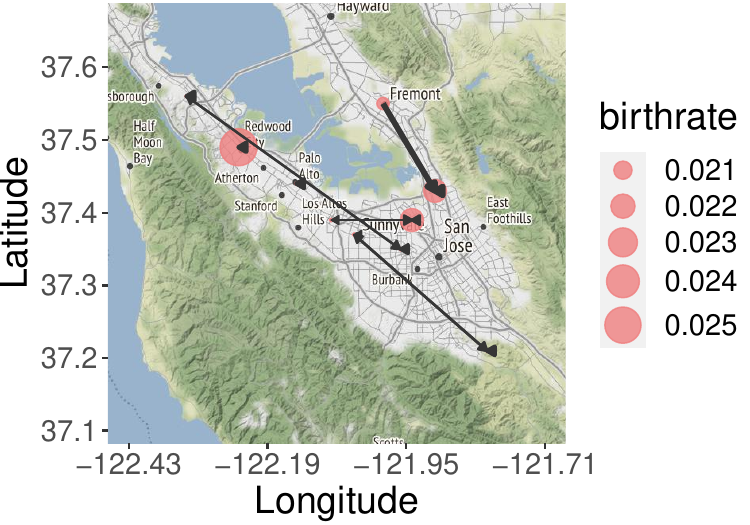}}
  \subfigure[$s=5$]{\includegraphics[scale=0.37]{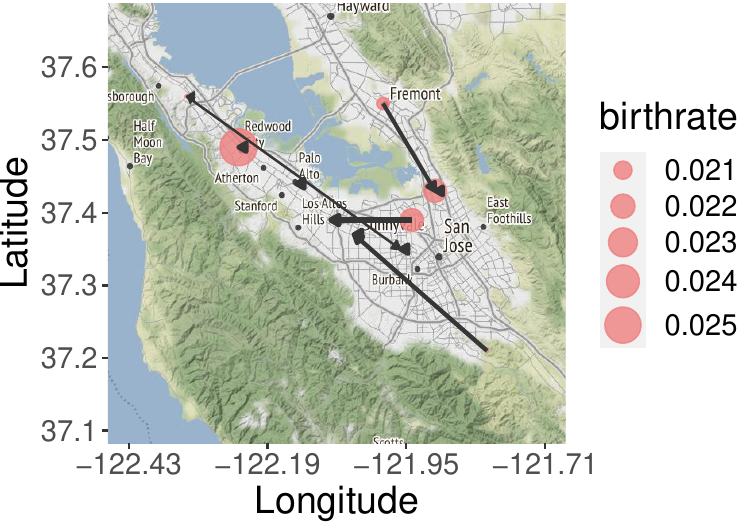}}
  \subfigure[$s=10$]{\includegraphics[scale=0.37]{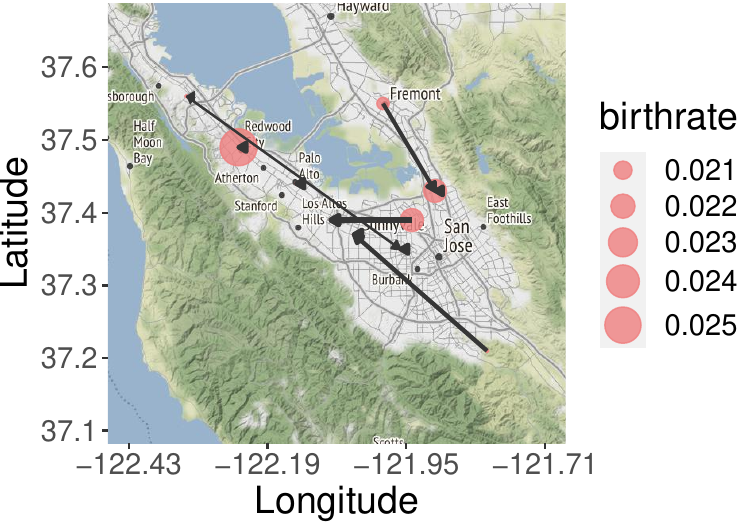}}
    \caption{Multi-state positive-to-positive ($p=q=1$) recovery results. To make sure the edges are visible when $s>1$, we magnify edge weight of Atlanta (a-c) and Los Angeles (d-f) estimates 3 times and North California (g-i) estimates 5 times.}
    \label{fig:5_2_grid_multi_pos}
\end{figure*}

\begin{figure*}[htbp]
  \centering
  \subfigure[$s=1$]{\includegraphics[scale=0.41]{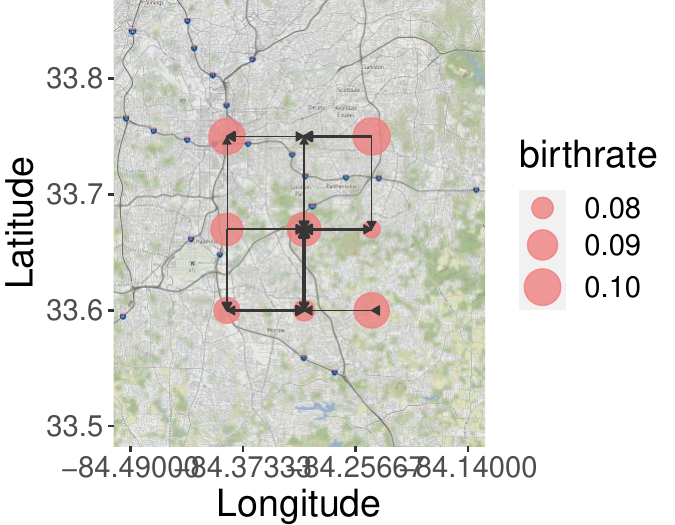}}
  \subfigure[$s=5$]{\includegraphics[scale=0.41]{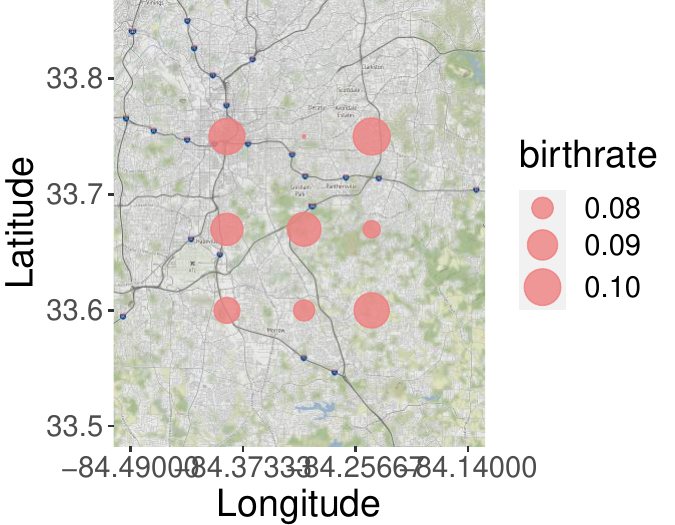}}
  \subfigure[$s=10$]{\includegraphics[scale=0.41]{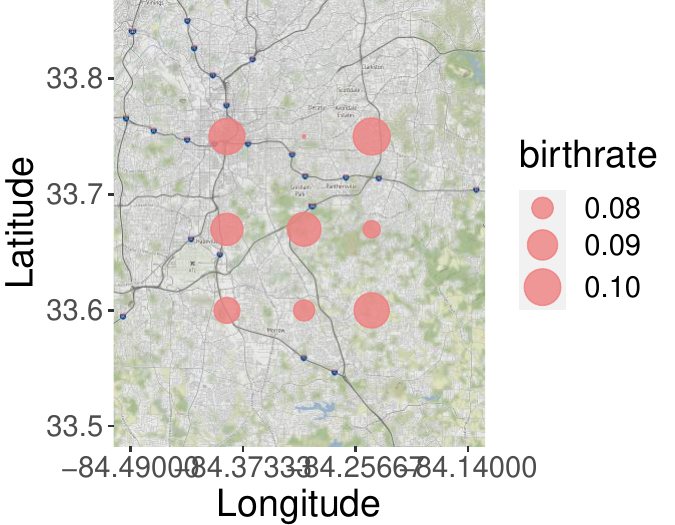}}
  \subfigure[$s=1$]{\includegraphics[scale=0.41]{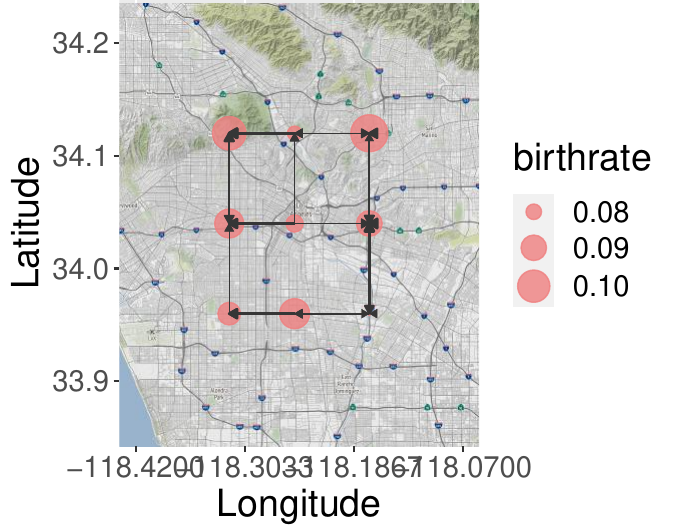}}
  \subfigure[$s=3$]{\includegraphics[scale=0.41]{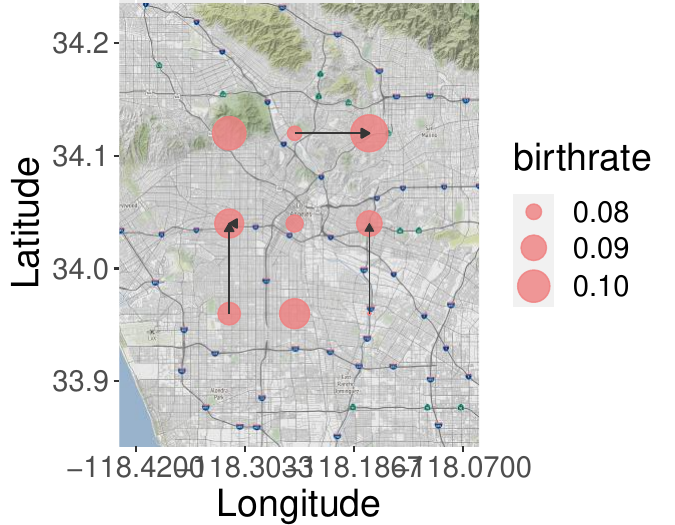}}
  \subfigure[$s=5$]{\includegraphics[scale=0.41]{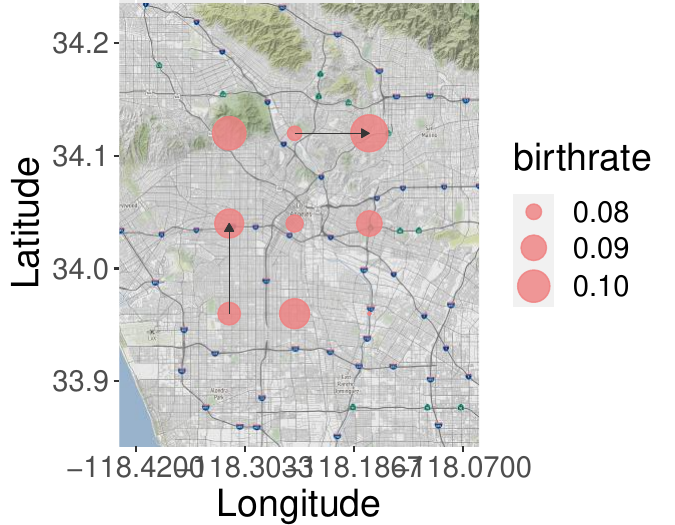}}

  \subfigure[$s=1$]{\includegraphics[scale=0.37]{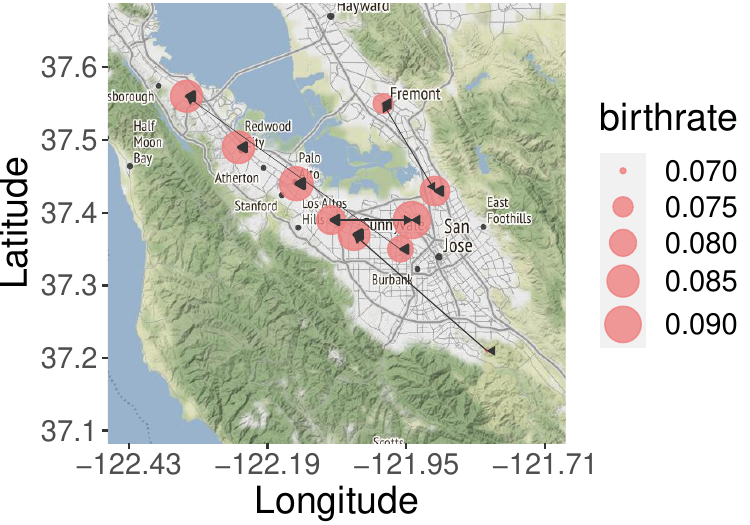}}
  \subfigure[$s=5$]{\includegraphics[scale=0.37]{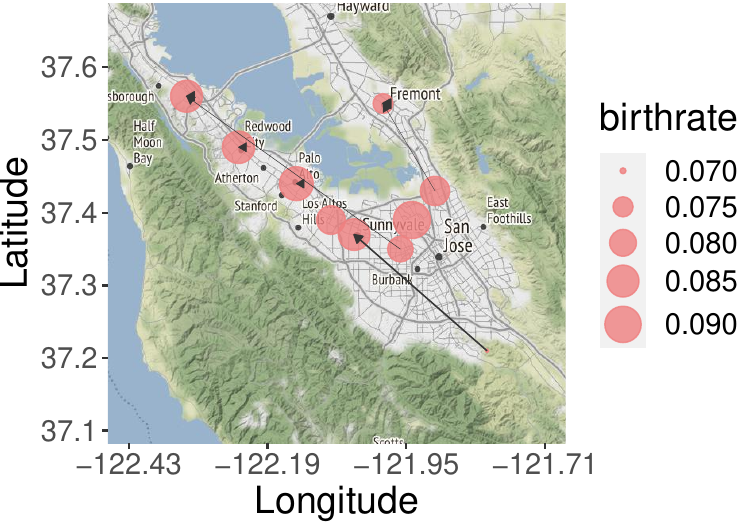}}
  \subfigure[$s=10$]{\includegraphics[scale=0.37]{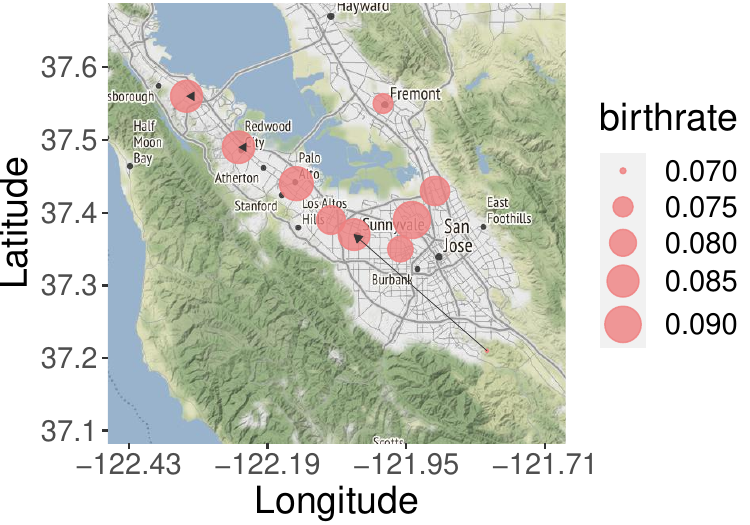}}
    \caption{Multi-state negative-to-negative ($p=q=2$) recovery result. To make sure the edges are visible when $s>1$, we magnify edge weight of Atlanta (a-c) estimates 25 times, North California (g-i) estimates 5 times, and Los Angeles (d-f) estimates 6 times.}
    \label{fig:5_2_grid_multi_neg}
\end{figure*}
\clearpage
\noindent \textit{2. Prediction Performances}

\begin{table}[htbp]
    \centering
    \caption{The positive case for multi-state model: precision, recall, and $F_1$ score in three downtown under static vs. dynamic threshold after tuning. The highest value among the four methods (LS or MLE combined with static or dynamic threshold) is in bold.}
    \label{tab:5_2_multi_p2p_mod_accuracy}
    \resizebox{\linewidth}{!}{\begin{tabular}{p{2cm}p{2cm}p{2cm}p{2cm}p{2cm}p{2cm}p{2cm}p{2cm}}
     \hline
     &&\multicolumn{3}{c}{Least Square} &\multicolumn{3}{c}{Maximum Likelihood} \\
     Location & $\tau$ & Precision & Recall & $F_1$ & Precision & Recall & $F_1$\\
     \hline
\multirow{2}{*}{Atlanta} & Static & 0.91 & 1.00 & 0.95 & 0.87 & 0.70 & 0.78 \\
 & Dynamic & \textbf{0.91} & \textbf{1.00} & \textbf{0.95} & 0.87 & 0.70 & 0.78 \\
\multirow{2}{*}{Los Angeles} & Static & 1.00 & 0.79 & 0.88 & 0.96 & 0.79 & 0.86 \\
 & Dynamic & 0.82 & \textbf{1.00} & 0.90 & 0.84 & \textbf{0.96} & \textbf{0.90} \\
\multirow{2}{*}{Palo Alto} & Static & 0.90 & 0.76 & 0.83 & 0.90 & 0.76 & 0.83 \\
 & Dynamic & 0.81 & 1.00 & 0.89 & \textbf{0.93} & \textbf{1.00} & \textbf{0.96} \\
     \hline
    \end{tabular}}
\end{table}

\begin{table}[htbp]
    \centering
    \caption{The negative case for multi-state model: precision, recall, and $F_1$ score in three downtown under static vs. dynamic threshold after tuning. The highest value among the four methods (LS or MLE combined with static or dynamic threshold) is in bold.}
    \label{tab:5_2_multi_n2n_mod_accuracy}
    \resizebox{\linewidth}{!}{\begin{tabular}{p{2cm}p{2cm}p{2cm}p{2cm}p{2cm}p{2cm}p{2cm}p{2cm}}
     \hline
     &&\multicolumn{3}{c}{Least Square} &\multicolumn{3}{c}{Maximum Likelihood} \\
     Location & $\tau$ & Precision & Recall & $F_1$ & Precision & Recall & $F_1$\\
     \hline
\multirow{2}{*}{Atlanta} & Static & \textbf{1.00} & 0.98 & \textbf{0.99} & 0.95 & 0.95 & 0.95 \\
 & Dynamic & 0.95 & \textbf{0.98} & 0.96 & 0.91 & 0.95 & 0.93 \\
\multirow{2}{*}{Los Angeles} & Static & 0.18 & \textbf{1.00} & 0.31 & 0.80 & 0.87 & \textbf{0.84} \\
 & Dynamic & 0.14 & 0.70 & 0.23 & \textbf{0.90} & 0.74 & 0.81 \\
\multirow{2}{*}{Palo Alto} & Static & 1.00 & 1.00 & \textbf{1.00} & 0.85 & 1.00 & 0.92 \\
 & Dynamic & 0.86 & \textbf{1.00} & 0.93 & \textbf{1.00} & 0.98 & 0.99 \\
     \hline
    \end{tabular}}
\end{table}

\begin{figure*}[htbp]
    \centering
    \subfigure[Downtown Atlanta (Static)]{\includegraphics[scale=0.2]{{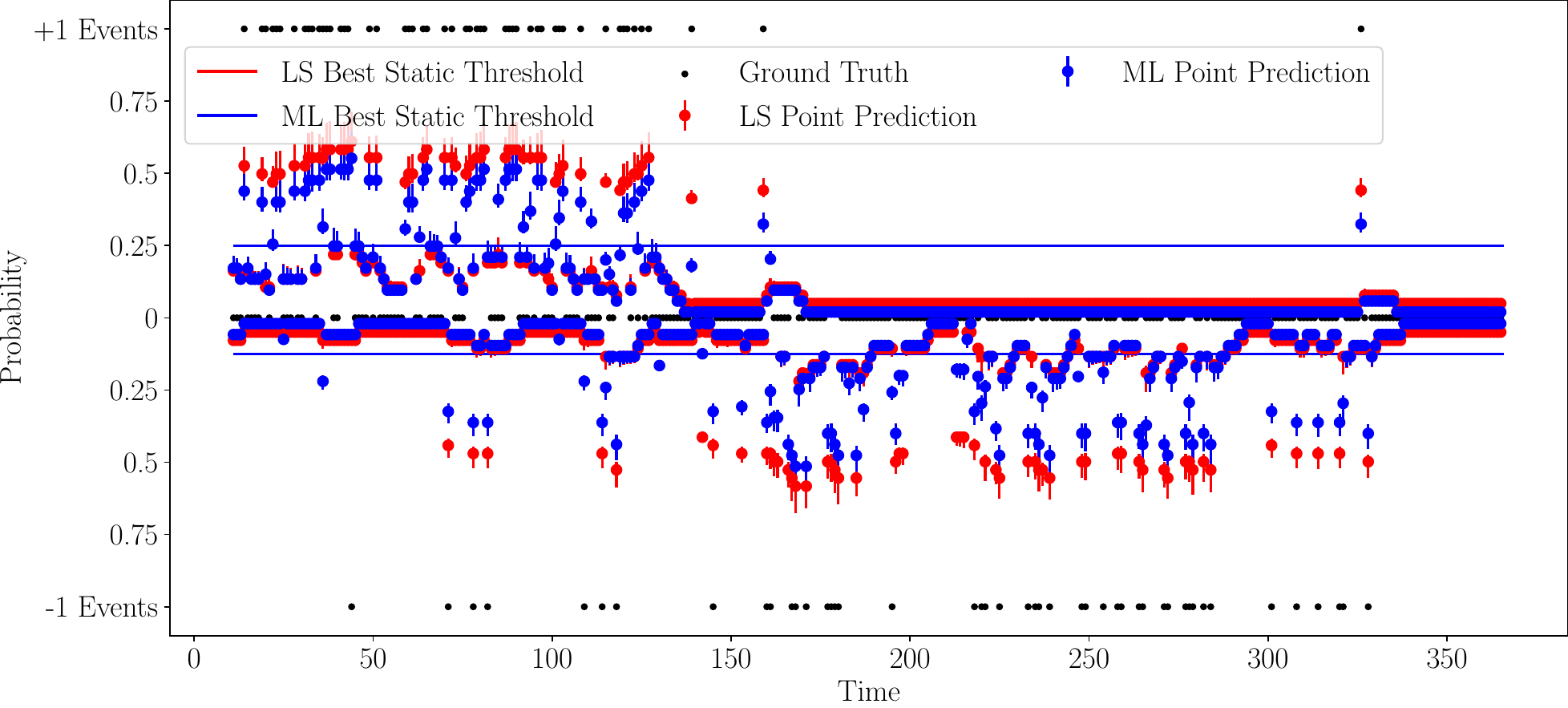}}}
    \subfigure[Downtown Atlanta (Dynamic)]{\includegraphics[scale=0.2]{{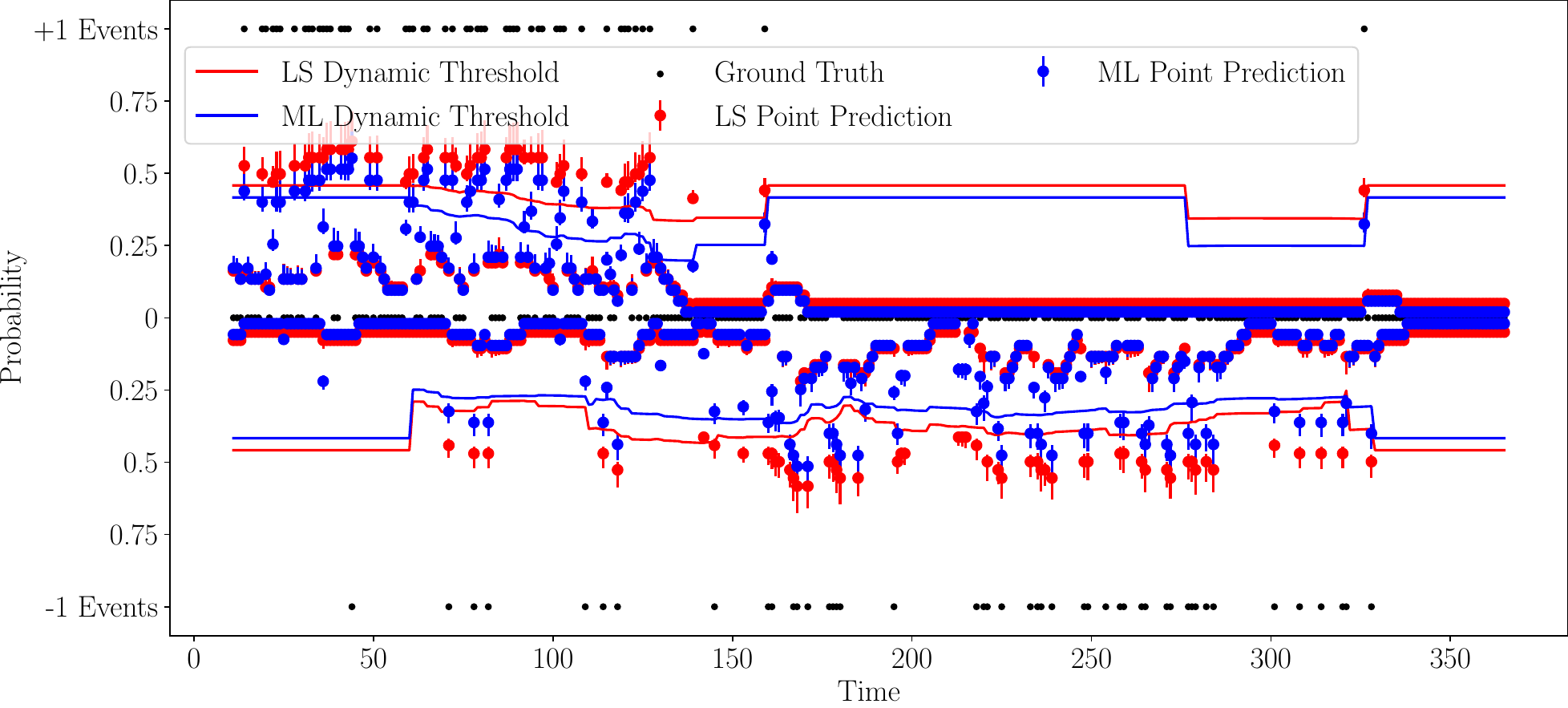}}}
    \subfigure[Downtown Los Angeles (Static)]{\includegraphics[scale=0.2]{{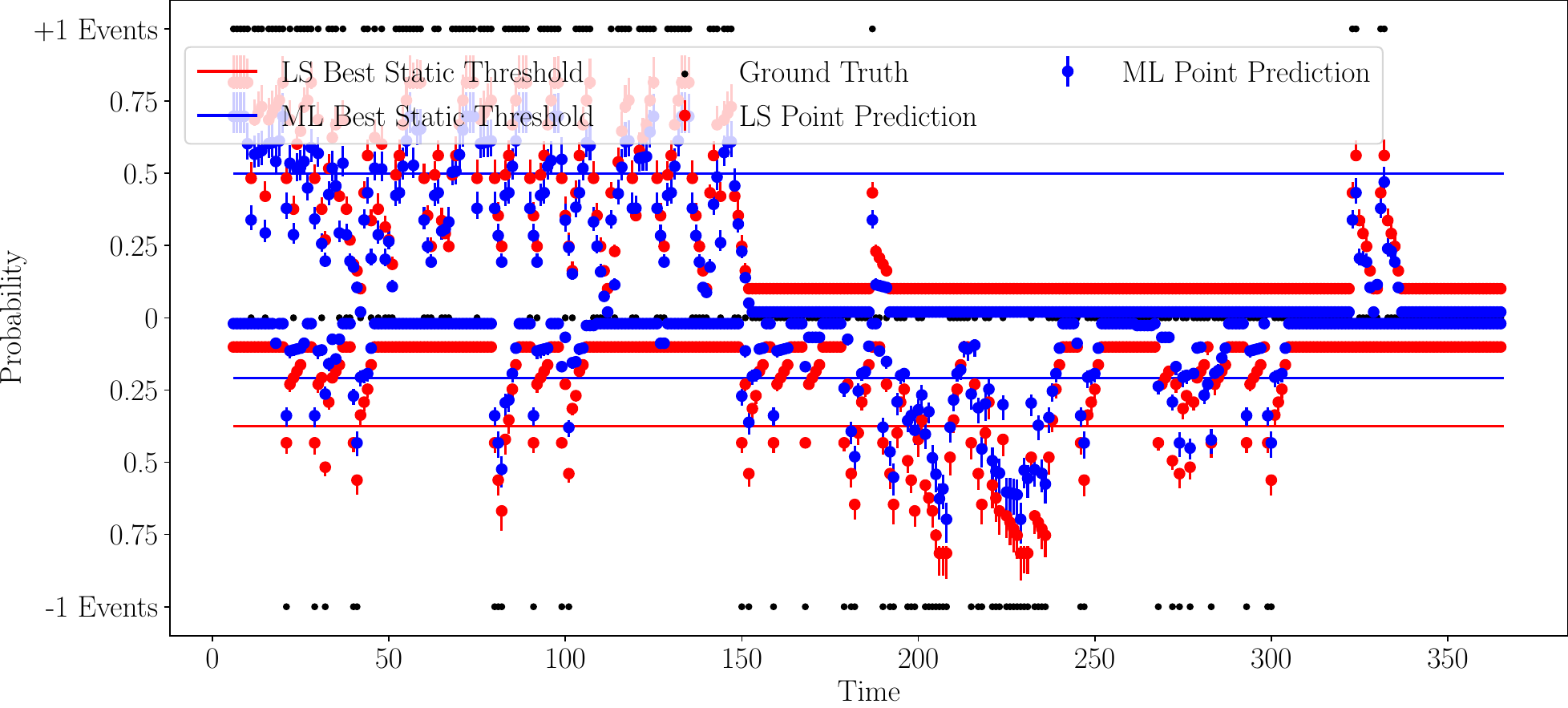}}}
    \subfigure[Downtown Los Angeles (Dynamic)]{\includegraphics[scale=0.2]{{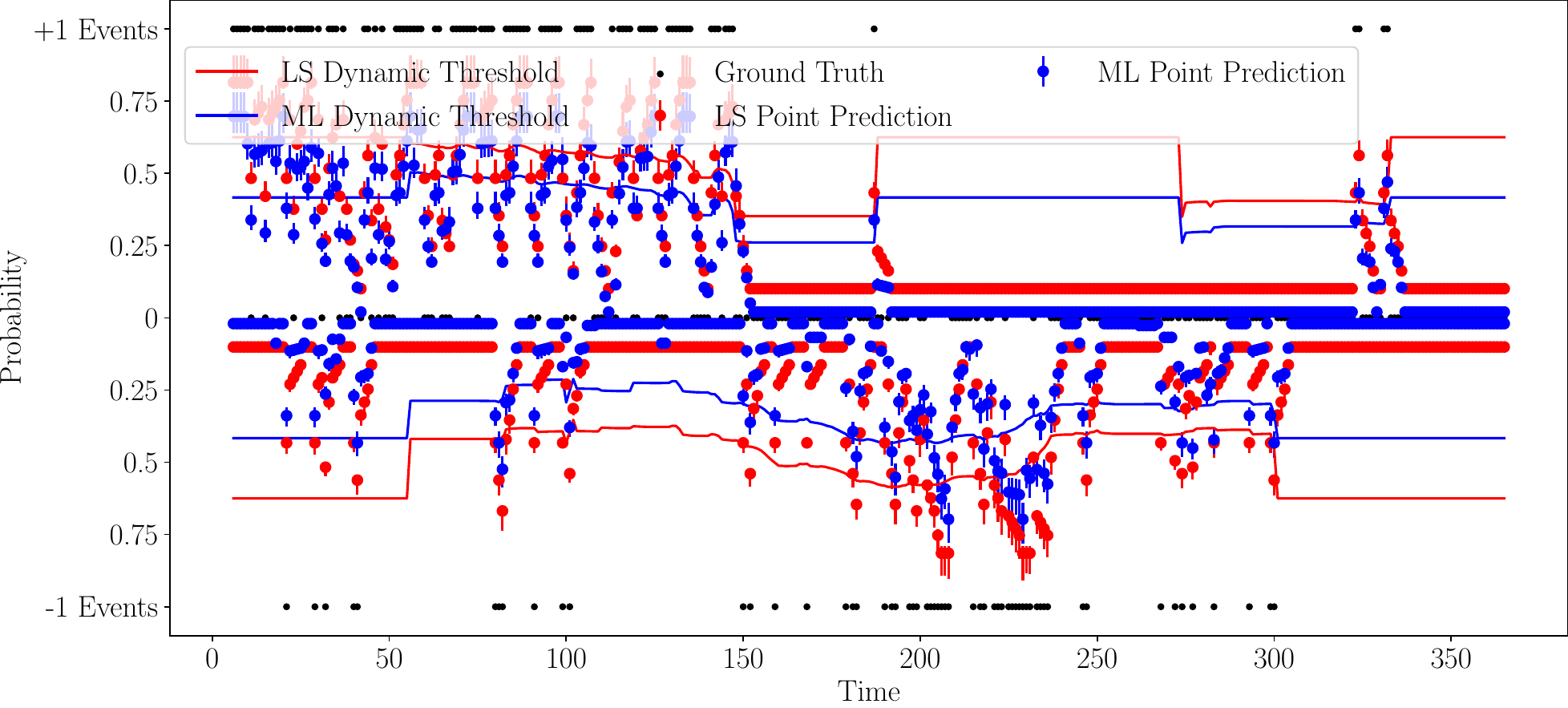}}}
    \subfigure[Palo Alto (Static)]{\includegraphics[scale=0.2]{{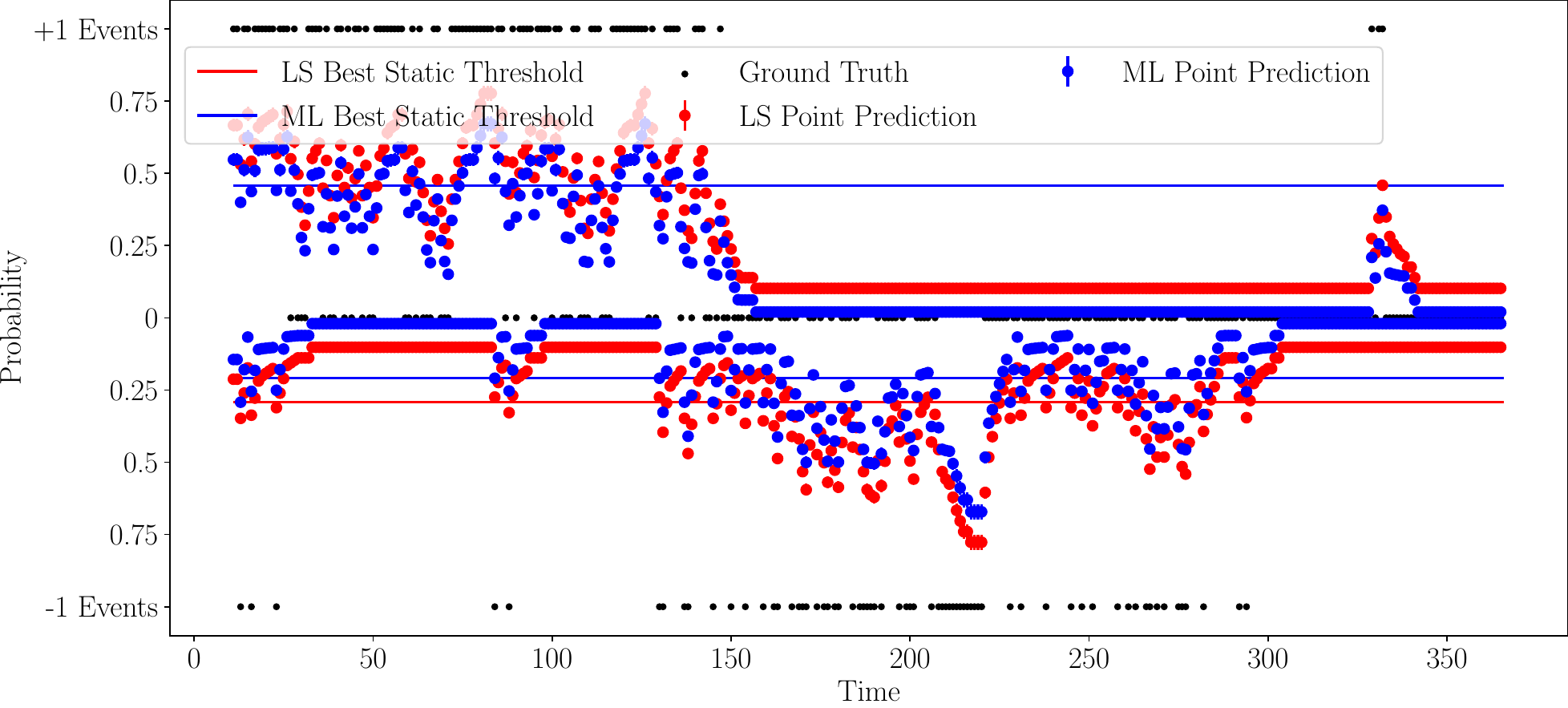}}}
    \subfigure[Palo Alto (Dynamic)]{\includegraphics[scale=0.2]{{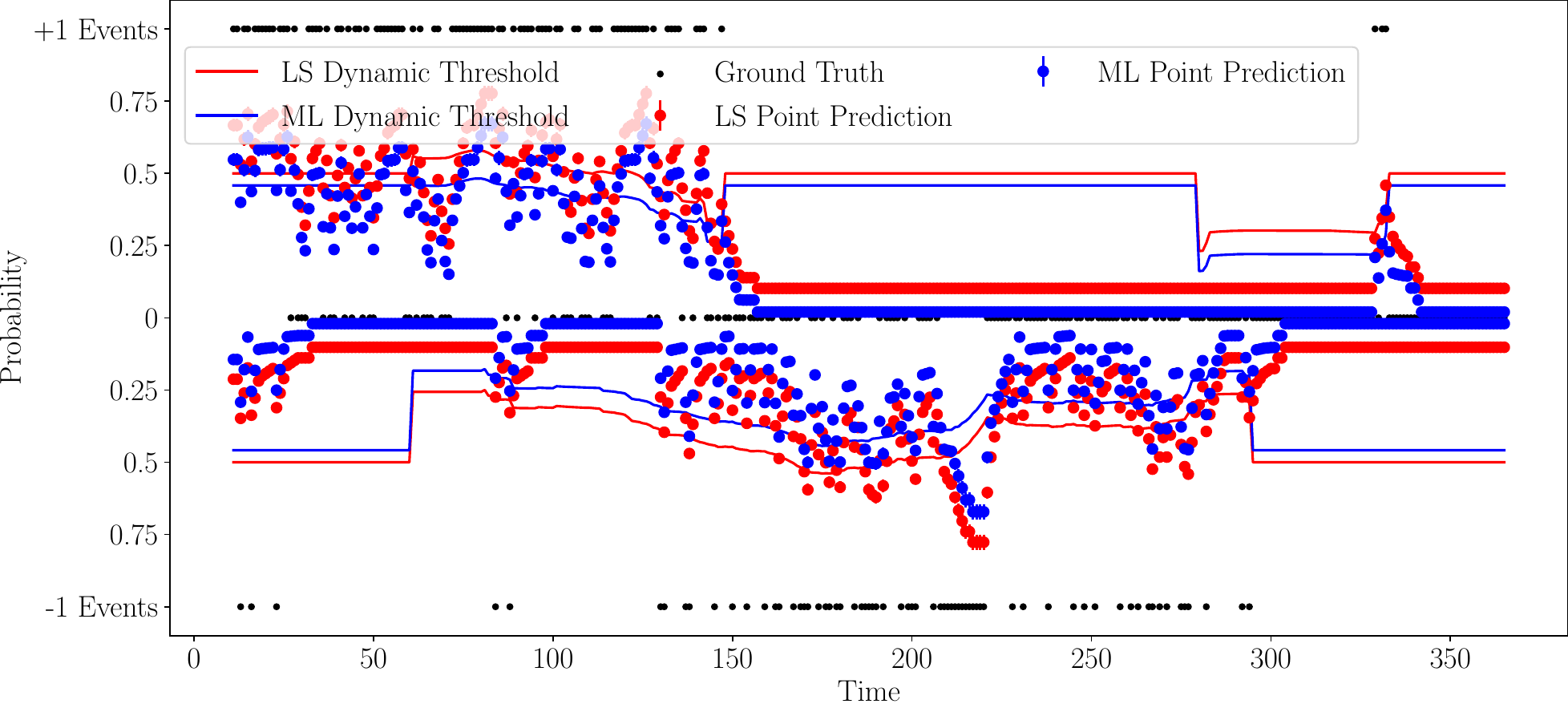}}}
    \vspace{-0.5cm}
    \caption{Prediction intervals for online point prediction of probabilities for multiple-state ramping events using LS (red dots) and ML (blue dots), compared with true ramping events (black dots). The left column is generated using a static threshold, and the right column uses dynamic thresholds.}
    \label{fig:5_2_point_pred_multi}
\end{figure*}

\subsection{Fitting seasonal models}\label{sec:season}
\textbf{}

\noindent \textit{1. Single-state results}

\begin{figure*}[htbp]
  \centering
  \includegraphics[scale=0.35]{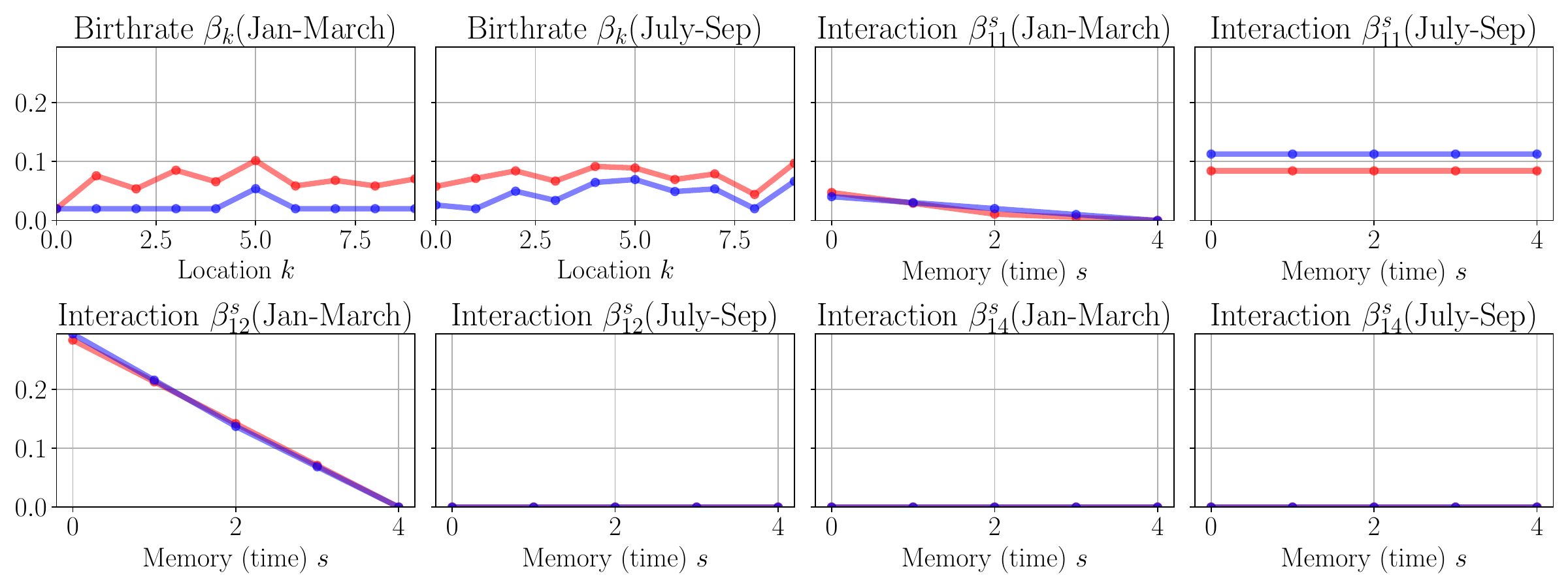}
  \caption{Recovered birthrate parameters at different locations and interaction parameters over time. The blue curve shows ML recoveries, and the red curve shows LS recoveries.}
  \label{fig:5_2_beta_single_rainfall}
\end{figure*}

\begin{figure*}[htbp]
  \centering
  \subfigure[$s=1$]{\includegraphics[scale=0.37]{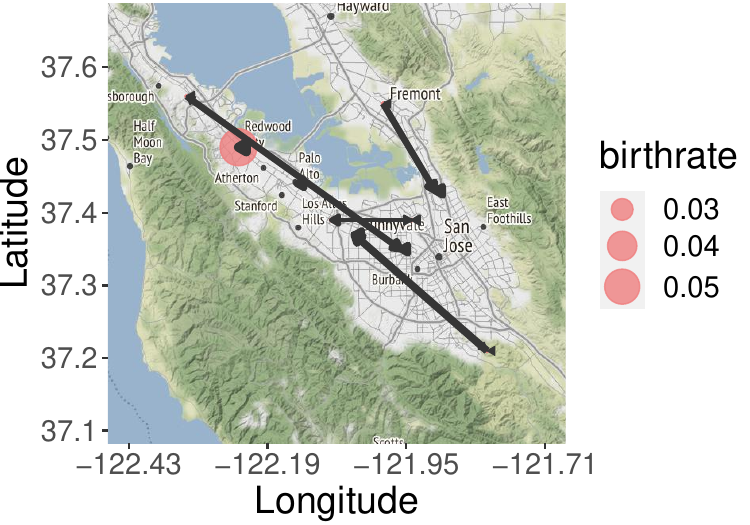}}
  \subfigure[$s=3$]{\includegraphics[scale=0.37]{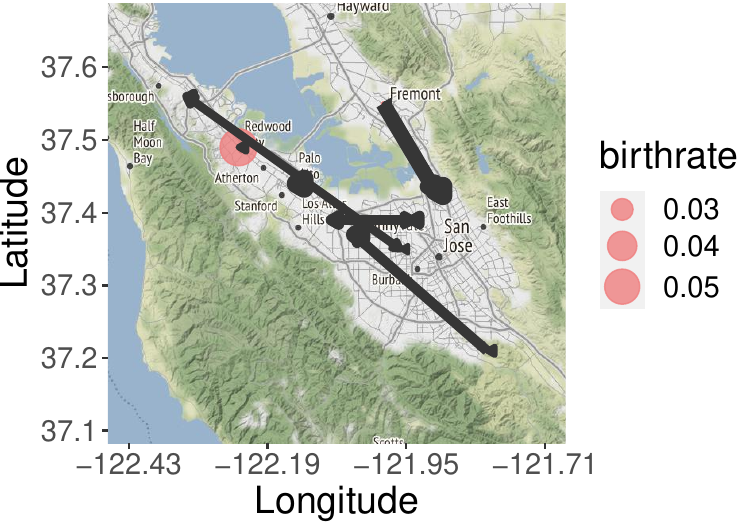}}
  \subfigure[$s=5$]{\includegraphics[scale=0.37]{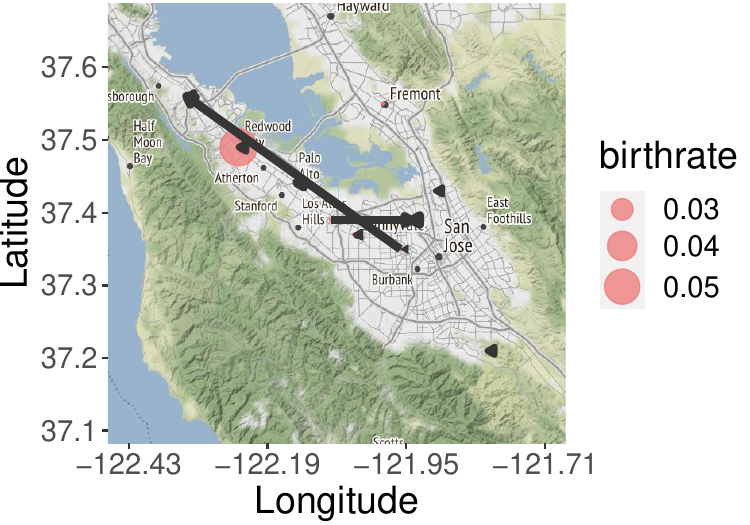}}
  \subfigure[$s=1$]{\includegraphics[scale=0.37]{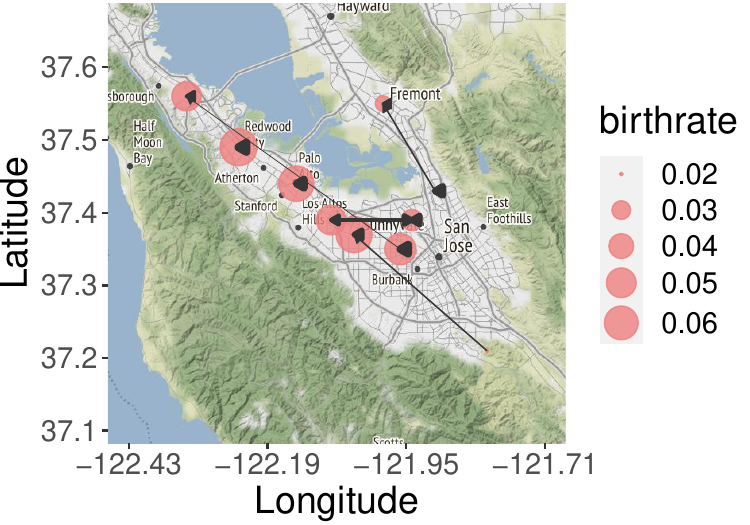}}
  \subfigure[$s=3$]{\includegraphics[scale=0.37]{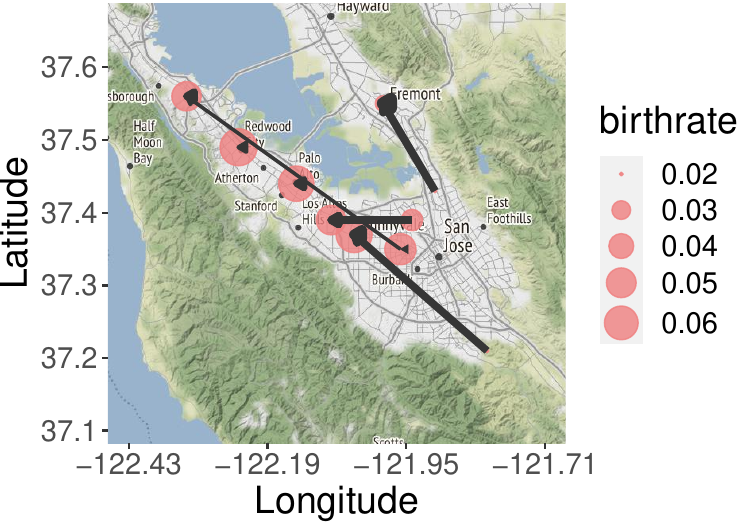}}
  \subfigure[$s=5$]{\includegraphics[scale=0.37]{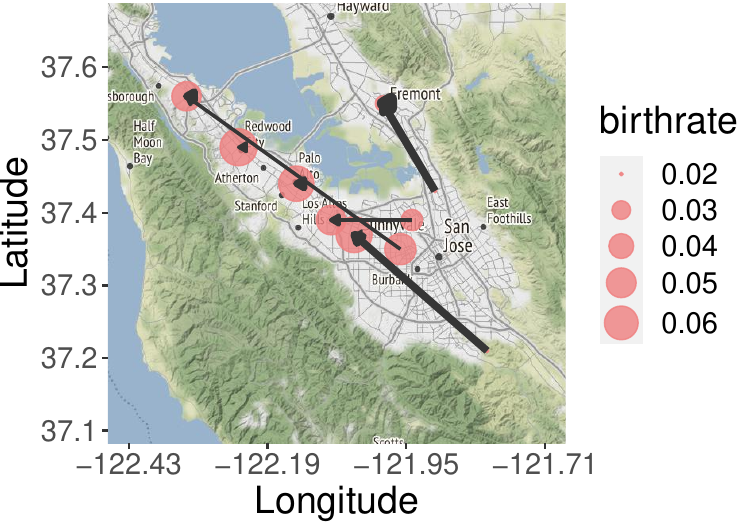}}
  \caption{Visualize single-state spatio-temporal influences. The top row shows estimates from data from 01/01/2017-03/31-2017, and the bottom row uses data during 07/01/2017-09/30-2017. To ensure the edges are visible when $s>1$, we magnify the edge weight of North California (g-i) estimates five times.}
  \label{fig:5_2_grid_single_CA}
\end{figure*}

\begin{table}[htbp]
    \centering
    \caption{Sequential prediction performance for single-state model by season: precision, recall, and $F_1$ score in Palo Alto under static vs. dynamic threshold after tuning. The higher value between static vs. dynamic is in bold.}
    \label{tab:5_2_mod_accuracy_rainfall}
    \resizebox{\linewidth}{!}{\begin{tabular}{p{2cm}p{2cm}p{2cm}p{2cm}p{2cm}p{2cm}p{2cm}p{2cm}}
     \hline
     &&\multicolumn{3}{c}{Least Square} &\multicolumn{3}{c}{Maximum Likelihood} \\
     Months & $\tau$ & Precision & Recall & $F_1$ & Precision & Recall & $F_1$\\
     \hline
\multirow{2}{*}{Jan-Mar} & Static & 0.40 & 1.00 & 0.57 & 0.40 & \textbf{1.00} & 0.57 \\
 & Dynamic & \textbf{1.00} & 0.77 & \textbf{0.87} & 0.91 & 0.81 & 0.86 \\
\multirow{2}{*}{July-Sept} & Static & 0.91 & 1.00 & 0.95 & 0.91 & \textbf{1.00} & 0.95 \\
 & Dynamic & \textbf{1.00} & 0.90 & \textbf{0.95} & 1.00 & 0.85 & 0.92 \\
     \hline
    \end{tabular}}
\end{table}

\begin{figure*}[htbp]
    \centering
    \subfigure[Jan-March (Static)]{\includegraphics[scale=0.21]{{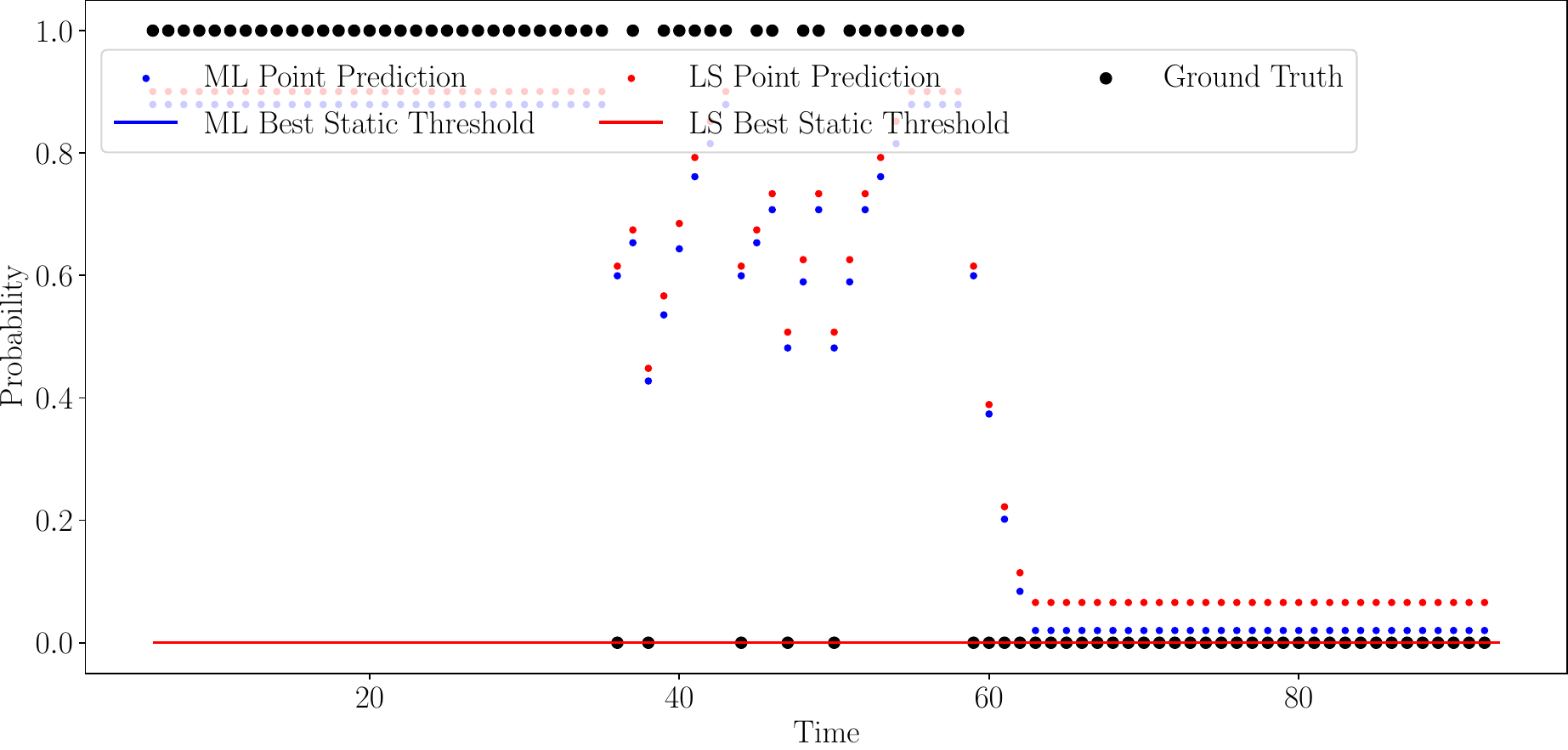}}}
    \subfigure[Jan-March (Dynamic)]{\includegraphics[scale=0.21]{{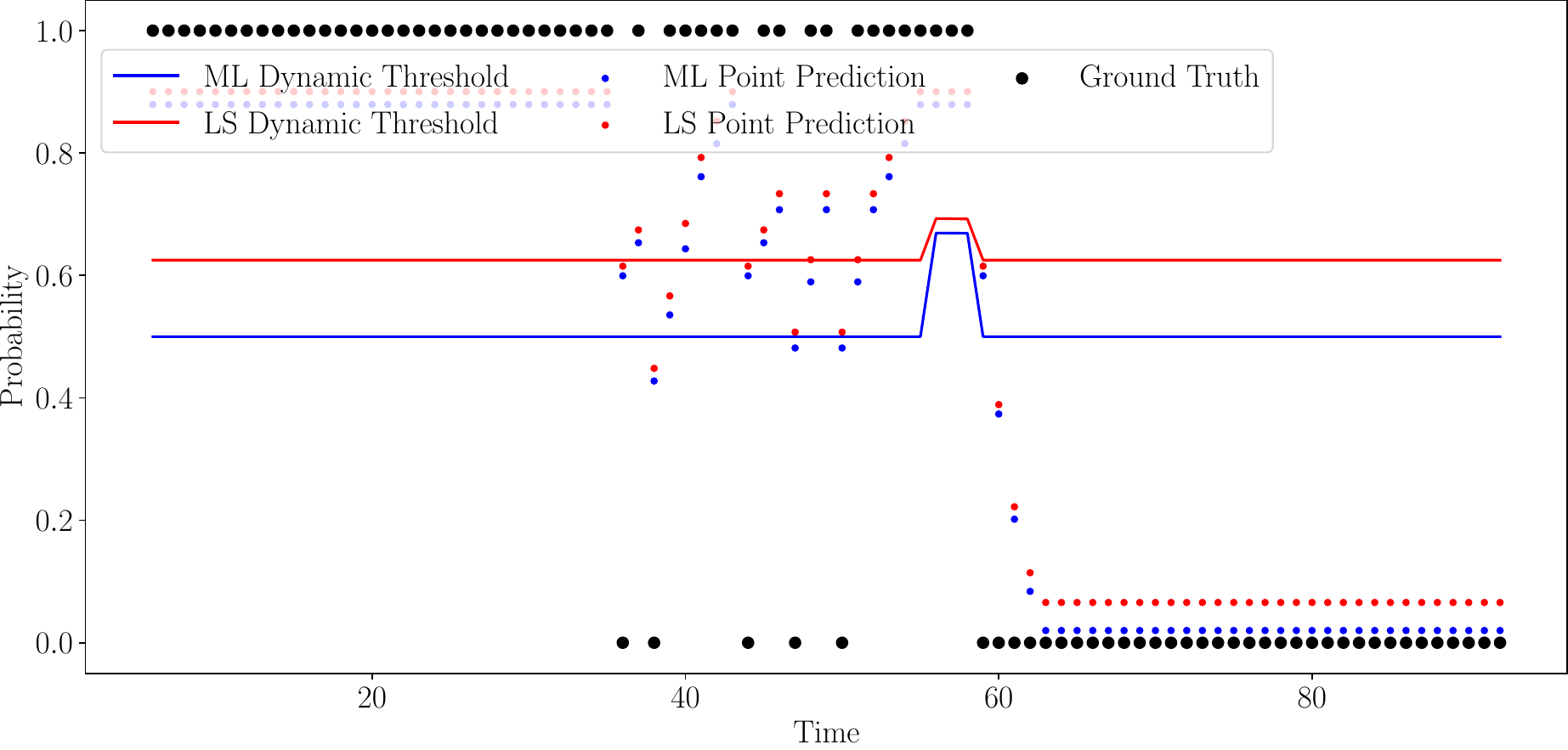}}}
    \subfigure[July-Sept (Static)]{\includegraphics[scale=0.21]{{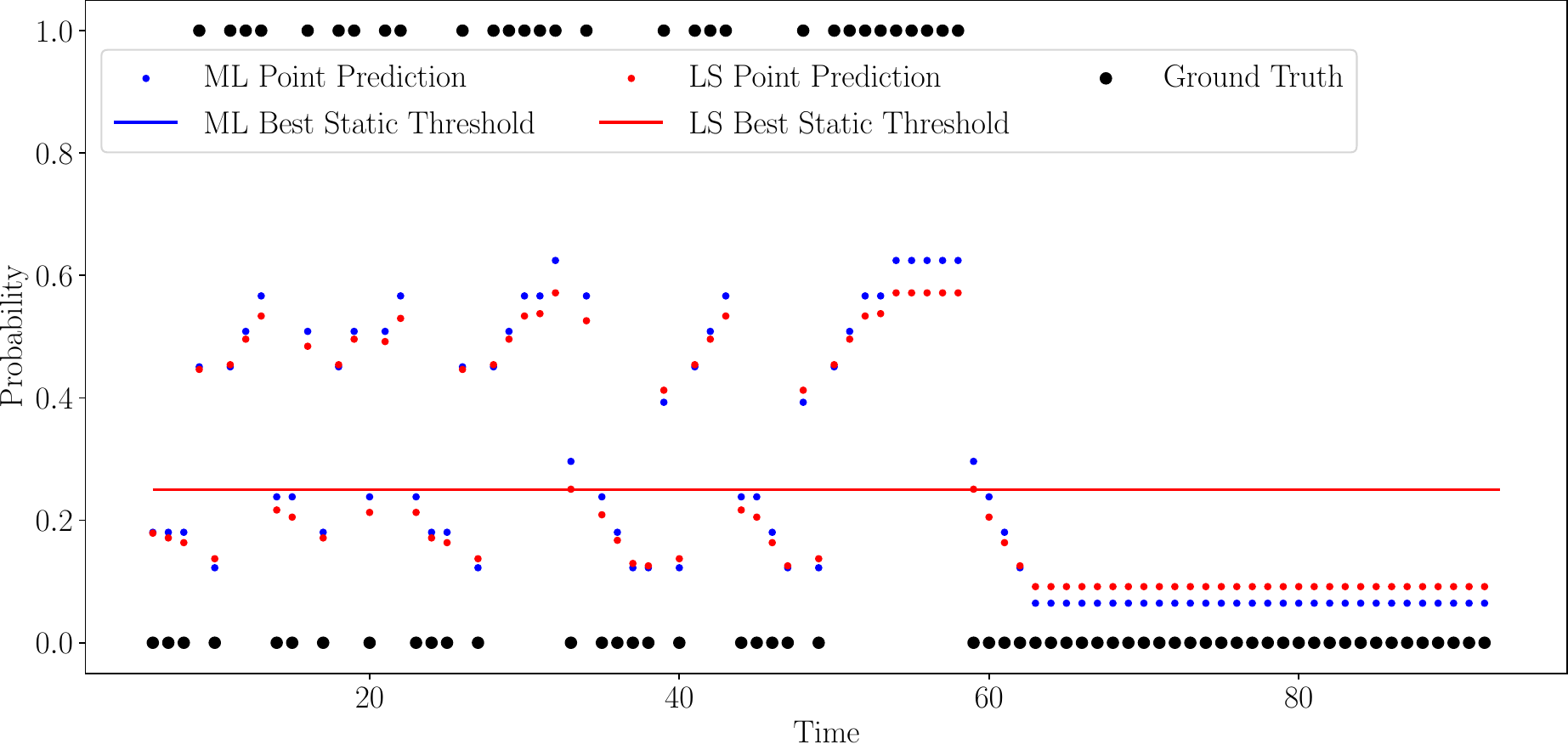}}}
    \subfigure[July-Sept (Dynamic)]{\includegraphics[scale=0.21]{{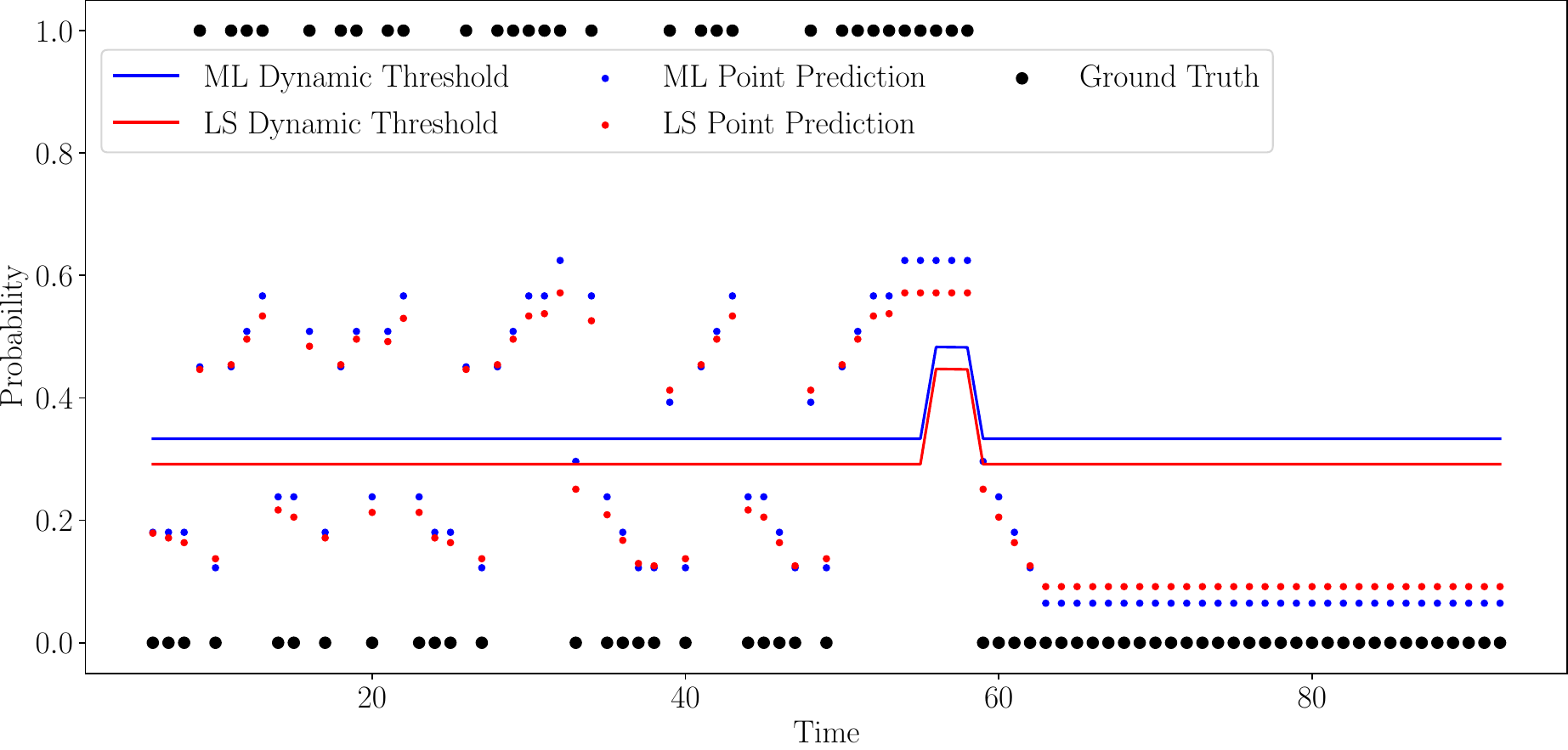}}}
    \vspace{-0.3cm}
    \caption{Online point prediction of probabilities for single-state ramping events using LS (red dots) and ML (blue dots), compared with true ramping events (black dots). The left column in each row is generated using a static threshold, and the right column in each row uses dynamic thresholds. }
    \label{fig:5_2_point_pred_rainfall}
\end{figure*}

\clearpage

\noindent \textit{2. Multi-state results}

\begin{figure*}[htbp]
  \centering
  \subfigure[$s=1$]{\includegraphics[scale=0.37]{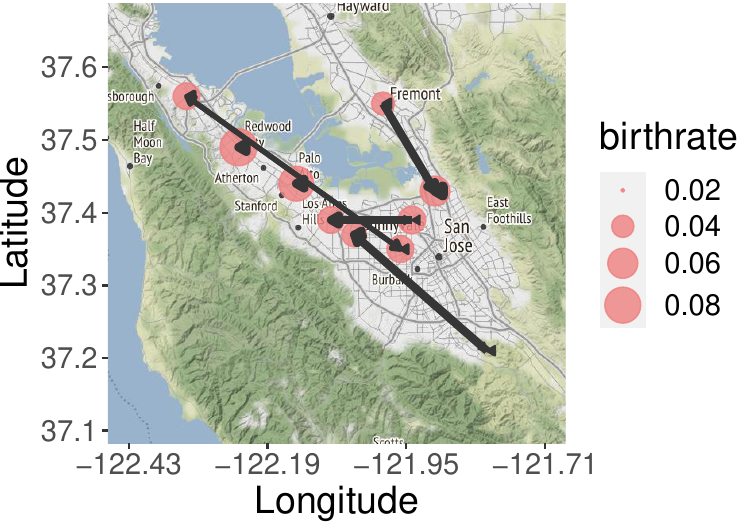}}
  \subfigure[$s=3$]{\includegraphics[scale=0.37]{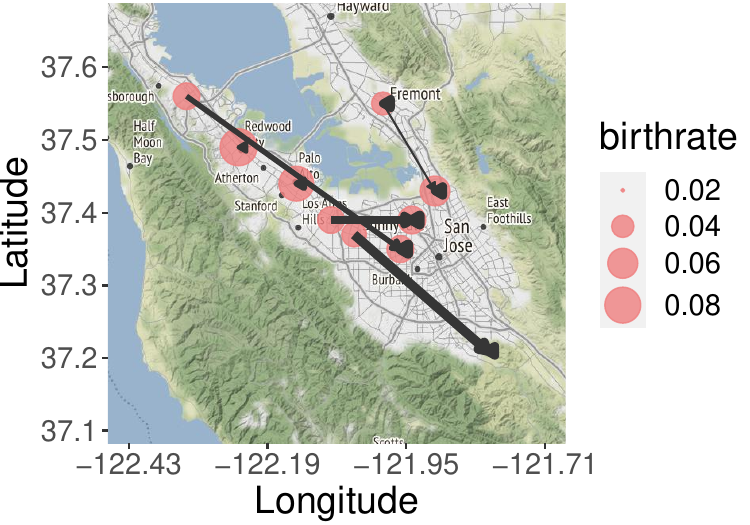}}
  \subfigure[$s=5$]{\includegraphics[scale=0.37]{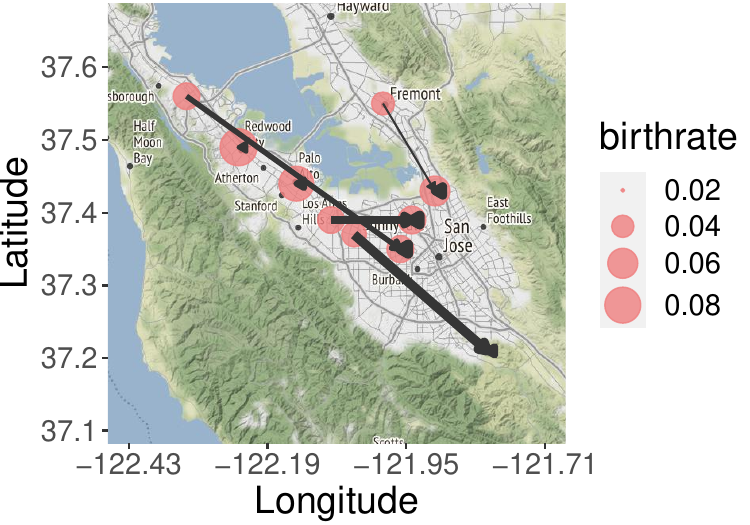}}
  \subfigure[$s=1$]{\includegraphics[scale=0.37]{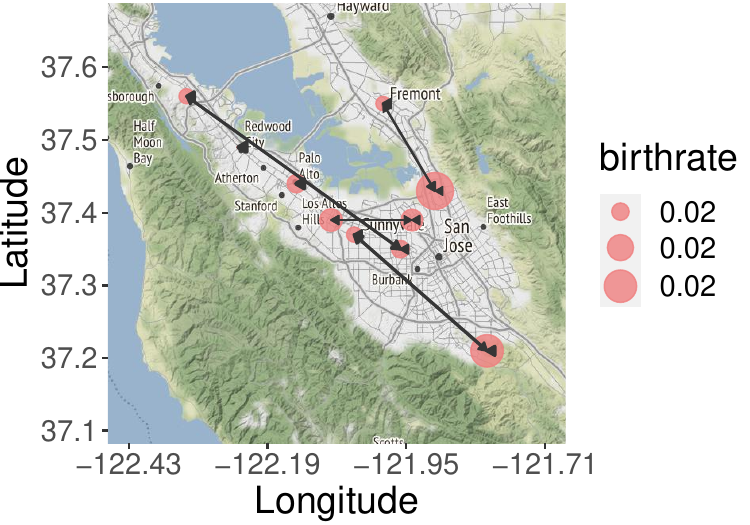}}
  \subfigure[$s=3$]{\includegraphics[scale=0.37]{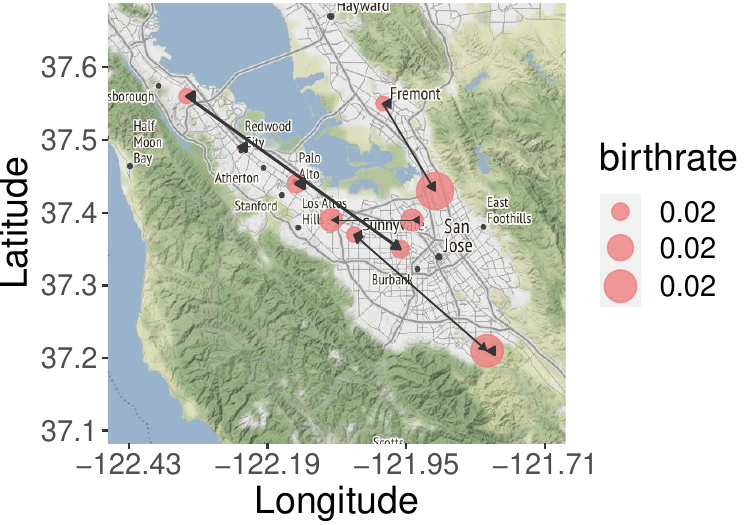}}
  \subfigure[$s=5$]{\includegraphics[scale=0.37]{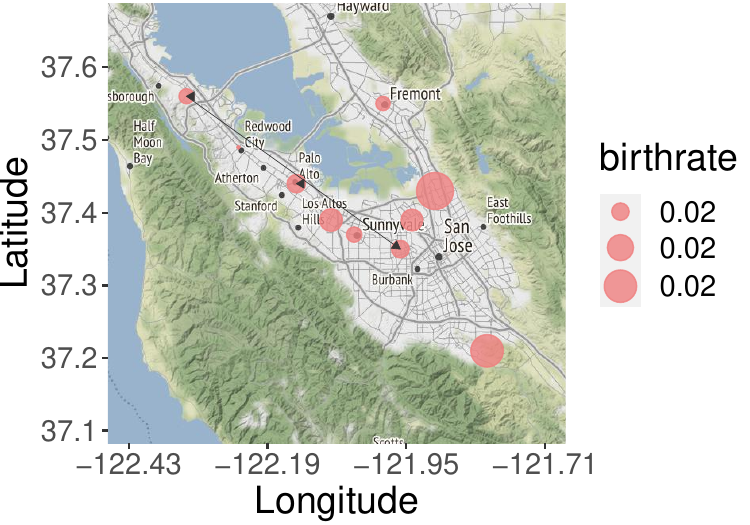}}
  \caption{Visualize multi-state, positive-to-positive spatio-temporal influences. The top row shows estimates from data from 01/01/2017-03/31-2017, and the bottom row uses data during 07/01/2017-09/30-2017. To make sure the edges are visible when $s>1$, we magnify the edge weight of North California (g-i) estimates five times.}
  \label{fig:5_2_grid_p2p_CA}
\end{figure*}

\begin{figure*}[htbp]
  \centering
  \subfigure[$s=1$]{\includegraphics[scale=0.37]{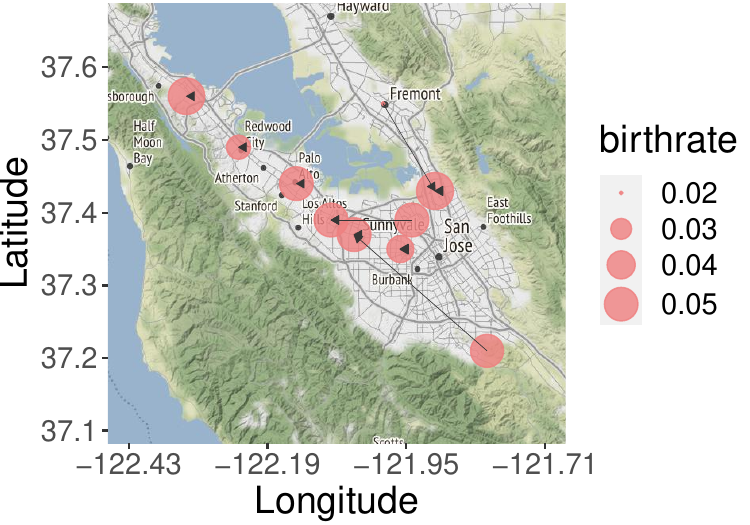}}
  \subfigure[$s=3$]{\includegraphics[scale=0.37]{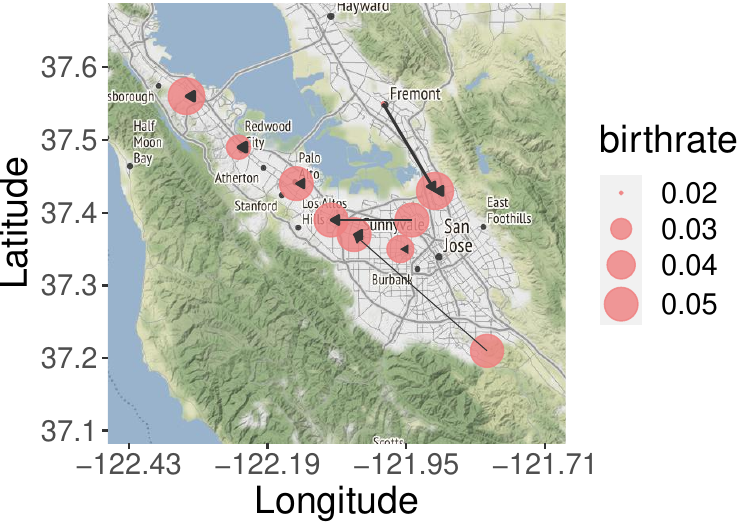}}
  \subfigure[$s=5$]{\includegraphics[scale=0.37]{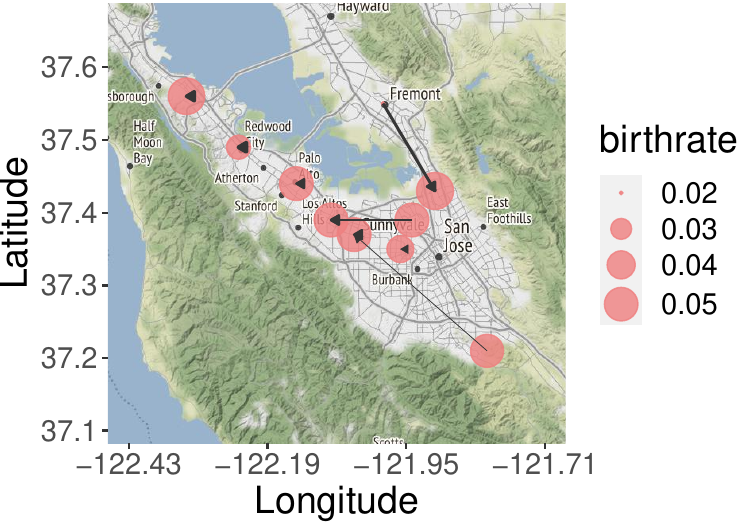}}
  \subfigure[$s=1$]{\includegraphics[scale=0.37]{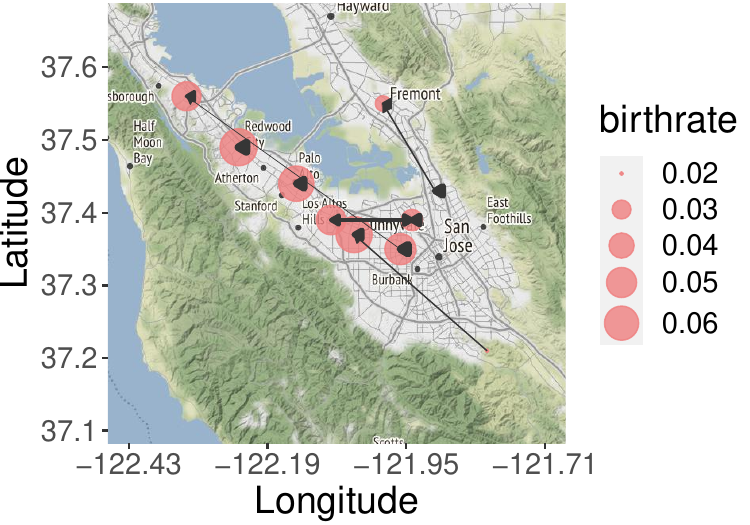}}
  \subfigure[$s=3$]{\includegraphics[scale=0.37]{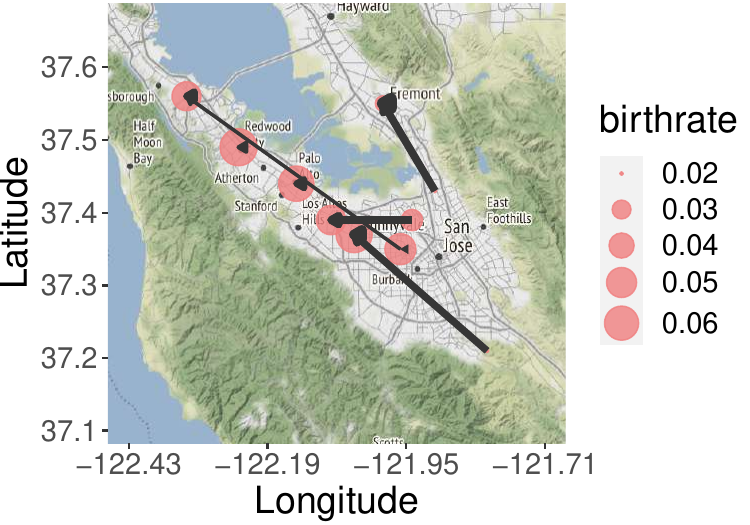}}
  \subfigure[$s=5$]{\includegraphics[scale=0.37]{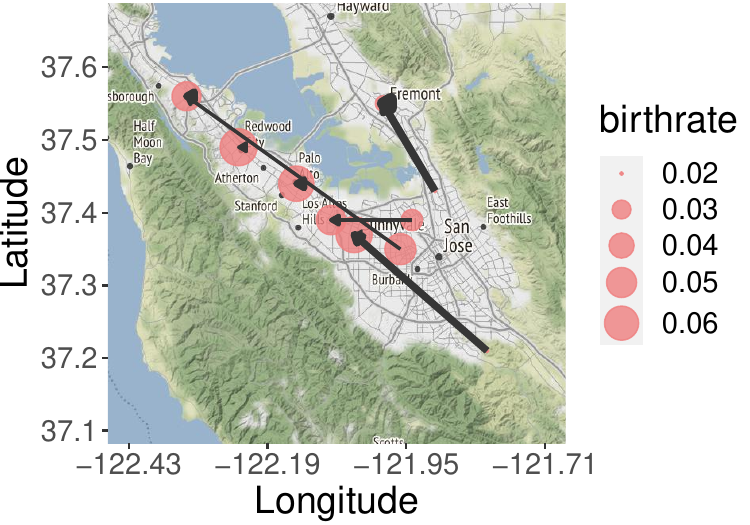}}
  \caption{Visualize multi-state, negative-to-negative spatio-temporal influences. The top row shows estimates from data from 01/01/2017-03/31-2017, and the bottom row uses data during 07/01/2017-09/30-2017. To ensure the edges are visible when $s>1$, we magnify the edge weight of North California (g-i) estimates five times.}
  \label{fig:5_2_grid_n2n_CA}
\end{figure*}

\begin{table}[htbp]
    \centering
    \caption{Sequential prediction performance for multi-state model by season: precision, recall, and $F_1$ score in Palo Alto under static vs. dynamic threshold after tuning. Some cells are marked as 0 because there is no ramping event of a given type during that season. The higher value between static vs. dynamic is in bold.}
    \label{tab:5_2_mod_accuracy_multi_rainfall}
    \resizebox{\linewidth}{!}{\begin{tabular}{p{2cm}p{2cm}p{2cm}p{2cm}p{2cm}p{2cm}p{2cm}p{2cm}}
     \hline
     &&\multicolumn{3}{c}{Least Square} &\multicolumn{3}{c}{Maximum Likelihood} \\
     Months & $\tau$ & Precision & Recall & $F_1$ & Precision & Recall & $F_1$\\
     \hline
\multirow{2}{*}{PP: Jan-Mar} & Static & 1.00 & 1.00 & 1.00 & 1.00 & \textbf{1.00} & \textbf{1.00} \\
 & Dynamic & 1.00 & 0.90 & 0.95 & \textbf{1.00} & 0.80 & 0.89 \\
\multirow{2}{*}{PP: July-Sept} & Static & 0.00 & 0.00 & 0.00 & 0.00 & 0.00 & 0.00 \\
 & Dynamic & 0.00 & 0.00 & 0.00 & 0.00 & 0.00 & 0.00 \\
\multirow{2}{*}{NN: Jan-Mar} & Static & 0.00 & 0.00 & 0.00 & 0.00 & 0.00 & 0.00 \\
 & Dynamic & 0.00 & 0.00 & 0.00 & 0.00 & 0.00 & 0.00 \\
\multirow{2}{*}{NN: July-Sept} & Static & 0.91 & 1.00 & 0.95 & 0.91 & \textbf{1.00} & 0.95 \\
 & Dynamic & 1.00 & 0.90 & \textbf{0.95} & \textbf{1.00} & 0.85 & 0.92 \\

     \hline
    \end{tabular}}
\end{table}

\begin{figure*}[htbp]
    \centering
    \subfigure[Jan-March (Static)]{\includegraphics[scale=0.21]{{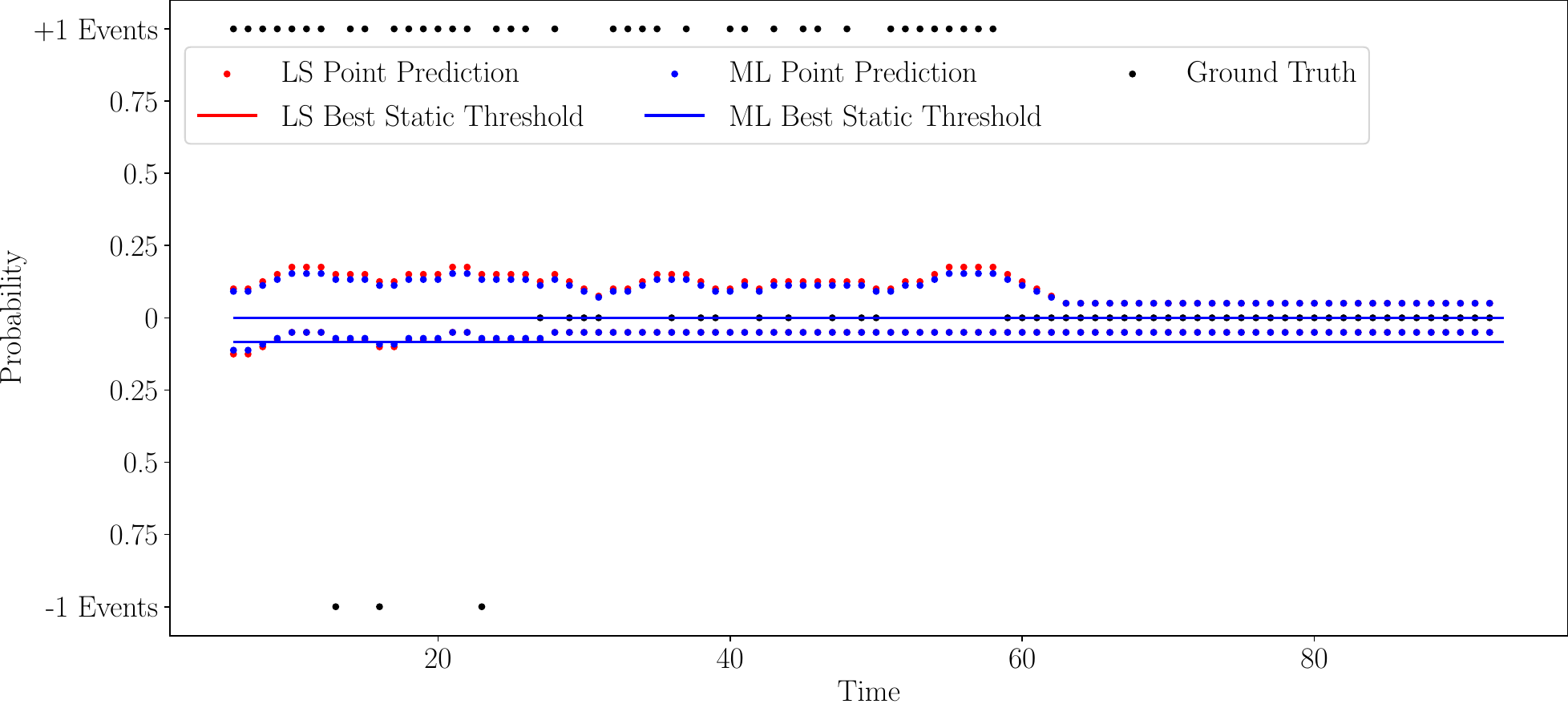}}}
    \subfigure[Jan-March (Dynamic)]{\includegraphics[scale=0.21]{{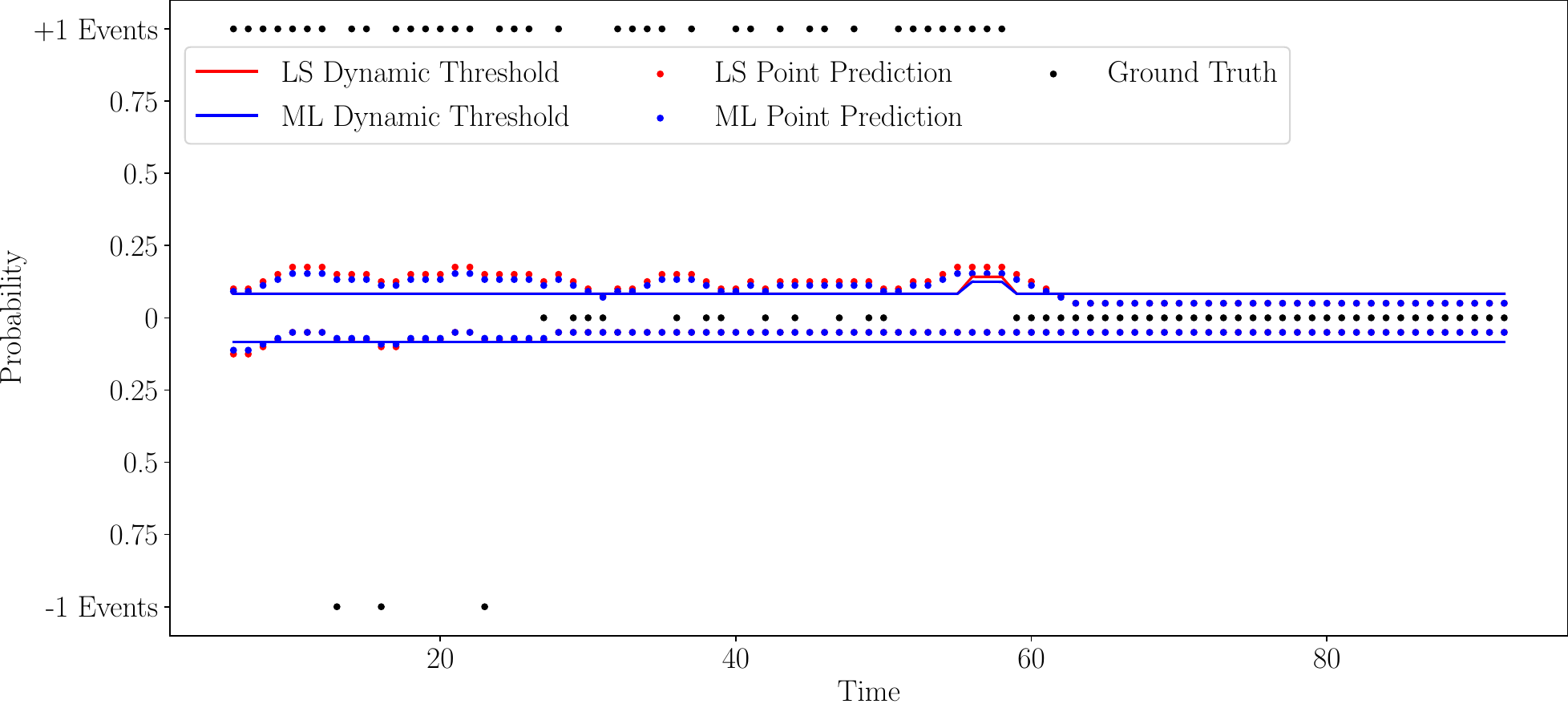}}}
    \subfigure[July-Sept (Static)]{\includegraphics[scale=0.21]{{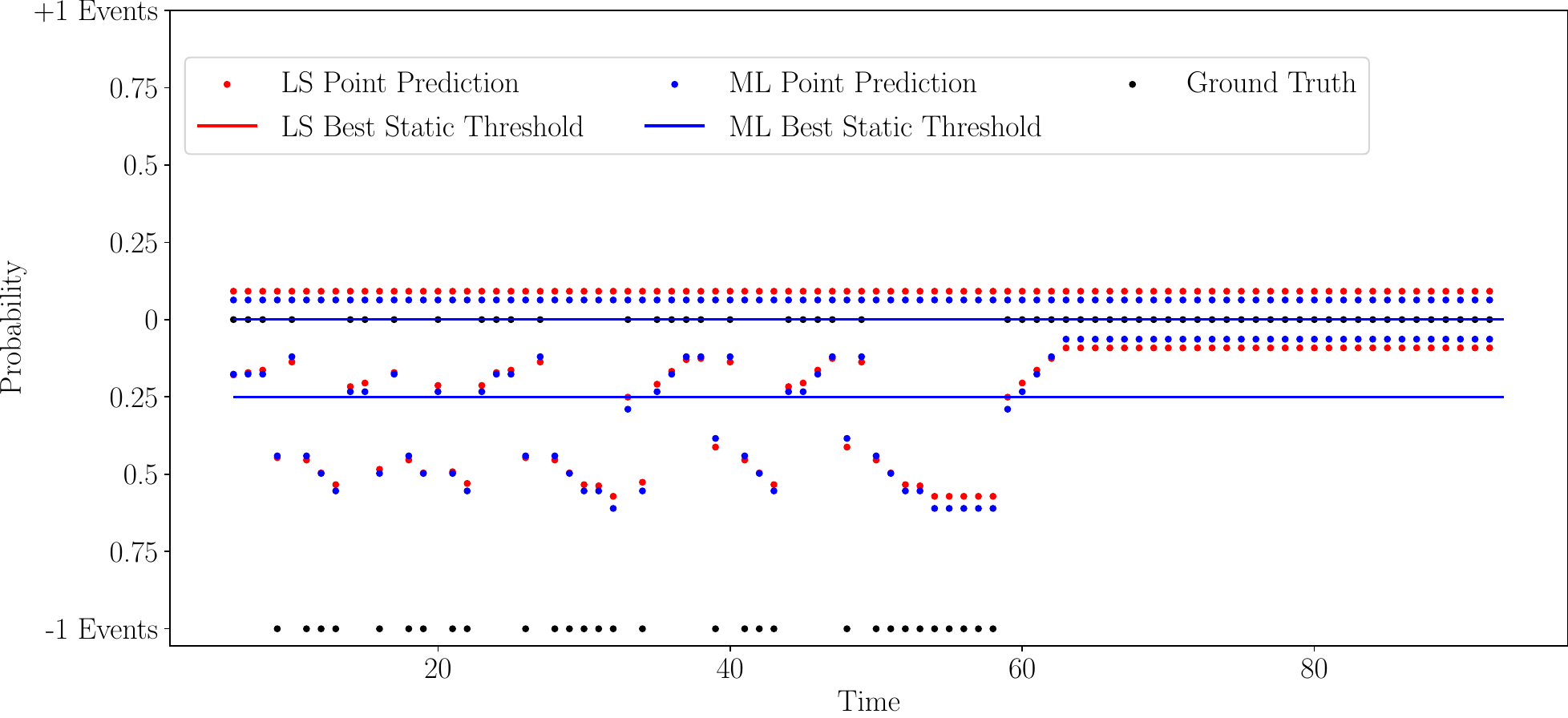}}}
    \subfigure[July-Sept (Dynamic)]{\includegraphics[scale=0.21]{{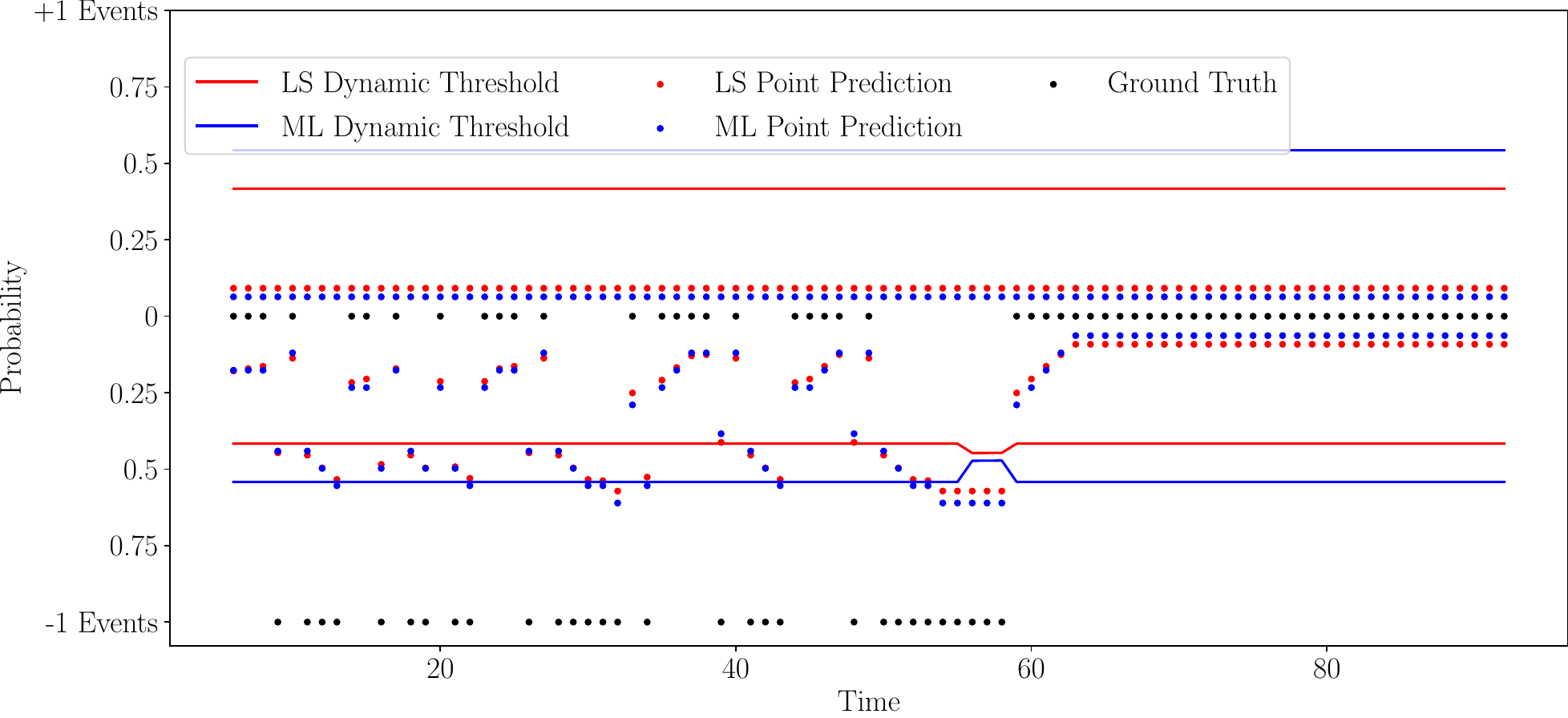}}}
    \vspace{-0.3cm}
    \caption{North California: Online point predictions of probabilities for multi-state ramping events using LS (red dots) and ML (blue dots), compared with true ramping events (black dots). The left column in each row is generated using a static threshold, and the right column uses dynamic thresholds.}
    \label{fig:5_2_point_pred_multi_rainfall}
\end{figure*}

\subsection{Visualizing interactions via LS methods on terrain map} \label{app:LS_estimates} As we mentioned earlier, these parameters by LS are visually similar to those by ML. We only plot the spatio-temporal interactions on the map.

\begin{figure*}[htbp]
  \centering
  \subfigure[$s=1$]{\includegraphics[scale=0.41]{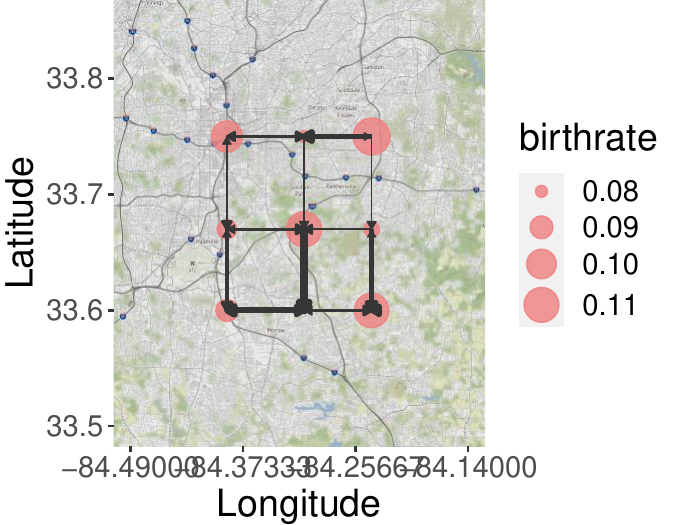}}
  \subfigure[$s=5$]{\includegraphics[scale=0.41]{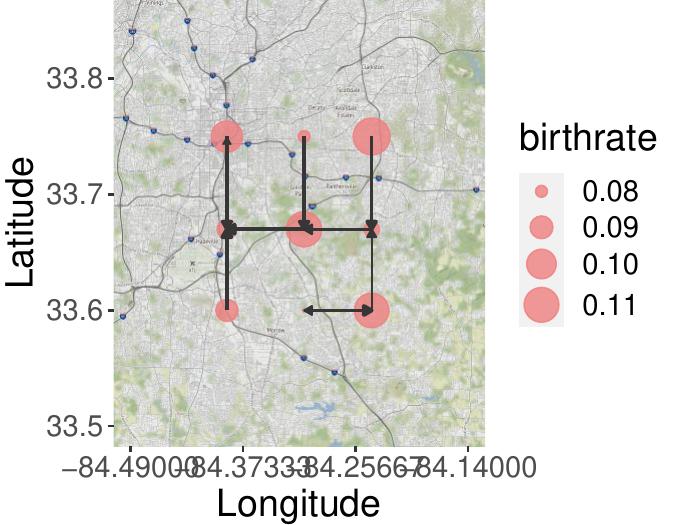}}
  \subfigure[$s=10$]{\includegraphics[scale=0.41]{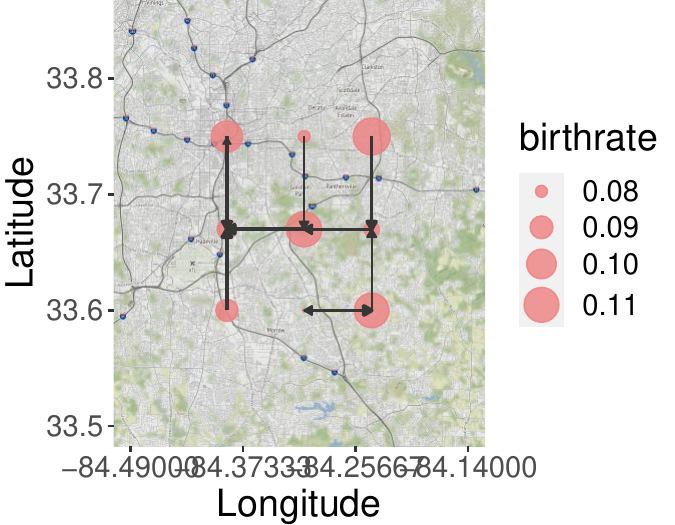}}
  \subfigure[$s=1$]{\includegraphics[scale=0.41]{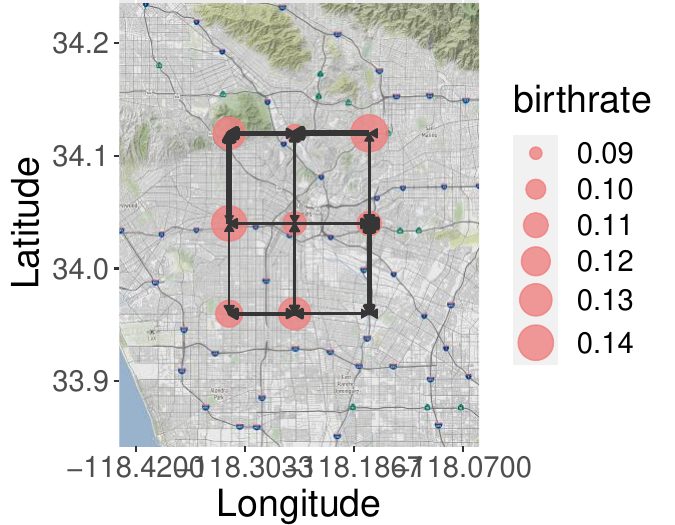}}
  \subfigure[$s=3$]{\includegraphics[scale=0.41]{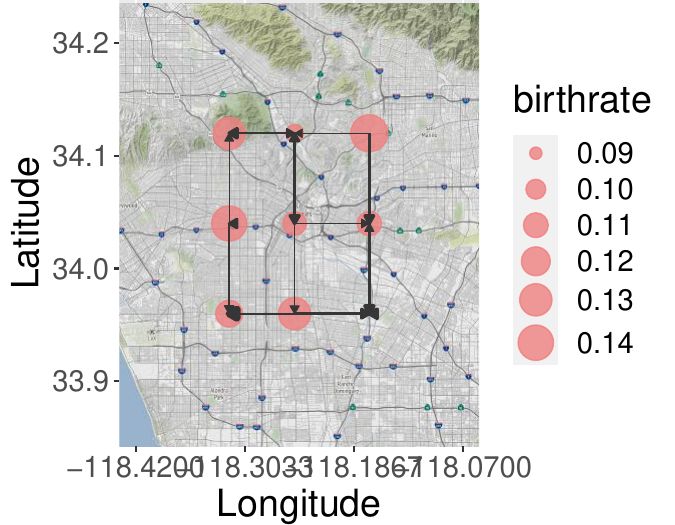}}
  \subfigure[$s=5$]{\includegraphics[scale=0.41]{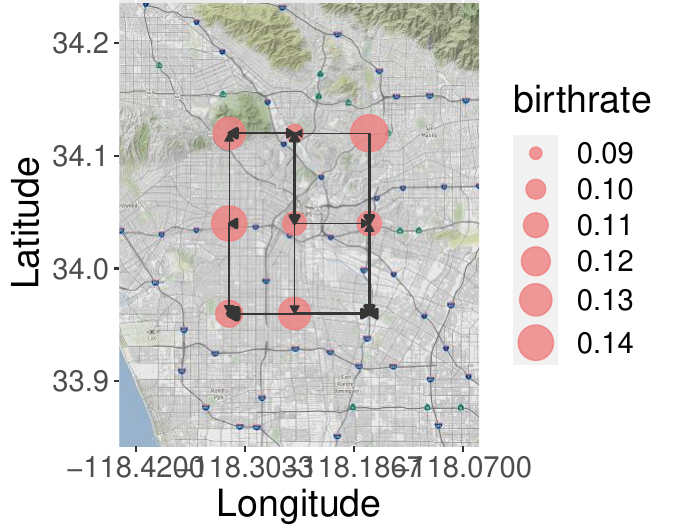}}
  \subfigure[$s=1$]{\includegraphics[scale=0.37]{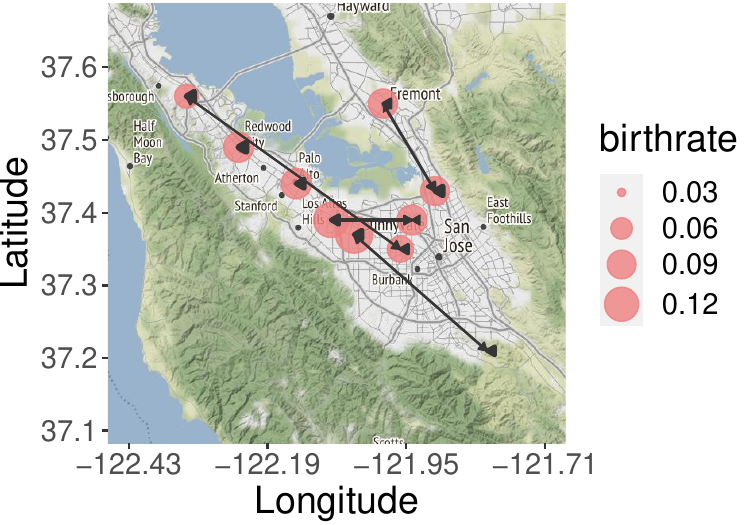}}
  \subfigure[$s=5$]{\includegraphics[scale=0.37]{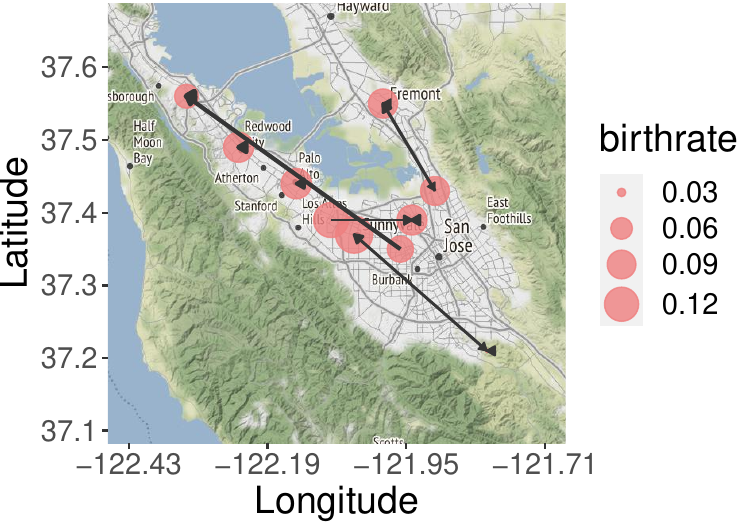}}
  \subfigure[$s=10$]{\includegraphics[scale=0.37]{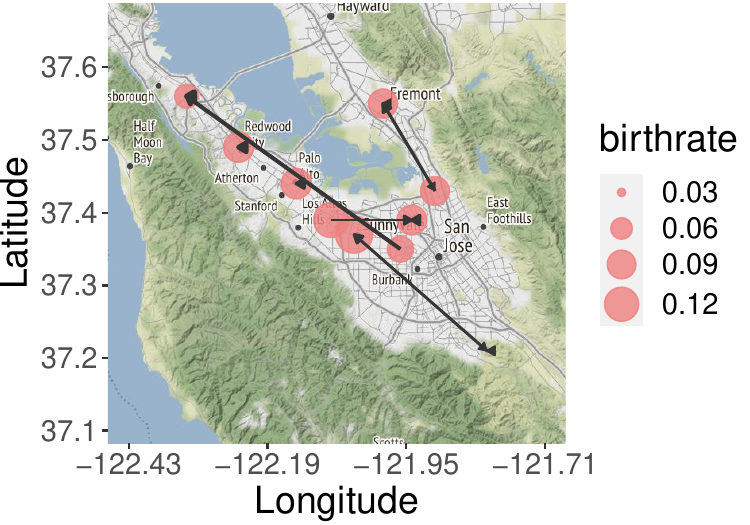}}
  \caption{Visualize single-state spatio-temporal interactions using the LS method. To make sure the edges are visible when $s>1$, we magnify the edge weight of Atlanta (a-c) estimates five times, of Los Angeles (d-f) estimates three times, and of North California (g-i) estimates five times.}
  \label{fig:5_2_grid_single_LS}
\end{figure*}

\begin{figure*}[htbp]
  \centering
  \subfigure[$s=1$]{\includegraphics[scale=0.41]{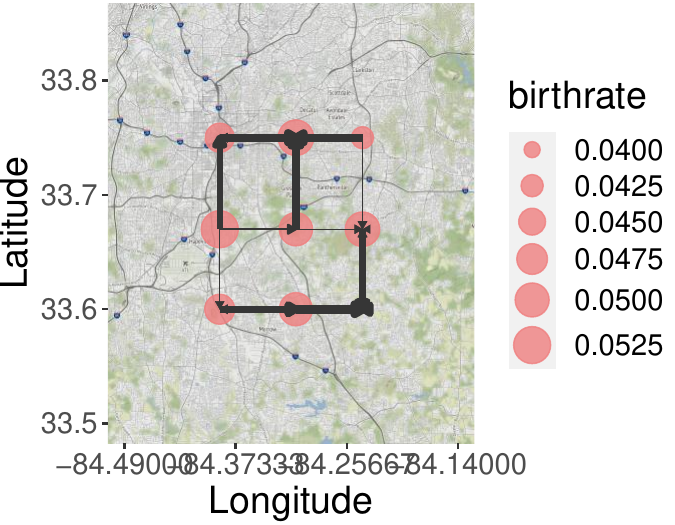}}
  \subfigure[$s=5$]{\includegraphics[scale=0.41]{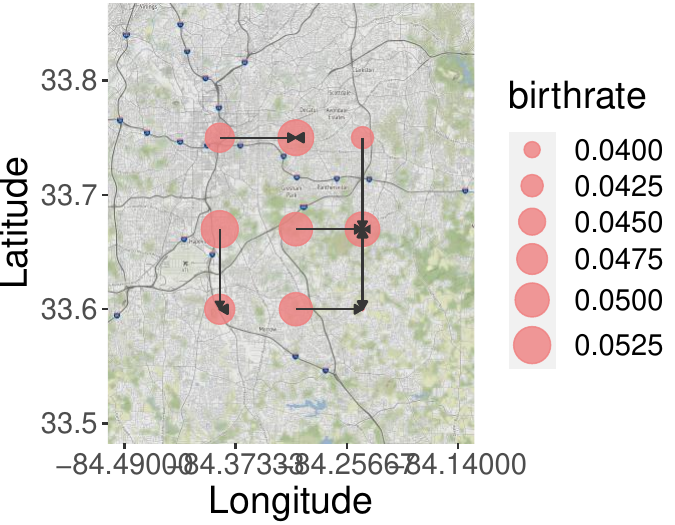}}
  \subfigure[$s=10$]{\includegraphics[scale=0.41]{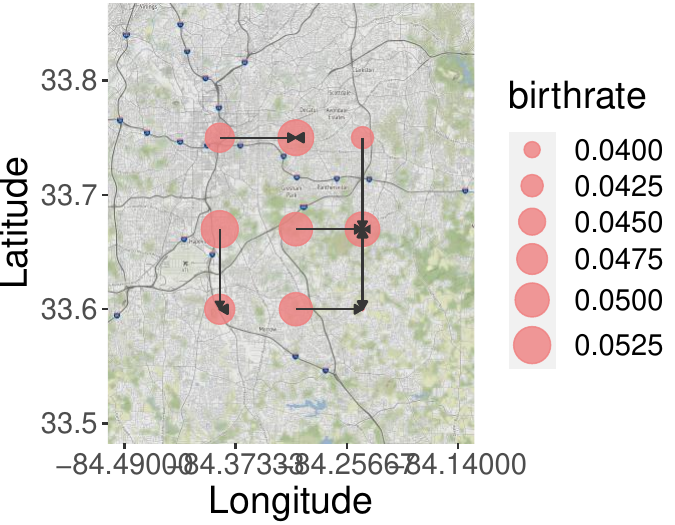}}

  \subfigure[$s=1$]{\includegraphics[scale=0.41]{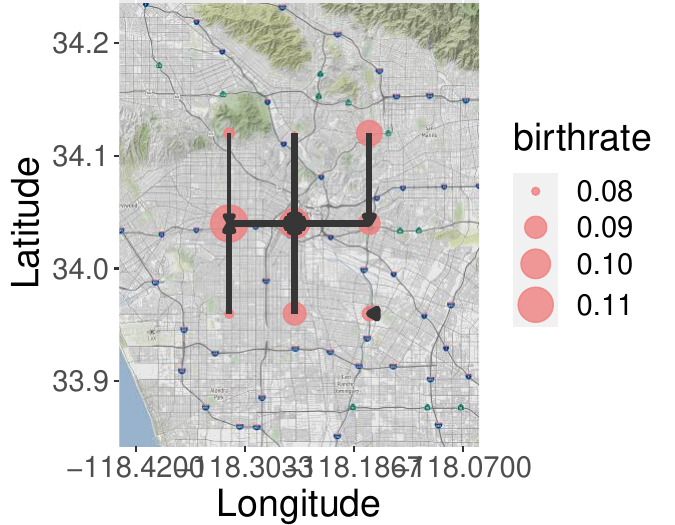}}
  \subfigure[$s=3$]{\includegraphics[scale=0.41]{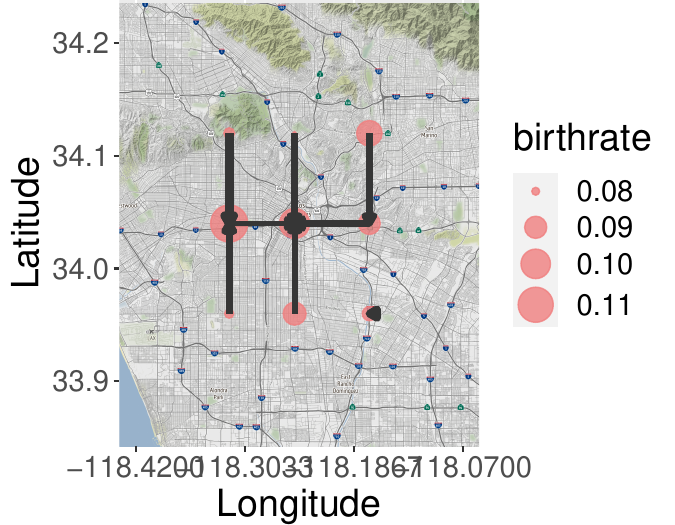}}
  \subfigure[$s=5$]{\includegraphics[scale=0.41]{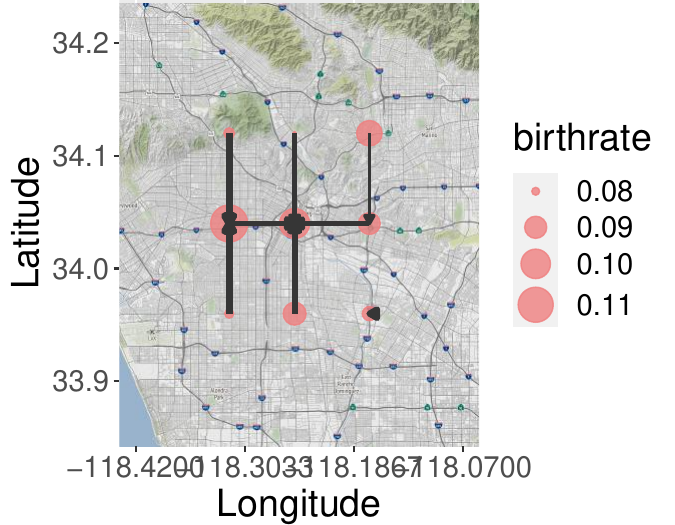}}

  \subfigure[$s=1$]{\includegraphics[scale=0.37]{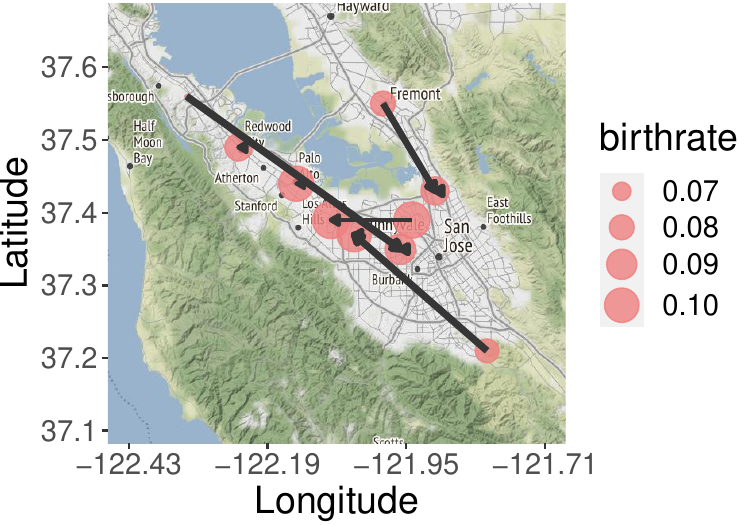}}
  \subfigure[$s=5$]{\includegraphics[scale=0.37]{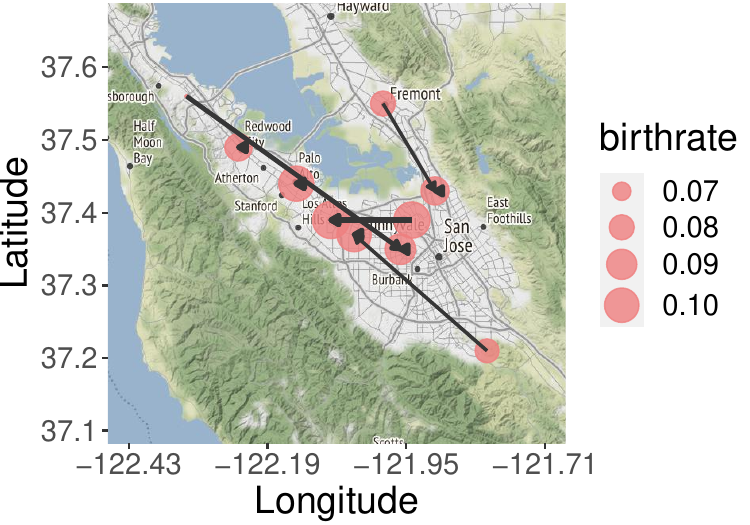}}
  \subfigure[$s=10$]{\includegraphics[scale=0.37]{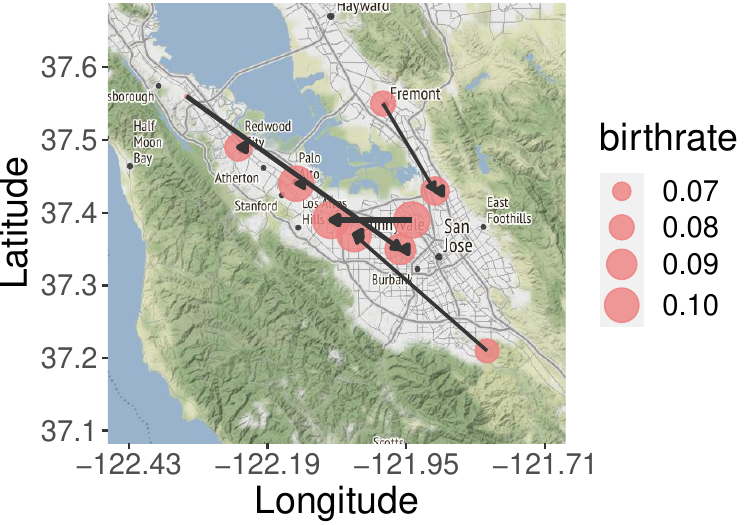}}
    \caption{Visualize multi-state, positive-to-positive spatio-temporal interactions. To make sure the edges are visible when $s>1$, we magnify the edge weight of Atlanta (a-c) and Los Angeles (d-f) estimates three times and North California (g-i) estimates five times.}
\end{figure*}

\begin{figure*}[htbp]
  \centering
  \label{fig:5_2_grid_multi_neg_LS}
  
  \subfigure[$s=1$]{\includegraphics[scale=0.41]{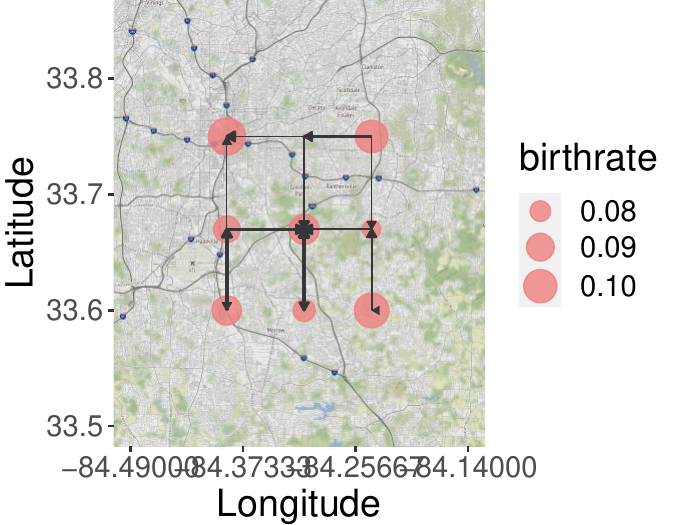}}
  \subfigure[$s=5$]{\includegraphics[scale=0.41]{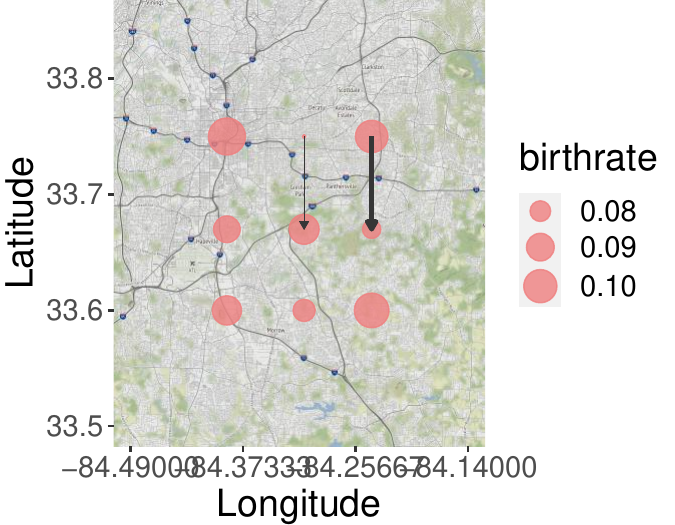}}
  \subfigure[$s=10$]{\includegraphics[scale=0.41]{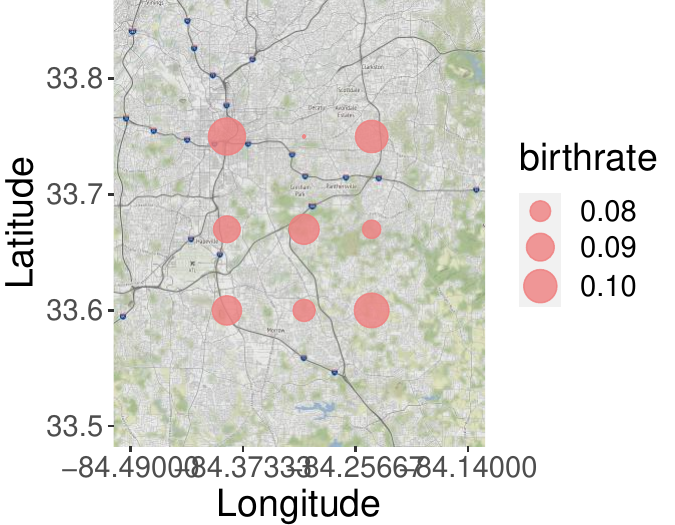}}

  \subfigure[$s=1$]{\includegraphics[scale=0.41]{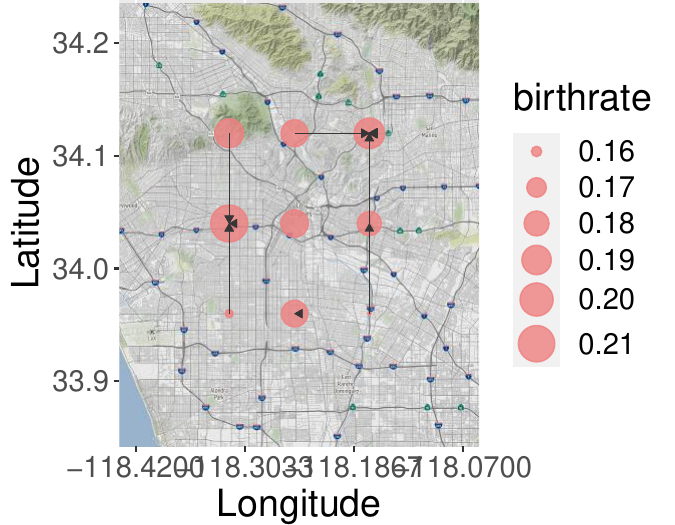}}
  \subfigure[$s=3$]{\includegraphics[scale=0.41]{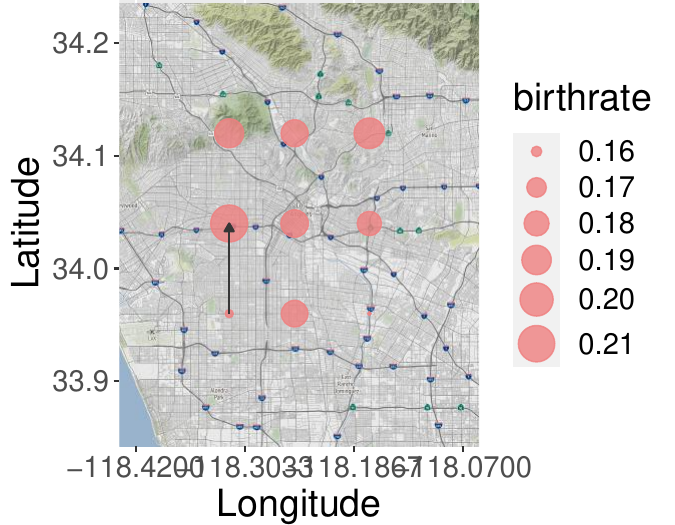}}
  \subfigure[$s=5$]{\includegraphics[scale=0.41]{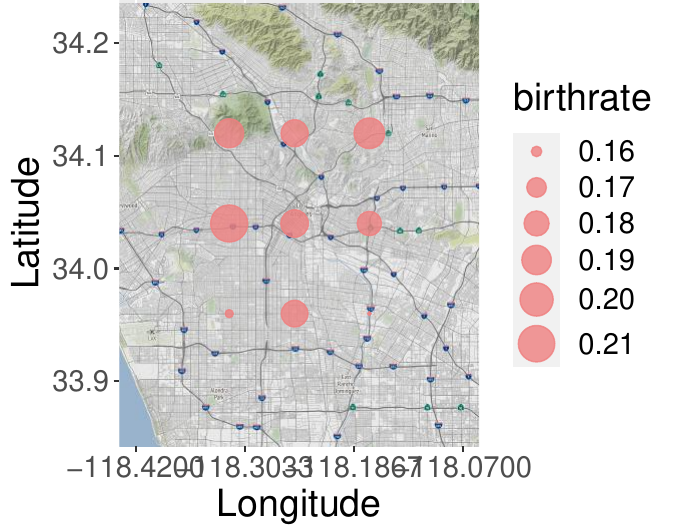}}
  
  \subfigure[$s=1$]{\includegraphics[scale=0.37]{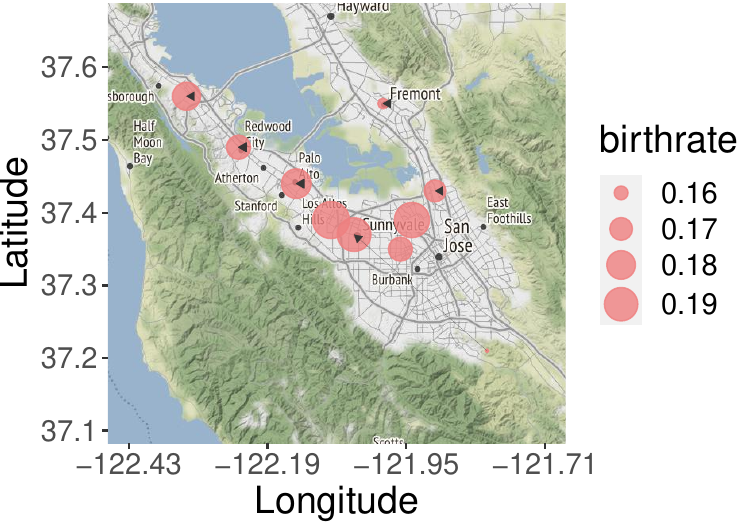}}
  \subfigure[$s=5$]{\includegraphics[scale=0.37]{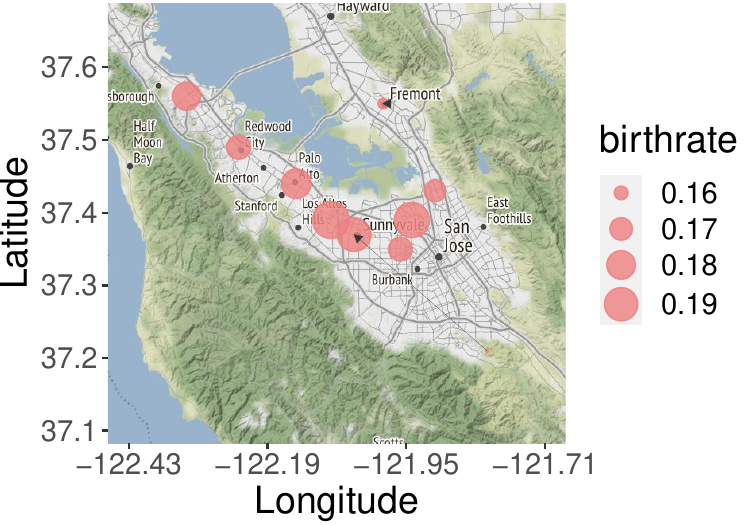}}
  \subfigure[$s=10$]{\includegraphics[scale=0.37]{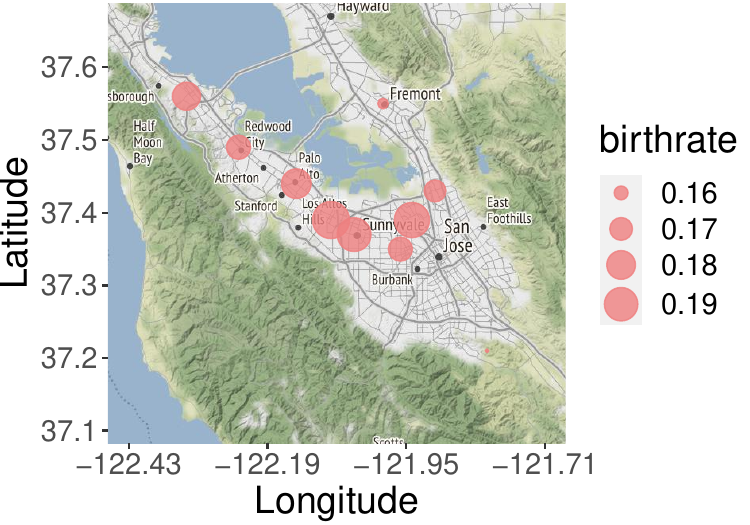}}
    \caption{Visualize multi-state, negative-to-negative spatio-temporal interactions. To make sure the edges are visible when $s>1$, we magnify the edge weight of Atlanta (a-c) estimates 25 times, North California (g-i) estimates five times, and Los Angeles (d-f) estimates six times.}
\end{figure*}


\end{appendix}


\end{document}